\theoremstyle{plain}
\newtheorem{theorem}{Theorem}[section]
\newtheorem{lemma}[theorem]{Lemma}
\newtheorem{proposition}[theorem]{Proposition}
\theoremstyle{remark}
\newtheorem{remark}[theorem]{Remark}
\numberwithin{equation}{section}
\DeclareMathOperator{\spec}{spec}
\DeclareMathOperator{\Tr}{Tr}
\def\geqslant{\ge}
\def\leqslant{\le}
\def\bq{\begin{eqnarray}}
	\def\eq{\end{eqnarray}}
\def\bqq{\begin{eqnarray*}}
	\def\eqq{\end{eqnarray*}}
\def\nn{\nonumber}
\newcommand{\norm}[1]{\left\lVert #1 \right\rVert}
\newcommand\1{{\ensuremath {\mathds 1} }}
\newcommand{\im}{\mathrm{i}}
\renewcommand{\epsilon}{\varepsilon}
\newcommand{\bx}{\mathbf{x}}
\newcommand{\by}{\mathbf{y}}
\def\R {\mathbb{R}}
\def\R {\mathbb{R}}
\def\d{{\rm d}}
\def\b{|}
\renewcommand{\leq}{\leqslant}
\renewcommand{\geq}{\geqslant}
\newcommand{\bA}{\mathbf{A}}
\newcommand{\bz}{\mathbf{z}}
\newcommand{\curl}{\mathrm{curl}}
\title{A Lieb-Thirring inequality for extended anyons}
\author[T.Girardot]{Th\'{e}otime Girardot}
\address{Aarhus university, Nordre Ringgade 1, 8000 Aarhus C}
\email{theotime.girardot@math.au.dk}
\author[N.Rougerie]{Nicolas Rougerie}
\address{Ecole Normale Sup\'erieure de Lyon \& CNRS, UMPA (UMR 5669)}
\email{nicolas.rougerie@ens-lyon.fr}
\date{September 2022}
\begin{document}

\begin{abstract}
We derive a Pauli exclusion principle for extended fermion-based anyons of any positive radius and any non-trivial statistics parameter. That is, we consider 2D fermionic particles coupled to magnetic flux tubes of non-zero radius, and prove a Lieb-Thirring inequality for the associated many-body kinetic energy operator. The corresponding constant is independent of the radius of the flux tubes, and proportional to the statistics parameter. 
\end{abstract}

\maketitle

\setcounter{tocdepth}{2}

 
\makeatletter
\def\@tocline#1#2#3#4#5#6#7{\relax
  \ifnum #1>\c@tocdepth 
  \else
    \par \addpenalty\@secpenalty\addvspace{#2}%
    \begingroup \hyphenpenalty\@M
    \@ifempty{#4}{%
      \@tempdima\csname r@tocindent\number#1\endcsname\relax
    }{%
      \@tempdima#4\relax
    }%
    \parindent\z@ \leftskip#3\relax \advance\leftskip\@tempdima\relax
    \rightskip\@pnumwidth plus4em \parfillskip-\@pnumwidth
    #5\leavevmode\hskip-\@tempdima
      \ifcase #1
       \or\or \hskip 1em \or \hskip 2em \else \hskip 3em \fi%
      #6\nobreak\relax
      \dotfill
      \hbox to\@pnumwidth{\@tocpagenum{#7}}
    \par
    \nobreak
    \endgroup
  \fi}
\makeatother
\tableofcontents

\section{Introduction}\label{sec:intro}

The exclusion principle of quantum physics can be formulated in terms of Lieb-Thirring inequalities for the kinetic energy of fermionic particles. These inequalities stay true in various contexts, for instance, when the particles feel an external magnetic field. In this paper we establish a Lieb-Thirring inequality for extended anyons, modeled as fermions coupled to magnetic flux tubes of finite radius. The magnitude $\alpha\in [0,2]$ of the magnetic flux is interpreted as $1 + $ the statistics parameter (because our basic wave-functions are fermionic). Our motivation is two-fold: 
\begin{enumerate}
 \item We bridge a gap in the current state of the research program initiated in~\cite{LunSol-13a}. Indeed, in~\cite{LunSei-18} a Lieb-Thirring inequality is established for any $\alpha \neq 1$, at zero radius. Thus, ideal anyons of any statistics except the bosonic one satisfy a Pauli exclusion principle. On the other hand, in~\cite{LarLun-16}, bounds suggestive of a Pauli principle are proven for finite radii $R$, but under restrictive assumptions on $\alpha$ and $R$.
 \item We improve the main results~\cite[Theorems 1.1 and 1.4]{GirRou-21} of a previous paper of ours. Indeed, directly conditioned on the inequality we prove below, a semi-classical effective model for almost-fermionic extended anyons could be derived under relaxed assumptions in a mean-field type limit. This was mentioned in~\cite[Remark~1.2]{GirRou-21} and proved\footnote{The final improvement is actually from $\eta < 1/4$ to $\eta < 1/3$ in the notation of these references, slightly less than we had hoped for in~\cite[Remark~1.2]{GirRou-21}.} in the first author's phd thesis~\cite[Chapter~15]{Girardot-21}.
\end{enumerate}

Before stating our inequality precisely we quickly recall basic facts about the two main concepts of the paper: extended anyons and Lieb-Thirring inequalities. In particular we give precisions on the vocabulary used above. We will not consider the case of non-abelian anyons where the exchange phase is replaced by a general unitary operator. See~\cite{LunQva-20} and references therein.

\subsection{Anyons}
In a quantum mechanical system, dimensionality plays a fundamental role. In three or higher dimensions, the indistinguishability of the particles naturally leads to sorting them out into two types, bosons and fermions. These two types have different statistical behavior, leading to the commonly used terminology of bosonic (or fermionic) \emph{statistics}. This dichotomy no longer holds in two dimensions. The richer topology of the configuration space indeed allows for more than two statistics. Consider a wave function $\Psi :\left (\R^{2}\right )^{N}\to \mathbb{C}$,
it will formally behave as
\begin{equation}\label{eq:stat}
	\Psi (\bx_{1},...,\bx_{j},...,\bx_{k},...,\bx_{N})=e^{\im\alpha_b \pi}\Psi(\bx_{1},...,\bx_{k},...,\bx_{j},...,\bx_{N})
\end{equation}
where $\alpha_b\in \left [-1,1\right ]$ is the statistics parameter of the anyons, counted from the bosonic end. The case $\alpha_b =0$ corresponds to bosons and $\alpha_b =1$ to fermions. The possibility of different statistics have been known since the 70's from different approaches~\cite{LeiMyr-77,Myrheim-99,Wilczek-82a,GolMedSha-81} and used to describe quasi-particles emerging in the fractional quantum Hall effect ~\cite{Halperin-84,ChenWilWitHal-89,GreiWenWil-91,AroSchWil-84,ChenWilWitHal-89,GolSha-96,LunRou-16}, rotating Bose gases~\cite{ZhaSreJai-15,CooSim-15} and in quantum information~ \cite{NayCheSimSteSter-08}. There are two main ways to model anyons. We can treat $\Psi$ either as a multi-valued function (a section of a complex line bundle) or, as a usual bosonic or fermionic function with a modified kinetic energy. We follow the latter approach, called the magnetic gauge picture. 

We thus consider a fermionic wave function whose kinetic energy is modified through a singular change of gauge. Namely, we encode the behaviour of the wave function \eqref{eq:stat} under a particle exchange by setting
$$\Psi(\bx_{1},...,\bx_{N})=\prod_{j<k}e^{\im\alpha_f \phi_{jk}}\Phi(\bx_{1},...,\bx_{N}) \;\:\text{where}\;\:
\phi_{jk}=\mathrm{arg}\frac{\bx_{j}-\bx_{k}}{|\bx_{j}-\bx_{k}|}$$
with $\Phi$ a fermionic wave function, antisymmetric under particle exchange and 
\begin{equation}\label{eq:bosfer}
 \alpha_f = 1 + \alpha_b 
\end{equation}
the statistics parameter, now counted from the fermionic end.  We have denoted $\mathrm{arg}(\,.\,)$ the angle of a planar vector with the horizontal axis. The case $\alpha_f=0$ describes a usual fermionic system.
Applying this transformation, the momentum operator for particle $j$ changes as
\begin{equation}\label{eq:kin op}
-\im\nabla_{\bx_{j}} \to D_{j}:=-\im\nabla_{\bx_{j}} +\alpha_f \bA\left (\bx_{j}\right ) 
\end{equation}
with
\begin{equation}
	\bA\left (\bx_{j}\right ):=\sum_{k\neq j}\frac{\left (\bx_{j}-\bx_{k}\right )^{\perp}}{\left \b \bx_{j}-\bx_{k}\right \b^{2}}
	\label{AJP}
\end{equation}
where $\left (x,y\right )^{\perp}=\left (-y,x\right )$. Namely, we have the formal identity 
$$ \left\langle \Psi | \left(-\im\nabla_{\bx_{j}} \right) ^2 \Psi \right\rangle = \left\langle \Phi | D_{\bx_j} ^2 \Phi \right\rangle.$$
The above is a description of ideal anyons: the system behaves like ordinary particles  attached to infinitely thin solenoids perpendicular to the plane. In other words we added a magnetic Aharonov-Bohm type interaction between the particles, which can formally be gauged away by changing the symmetry type of wave-functions. This corresponds to the particular case of anyons of radius $R=0$. In this work we are interested in deriving Lieb-Thirring inequalities for anyons of radius $R>0$, meaning that the Aharonov-Bohm flux tube will be smeared over a finite radius. Before we describe this in details we recall what is known for the above model at $R=0$. We refer to~\cite{AdaTet-98,LunSol-13a,LunSol-13b,CorOdd-18,CorFer-21} for more details on the definition of the model, in particular different possible self-adjoint extensions. In the sequel we always use the Friedrichs extension.

\subsection{Lieb-Thirring inequalities for ideal anyons}
The celebrated Lieb-Thirring inequalities are one of the different ways to quantify the Pauli exclusion between fermionic particles. We know, for instance \cite[Theorem 4.3]{LieSei-09} that, in two dimensions, fermions exclude one another in the sense that for any anti-symmetric wave-function $\Psi$ normalized in $L^{2}\left (\R^{2N}\right )$ 
\begin{equation}
	\sum_{j=1}^{N}\int_{\R^{2N}}\left \b \nabla_{\bx_j}\Psi\right \b^{2}\d \bx \geq C_{2}^{\mathcal{K}}\int_{\R^{2}}\rho\left (\bx\right )^{2}\d \bx
	\label{LTF}
\end{equation}
where
\begin{equation}
	\rho(\bx):=\sum_{j=1}^{N}\int_{\R^{d(N-1)}}\left \b \Psi(\bx_{1},...,\bx_{j}=\bx,...,\bx_{N})\right \b^{2}\prod_{k\neq j}\d \bx_{k}\nn
\end{equation}
is the one-particle density. This inequality remains true with $-\im\nabla\to -\im\nabla+\bA$ for some suitable magnetic vector potential $\bA$. It implies that the kinetic energy of a large number $N$ of fermionic particles inside a bounded domain $\Omega \subset \R^2$ must grow proportionally to $N^2 \gg N$: fermions do not like to be together in the same quantum state and make it known through an energetic cost growing with the number of particles.

On the other hand, the best we can achieve with bosons is
\begin{equation}
	\sum_{j=1}^{N}\int_{\R^{dN}}\left \b\nabla_{\bx_j}\Psi\right \b^{2}\d \bx\geqslant \frac{C_{2}}{N}\int_{\R^{d}}\rho(\bx)^{2}\d \bx
	\label{LTB}
\end{equation}
which is actually the Sobolev inequality, and becomes trivial as $N\to \infty$. Hence the kinetic energy of a large number $N$ of bosonic particles inside a bounded domain $\Omega \subset \R^2$ may well stay of order $N$.

As regards anyons, we intuitively think that the closer they are to fermions, the more they exclude one another. This directly leads to the idea of a Lieb-Thirring inequality for anyons, proportional to $\alpha$. Results in this direction are available in \cite{LunSol-13a,LunSol-13b,LunSol-14}. Here we quote the more recent~\cite{LunSei-18} where a bound is obtained for any $\alpha_{b}\in\left [-1,1\right ]$:

\begin{theorem}[\textbf{Lieb-Thirring inequality for ideal anyons}]\label{Ltida}\mbox{}\\
	Let $D_{\bx_j}$ be as in~\eqref{eq:kin op}. There exists a constant $C>0$ such that for any $\alpha_f \in [0,2]$, $N\geqslant 1$ and $\Psi_{N}\in L^{2}_{\mathrm{asym}}\left (\R^{2N}\right )$ with $\norm{\Psi_{N}}_{L^2}=1$ we have
	\begin{equation}
		\sum_{j=1}^{N}\int_{\R^{dN}}\left \b D_{\bx_j} \Psi_{N}\right \b^{2}\d x\geqslant C\left|1-\alpha_f\right|\int_{\R^{d}}\rho_{\Psi_{N}}(\bx)^{2}\d \bx .\nn
	\end{equation}
\end{theorem}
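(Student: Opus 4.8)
The plan is to follow the \emph{local method} of Lundholm--Solovej~\cite{LunSol-14}, with the sharp anyonic \emph{local exclusion} estimate of~\cite{LunSei-18} as the crucial input. After a routine density reduction letting us take $\Psi_N$ smooth and supported away from the collision diagonals (so that the Friedrichs form $\sum_j\|D_{\bx_j}\Psi_N\|^2$ is the naive one), the strategy is: localise the kinetic energy to squares $Q\subset\R^2$, establish two complementary lower bounds there, and patch them over a multiscale family of squares. Write $n_Q:=\int_Q\rho_{\Psi_N}$ and $E_Q:=\sum_j\int_{\R^{2N}}\1_Q(\bx_j)|D_{\bx_j}\Psi_N|^2\d\bx$; only $d=2$ matters.

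\emph{Step 1: local uncertainty.} For universal $c_1,c_2>0$ and every square $Q$,
\begin{equation}
E_Q\;\geq\;c_1\,\frac{\int_Q\rho_{\Psi_N}^2}{n_Q}\;-\;\frac{c_2}{|Q|}\,n_Q.\nn
\end{equation}
This uses neither antisymmetry nor the shape of $\bA$: by the diamagnetic inequality $|D_{\bx_j}\Psi_N|\geq\big|\nabla_{\bx_j}|\Psi_N|\big|$ one discards the magnetic field, and a two-dimensional Sobolev--Poincar\'e inequality on $Q$, summed over $j$, does the rest.

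\emph{Step 2: local exclusion.} This is where the statistics, and the factor $|1-\alpha_f|$, enter, and it is the main obstacle. One must show that for universal $c_3>0$ and every square $Q$,
\begin{equation}
E_Q\;\geq\;\frac{c_3\,|1-\alpha_f|}{|Q|}\,\big(n_Q-1\big).\nn
\end{equation}
It rests on a two-body magnetic Hardy inequality: in relative coordinates the pair operator is $\big(-\im\nabla_r+\alpha_f\,r^{\perp}/|r|^2\big)^2$, of Aharonov--Bohm flux $2\pi\alpha_f$; as a pair function is \emph{antisymmetric} in $r$, only odd angular momenta occur, and the optimal one ($m=-1$) carries the centrifugal weight $(1-\alpha_f)^2/|r|^2$, whence
\begin{equation}
\int_{\R^2}\Big|\big(-\im\nabla_r+\alpha_f\,\tfrac{r^{\perp}}{|r|^2}\big)\phi\Big|^2\d r\;\geq\;(1-\alpha_f)^2\int_{\R^2}\frac{|\phi(r)|^2}{|r|^2}\d r,\qquad\phi\ \text{odd}.\nn
\end{equation}
This is lifted to a \emph{many-body} magnetic Hardy inequality bounding $\sum_j|D_{\bx_j}\Psi_N|^2$ below by a sum over particles of inverse squared nearest-neighbour distances: assigning to each particle its nearest neighbour gives a graph of bounded degree (a planar geometric fact), so pair estimates add with only a universal loss; integrating over $Q$ and using that, when $n_Q\geq2$, at least $n_Q-1$ particles of $Q$ have a neighbour within distance $\lesssim|Q|^{1/2}$ yields the claim. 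The difficulty is twofold. First, naively adding pair estimates gives the \emph{square} $(1-\alpha_f)^2$, whereas the \emph{linear} $|1-\alpha_f|$ requires exploiting many-body collective effects. Second, the crude ``total flux of an $n$-particle cluster'' heuristic that would capture these effects breaks down at rational statistics (for $\alpha_f-1\in\mathbb{Q}$ a large cluster can carry near-integer total flux), so a genuinely more robust, recursive argument --- that of~\cite{LunSei-18} --- is needed to keep the constant linear in $|1-\alpha_f|$ and bounded below uniformly on $[0,2]\setminus\{1\}$.

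\emph{Step 3: patching.} One now combines Steps~1 and~2 by a standard covering argument: partition (the support of) $\rho_{\Psi_N}$ into dyadic squares of mass $\leq\Lambda$ by a stopping-time procedure, balance on each square a convex combination of the two local bounds --- with weight tuned, depending on $|1-\alpha_f|$, so that the exclusion term absorbs the negative part of the uncertainty term (using also the exclusion energy of the coarser squares) --- then sum, use $\sum_Q\int_Q\rho_{\Psi_N}^2=\int_{\R^2}\rho_{\Psi_N}^2$, and optimise $\Lambda$. This produces
\begin{equation}
\sum_{j=1}^N\int_{\R^{2N}}|D_{\bx_j}\Psi_N|^2\d\bx\;\geq\;C\,|1-\alpha_f|\int_{\R^2}\rho_{\Psi_N}^2,\nn
\end{equation}
which is the assertion. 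The bookkeeping in this last step is delicate but identical to the standard fermionic Lieb--Thirring proof; the only genuinely new content is Step~2.
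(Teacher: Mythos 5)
You should first be aware that the paper does not prove Theorem~\ref{Ltida} at all: it is quoted from Lundholm--Seiringer~\cite{LunSei-18}, the only adjustment being the relabelling $\alpha_b=\alpha_f-1$ and the passage from symmetric to antisymmetric wave functions via the singular gauge transformation. Your outline is an accurate road-map of how that reference (building on~\cite{LunSol-13a,LunSol-14}) proceeds: Step~1 (diamagnetic inequality plus Poincar\'e--Sobolev on a square) and Step~3 (multiscale splitting/covering and balancing of the two local bounds) are standard and unobjectionable in the form you state them.

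The genuine gap is Step~2. The bound $E_Q\geq c_3\,|1-\alpha_f|\,(n_Q-1)/|Q|$ with a constant uniform in the local particle number \emph{is} the whole content of~\cite{LunSei-18}, and the mechanism you sketch cannot deliver it: the pairwise Aharonov--Bohm Hardy inequality in relative coordinates, summed along a bounded-degree nearest-neighbour graph, produces a constant proportional to $(1-\alpha_f)^2$ --- in fact, once the fluxes of the other particles inside $Q$ are taken into account, to the many-body fractionality constant squared, which degenerates along rational $\alpha_f$ as $n_Q$ grows; this is exactly the defect of~\cite{LunSol-13a,LunSol-14} that~\cite{LunSei-18} was written to remove. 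You concede both points and then write that ``the recursive argument of~\cite{LunSei-18} is needed'', i.e.\ at the decisive step you invoke the result rather than prove it, so as a self-contained argument the proposal is incomplete precisely where the new input would have to be. As a description of the state of the art it is correct, and in that sense it coincides with what the paper itself does --- cite~\cite{LunSei-18} and note the trivial change of convention --- but it should not be presented as an independent proof of the linear $|1-\alpha_f|$ dependence.
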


Beware that we use the fermionic convention that wave-functions are anti-symmetric. In~\cite{LunSei-18} the result is stated with $\Psi_{N}\in L^{2}_{\mathrm{sym}}\left (\R^{2N}\right )$ and $1-\alpha_f$ replaced by $\alpha_b$ as per~\eqref{eq:bosfer}. This inequality shows that anyonic particles of any type but the bosonic one ($\alpha_b = 0, \alpha_f = 1$) satisfy a Lieb-Thirring inequality and thus an exclusion principle.

The usual approach to proving Lieb-Thirring inequalities for fermions is to see them as dual to bounds on eigenvalue sums for Schr\"odinger operators, and use the Birman-Schwinger principle, see~\cite[Chapter~4]{LieSei-09} for review. This clearly does not apply in the anyonic case, because the problem is genuinely many-body. A new approach based on the local exclusion came up in the past ten years~\cite{FraSei-12,LunSol-14,LunSol-13a,LunPorSol-15,LunSei-18,LunNamPor-16,LarLunNam-21,Nam-22,Nam-20}. It consists in proving possibly $N$-dependent inequalities on finite subsets of $\R^{d}$. A clever covering of the space then allows to patch the inequalities together and obtain the correct $N$-dependence on the whole space. We will employ this technique here, applying it to the Hamiltonian for extended anyons that we describe next (see also~\cite{LarLun-16,LunRou-15,Girardot-19,GirRou-21}).

\subsection{Model for extended anyons}

Consider the 2D Coulomb potential generated by a unit charge smeared over the disk of radius $R$
\begin{equation}
	w_{R}(\bx)=\left(\log\b \;.\;\b *\chi_{R}\right)(\bx),\:\:\text{with the convention}\:\:w_{0}=\log\b \;.\;\b
	\label{wrr}
\end{equation}
and $\chi_{R}(x)$ a positive, regularizing function of unit mass
\begin{equation}
	\chi_{R}\left (\bx\right ):=\frac{\1_{B\left (0,R\right )}\left (\bx \right )}{\pi R^{2}}.
	\label{khiRr}
\end{equation}
Observe that
\begin{align*}
	\nabla^{\perp}w_{0}(\bx)=\frac{\bx^{\perp}}{|\bx|^{2}},\:\:\text{and}\:\:B_{0}(\bx):=\nabla^{\perp}\nabla^{\perp}w_{0}=\Delta w_{0}=2\pi\delta_{0}
\end{align*}
so that we recover the magnetic field of the ideal anyon case (in a distributional sense) at $R= 0$. A natural regularisation of the ideal anyons potential vector~\eqref{AJP} is
\begin{equation}
	\bA^{R}(\bx_{j}):=\sum_{k\neq j}\nabla^{\perp}w_{R}(\bx_{j}-\bx_{k}):=\sum_{k\neq j}\frac{\left (\bx_{j}-\bx_{k}\right )^{\perp}}{\left \b \bx_{j}-\bx_{k}\right \b_{R}^{2}}.
	\label{AJRr}
\end{equation}
where we have introduced the regularized distance
$$\left \b \bx\right \b_{R}:=\max \left \{\left \b\bx\right \b ,R\right \}.$$
The magnetic field felt by particle $j$ is then
\begin{equation}
	\curl \bA^{R}(\bx_{j})=2\pi\sum_{k\neq j}\frac{\1_{B\left (\bx_{k},R\right )}\left (\bx_{j} \right )}{\pi R^{2}},\nn
\end{equation}
i.e. it sees all the other particles as carrying a tube of flux of radius $R$. From now on we only use the fermionic representation and set 
$$ \alpha := \alpha_f = 1 + \alpha_b.$$
We always assume that 
$$ \alpha \in [0,2],$$
which, unlike in the ideal anyon model, is a true restriction, for one cannot restrict to this case by a change of gauge. We could consider $\alpha \notin [0,2]$ but in this case our bounds would depend on $\alpha$ mod 2, which we believe is optimal. This is certainly the case for the bounds of Section~\ref{sec:large} below, where we deal with a parameter regime where the smearing of flux-tubes is shown to be negligible.

The full kinetic energy operator is 
\begin{equation}
	T_{\alpha}^{R}:=\sum_{j=1}^{N}\left (D_{\bx_j}^{R}\right )^{2}:=\sum_{j=1}^{N}\left (-\im\nabla_{\bx_j}+\alpha\bA^{R}(\bx_{j})\right )^{2}
	\label{energy_kin_R}
\end{equation}
acting on the fermionic space $L^{2}_{\rm asym}\left (\R^{2N}\right )$ as an unbounded operator. We denote $\mathcal{D}_{\alpha ,R}^{N}$ the domain of \eqref{energy_kin_R}. When $R>0$, $\bA^{R}$ is a bounded perturbation of $-\im\nabla$. The kinetic energy $T^{R}_{\alpha}$ is essentially self-adjoint on its natural domain (see \cite[Theorem X.17]{ReeSim2} and~\cite{AvrHerSim-78}). The bottom of its spectrum exists for any fixed $R>0$.

Apart from providing an analytically useful regularisation of the model, the above Hamiltonian with smeared flux tubes is actually the relevant one for emergent anyons~\cite{LunRou-16,LamLunRou-22} in the fractional quantum Hall effect. The size of the flux tubes is set by the magnetic length of the host system. Early considerations of the model are in~\cite{Trugenberger-92,Trugenberger-92b}.

\subsection{Main theorem}

We define the kinetic energy of the $N$-particles system with wave function $\Psi_{N}$ as
\begin{equation}\label{KE}
	\mathcal{E}_{\alpha}^{R}\left [\Psi_{N}\right ]=\sum_{j=1}^{N}\int_{\R^{2N}}\left \b \left (-\im\nabla_{\bx_j}+\alpha\bA^{R}\left (\bx_{j}\right )\right )\Psi_{N}\right \b^{2}\d \bx_{1} \ldots \d \bx_{N}.
\end{equation}
We also denote the one-body density of the system 
\begin{equation}
	\rho_{\Psi_{N}}(\bx):=\sum_{j=1}^{N}\int_{\R^{d(N-1)}}\left \b \Psi_{N}(\bx_{1},...,\bx_{j}=\bx,...,\bx_{N})\right \b^{2}\prod_{k\neq j}\d \bx_{k}\nn
\end{equation}
and state the main theorem of the paper.

\begin{theorem}[\textbf{Lieb-Thirring inequality for extended anyons}]\mbox{}\label{LT_ex_an}\\
	There exists a constant $C^{\mathrm{EA}}$ independent of $R$, $\alpha$ and $N$ such that for any $L^2$-normalized fermionic $N$-particles wave function $\Psi_{N}\in L^{2}_{asym}\left (\R^{2N}\right )$ 
	\begin{equation}
		\mathcal{E}_{\alpha}^{R}\left [\Psi_{N}\right ]\geq C^{\mathrm{EA}}\left| \alpha -1\right |\int_{\R^{2}}	\rho^{2}_{\Psi_{N}}(\bx)\d \bx.\nn
	\end{equation}
\end{theorem}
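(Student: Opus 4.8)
The plan is to follow the by-now-standard local-exclusion strategy of Lundholm--Solovej, adapted to the extended-anyon Hamiltonian, in two conceptually distinct steps. First I would prove a \emph{local} inequality: on any disk $Q \subset \R^2$ (or more conveniently a square, to allow for a clean dyadic covering), one shows that the local kinetic energy controls the ``excess'' $\left(\int_Q \rho_{\Psi_N}\right)^2$ minus a correction that accounts for the first particle not feeling exclusion. Concretely, one wants something like
\begin{equation*}
	\int_{Q^N}\sum_{j\in Q}\left| D^R_{\bx_j}\Psi_N\right|^2 \,\d\bx \;\geq\; \frac{c\,|\alpha-1|}{|Q|}\left(\left(\int_Q\rho_{\Psi_N}\right)^2 - C\int_Q\rho_{\Psi_N}\right),
\end{equation*}
uniformly in $R$. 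The key mechanism behind such a bound is that the magnetic field generated by the particles inside $Q$, or equivalently the relative angular momentum / flux seen by any given particle, forces a nonzero lower bound on the one-particle energy once there are at least two particles present; the contribution $|\alpha-1|$ comes from the fact that the effective flux seen between two fermions at distance $> R$ is $\alpha$ times a Aharonov--Bohm flux, and the Friedrichs-extension eigenvalue of $(-\im\nabla + \alpha\bA)^2$ on a punctured disk behaves like $\mathrm{dist}(\alpha,2\Z)^2 \sim |\alpha-1|^2$ near $\alpha=1$, but after the Hardy-type / Poincar\'e argument one only loses one power, leaving $|\alpha-1|$.

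The heart of the matter, and where the extended radius $R>0$ makes things genuinely harder than the ideal case of~\cite{LunSei-18}, is obtaining a \emph{Hardy inequality} for the two-anyon relative Hamiltonian that is \emph{uniform in $R$}. For ideal anyons one uses that, on the relative coordinate, $(-\im\nabla + \alpha \bA_{\rm rel})^2 \geq c\,\mathrm{dist}(\alpha,2\Z)^2 |\bx|^{-2}$ in a suitable averaged sense. For extended anyons the vector potential $\nabla^\perp w_R$ is smooth and the naive pointwise Hardy weight degenerates near the origin, so one must instead exploit the actual flux enclosed at scale $|\bx| > R$ and only gives up control inside $B(0,R)$. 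I would therefore prove a regularized two-body Hardy inequality of the form $(D^R)^2_{\rm rel} \geq c\,|\alpha-1|\,\big(|\bx|^2 + R^2\big)^{-1}\cdot(\text{something})$ after projecting onto appropriate angular sectors, tracking carefully that the constant does not blow up as $R\to 0$ \emph{or} as the disk $Q$ shrinks relative to $R$. This is the step I expect to be the main obstacle: reconciling the two length scales $R$ and $\mathrm{diam}(Q)$, in particular the regime $R \gtrsim \mathrm{diam}(Q)$ where a whole neighborhood of particles sits inside each other's flux tube and the field is essentially constant, requires a separate Landau-level type lower bound (magnetic energy $\gtrsim$ total field strength, i.e. $\gtrsim \alpha \cdot \#\{\text{particles in }Q\}/R^2 \cdot |Q|$), which must be shown to also produce the factor $|\alpha-1|$ rather than $\alpha$ — presumably by a fermionic cancellation using that $\alpha=1$ corresponds to free fermions where no enhancement is expected.

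Once the local inequality is in hand, the remaining argument is essentially bookkeeping and I would organize it as follows. Introduce, for a square $Q$, the local energy and the local mass $\int_Q\rho_{\Psi_N}$, and run the Lundholm--Solovej covering lemma: starting from a large square containing most of the mass, repeatedly subdivide, stopping a square when its mass drops below a fixed threshold $\Lambda$; the stopped squares tile the relevant region, on each the local inequality gives a bound $\gtrsim |\alpha-1|\,|Q|^{-1}(\int_Q\rho)^2$ provided $\int_Q\rho$ is not too small, and summing over the tiling together with a Cauchy--Schwarz / H\"older step converts $\sum_Q |Q|^{-1}(\int_Q\rho)^2$ into $\int_{\R^2}\rho^2$ up to the unavoidable loss near the boundary of the covered region, which is reabsorbed. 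Because the local bound is uniform in $R$ and carries the linear factor $|\alpha-1|$, so is the global one, yielding $\mathcal{E}^R_\alpha[\Psi_N] \geq C^{\rm EA}|\alpha-1|\int_{\R^2}\rho_{\Psi_N}^2$ with $C^{\rm EA}$ independent of $R,\alpha,N$, which is exactly Theorem~\ref{LT_ex_an}. I would also need to check the standard reduction allowing one to assume $\Psi_N$ is smooth and compactly supported so that all manipulations (integration by parts, restriction to subdomains, the diamagnetic-type inequalities used when discarding cross terms between particles inside and outside $Q$) are legitimate.
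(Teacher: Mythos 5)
Your overall architecture (local exclusion $+$ covering) is the right family of arguments, but two of your key steps would fail as proposed. The most serious is the regime $R\gtrsim \mathrm{diam}(Q)$, which you propose to handle by a Landau-level type bound ``magnetic energy $\gtrsim$ total field strength $\sim \alpha\, n/R^2$''. This cannot give the needed exclusion scale $|Q|^{-1}$: when $\gamma=R/L\to\infty$ the flux through the box is of order $nL^2/R^2\to 0$, so any bound proportional to the field strength degenerates precisely where you need it. The paper's mechanism in this regime is entirely different: for $\gamma>\sqrt2$ the magnetic field felt by a particle in $Q$ is independent of the positions of the other particles in $Q$, so it can be gauged into a fixed one-body potential; one then proves (Lemma~\ref{lem:N}) that the Neumann magnetic Laplacian on $Q$ has a number of eigenvalues below $2/L^2$ bounded \emph{uniformly in the field}, via a Kato inequality with Neumann boundary conditions, a diamagnetic comparison of Green functions, and the Birman--Schwinger principle; Pauli exclusion for the fermionic $\Psi_n$ then gives $E^R_n\geq (n-\underline{N})_+/|Q|$. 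Note also that no factor $|\alpha-1|$ is (or needs to be) produced there, since $|\alpha-1|\le 1$; your worry about a ``fermionic cancellation'' at $\alpha=1$ is based on a convention slip ($\alpha=\alpha_f=1$ is the \emph{bosonic} point here, where the constant must vanish, while $\alpha=0$ is free fermions -- this is exactly where the paper needs a separate continuity-in-$\alpha$ argument in the medium-box regime, because the Larson--Lundholm bound carries $|\alpha|$ rather than $|\alpha-1|$).

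The second gap is in the assembly. Your local inequality controls $|Q|^{-1}\bigl(\int_Q\rho_{\Psi_N}\bigr)^2$, and you claim a Cauchy--Schwarz/H\"older step converts $\sum_Q |Q|^{-1}\bigl(\int_Q\rho\bigr)^2$ into $\int\rho^2$; but Cauchy--Schwarz gives $\bigl(\int_Q\rho\bigr)^2\le |Q|\int_Q\rho^2$, i.e.\ the inequality in the \emph{wrong} direction, and no choice of stopping threshold fixes this (the density inside a box can be far more concentrated than its average). One genuinely needs a local uncertainty principle producing $\int_Q\rho^2$ itself -- in the paper, Nam's Poincar\'e--Sobolev lemma applied to $|\Psi_N|$ -- combined with exclusion on boxes with a prescribed mass, and then either the Lundholm--Solovej splitting algorithm or, as the paper does, the Besicovitch covering lemma on squares chosen so that $\int_Q\rho$ equals a fixed value. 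Finally, your plan only gestures at ``discarding cross terms'' with particles outside $Q$: for $R>0$ their flux tubes overlap the box and cannot be gauged away exactly; the paper needs a dedicated corridor argument (Lemma~\ref{lem:toe2}) showing the gauge removal costs either a small fraction of the mass or a kinetic energy of order $|Q|^{-1}$, and without some such step your two-body Hardy bound inside $Q$ is not by itself sufficient.
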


\subsection{Strategy of proof}

The rest of the paper is dedicated to the proof of Theorem \ref{LT_ex_an}. We follow the route opened in \cite{LunSol-13a,LunSol-13b,LunSol-14} where, roughly speaking, the authors divide the plane into squares $Q$, of various sizes $\left \b Q\right \b=L^{2}$, on which they prove a local Lieb-Thirring inequality before recombining. To do so, one needs
\begin{enumerate}
	\item \textbf{A local exclusion principle} stating that, on a given square, the presence of two anyons\footnote{Of course, the property ``a square contains a certain number of particles'' is a probabilistic statement.} is sufficient to get a lower bound on the kinetic energy proportional to $L^{-2}$. For usual fermions this just means that the kinetic energy can have only one zero mode per box (namely, the constant function), so that only one particle per box can have zero kinetic energy. 
	\item \textbf{A local uncertainty principle} derived from Poincar\'e-Sobolev inequalities. This does not use the statistics and is equally valid for bosons. Combining with the local exclusion yields a local Lieb-Thirring inequality on squares with more than two particles. 
	\item \textbf{A smart splitting algorithm} ensuring that the total energy of the squares with more than two particles is sufficient to compensate for squares on which we do not have enough particles to obtain the inequality. 
\end{enumerate}

In the sequel we essentially keep the same framework with the main following steps:
\begin{enumerate}
	\item \textbf{A local exclusion principle for extended anyons} stating that if a box contains more than a fixed number $\underline{N}$ of particles, its energy must be positive, proportional to $L^{-2}$. The proof technique we use rather depends on the ratio $\gamma =R/L$. 
	\item \textbf{A local uncertainty principle} derived from the diamagnetic and Sobolev inequalities and quite similar to the previously mentioned one.
	\item \textbf{The Besicovitch covering theorem} allowing us to cover the plane with sets each containing sufficiently many particles to apply the local exclusion principle, while intersecting one another only a finite number of times.
\end{enumerate}
The third idea was introduced in the recent article~\cite{Nam-22}, and allows to think purely locally, without having to look for compensations\footnote{We could have used the splitting algorithms of~\cite{LunSol-13a,LunNamPor-16} in our proof instead of the Besicovitch theorem.} between  different spatial regions as in~\cite{LunSol-13a,LunSol-13b,LunSol-14}. Hence  our main task is to provide the local exclusion principle. 

If $\gamma = R /L \ll 1$ (large boxes), the fact that the anyons are extended with $R>0$ intuitively does not play a very big role, and we can adapt arguments from~\cite{LunSei-18,LarLun-16} to obtain the exclusion principle. The difference is that in~\cite{LunSei-18} the influence of particles outside of the box can be gauged away freely because the attached magnetic flux is purely local. We prove that for $\gamma \ll 1$ this influence can be gauged away at a small, controlable cost. 

If $\gamma = R /L \sim 1$ (medium boxes) we can use the well-known inequality (combine~\cite[Theorem~7.21]{LieLos-01} and~\cite[Lemma~1.4.1]{FouHel-book})
\begin{equation}\label{eq:basic mag}
 \left\langle \psi , (-\im \nabla + A) ^2 \psi \right\rangle \geq \frac{1}{2} \left\langle |\psi |, (-\Delta + \curl A) |\psi| \right\rangle 
\end{equation}
to obtain bounds using a two-body model (the $\curl$ of~\eqref{AJRr} is a pair interaction). For $\gamma \sim 1$ the Dyson lemma~\cite{LieSeiSolYng-05,Rougerie-EMS} allows to use the kinetic energy to smear the two-body interaction over the whole box and get a non-trivial lower bound. This argument is worked out in~\cite{LarLun-16}, whose results we quote and adapt to our situation. The fermionic symmetry cannot be used efficiently with this method (because $|\psi|$ and not $\psi$ itself appears in the right side of~\eqref{eq:basic mag}). Consequently the dependence on $\alpha$ of the bound would degenerate around $\alpha=0$ (fermionic end). We remedy this by treating the magnetic field perturbatively in this regime, relying on the bounds for free fermions. 

If $\gamma = R /L \gg 1$ (small boxes), the box is completely covered by the magnetic flux attached to each particle inside it. We can then show that the problem becomes effectively one-body. Intuitively, the magnetic field does not harm the fact that the kinetic energy only has one zero-mode on the box. We prove a diamagnetic bound vindicating that there are only finitely many modes with energy less than $L^{-2}$, uniformly in the magnetic field. Since our wave-functions are fermionic, this implies the exclusion principle lower bound if sufficiently many particles are in the box.

Strictly speaking~\eqref{eq:basic mag} is not available on a box with Neumann boundary conditions. Hence, throughout the paper we apply it first on the whole space to obtain
\begin{align}\label{KEbis}
	\mathcal{E}_{\alpha}^{R}\left [\Psi_{N}\right ] &\geq \int_{\R^{2N}} \frac{1}{2} \sum_{j=1}^{N}\left \b \left (-\im\nabla_{\bx_j}+\alpha\bA^{R}\left (\bx_{j}\right )\right )\Psi_{N}\right \b^{2}\d \bx_{1} \ldots \d \bx_{N}\nonumber\\
	&+ \int_{\R^{2N}} \frac{1}{4} \left(\sum_{j=1}^{N}\left \b \nabla_{\bx_j} |\Psi_N|\right \b^{2} + 2\pi\alpha\sum_{j=1}^{N}\sum_{k\neq j}\frac{\1_{B\left (\bx_{k},R\right )}\left (\bx_{j} \right )}{\pi R^{2}} |\Psi_N|^2\right)\d \bx_{1} \ldots \d \bx_{N}\nonumber \\
	&=: \int_{\R^{2N}} \left(e_1 (\Psi_N) + e_2 (\Psi_N) \right)\d \bx_{1} \ldots \d \bx_{N}
\end{align}
and derive lower bounds on $	\mathcal{E}_{\alpha}^{R}$ using different terms of the right-hand side of the above for different ranges of the parameter $\gamma $.

\begin{remark}[Fermion-based anyons and diamagnetic bounds]\label{rem:fermions}\mbox{}\\
	The fermionic nature of the wave-functions we work with plays a crucial role in the case of small and medium boxes (in the latter case, only for small $\alpha$). In medium boxes, for most values of $\alpha$ we mostly use an already existing result from~\cite{LarLun-16} which is independent of the statistics of the basic wave-functions. In the case of large boxes we improve the results of~\cite{LarLun-16} by combining them to the techniques of~\cite{LunSei-18}. We then replace a bound proportional to $\alpha_{N}$ with the complicated behavior \eqref{eq:alphaN} by a bound proportional to $\alpha$.
	
	When we do use it, the fermionic symmetry of wave-functions enters after reducing the desired bounds to a one-body problem. We then use diamagnetic estimates to obtain bounds independent of the remaining  magnetic field (which can be quite general). We are indebted to the anonymous referees of the paper for pointing out that the bound we use (Lemma~\ref{lem:N}) can be obtained by combining results from~\cite{Frank-09,HunSim-04}. We nevertheless provide our proof in Appendix~\ref{sec:app} for the convenience of the reader. \hfill$\diamond$
\end{remark}
\bigskip

In Section~\ref{sec:key} we state our local exclusion principle bound and explain how to deduce~Theorem \ref{LT_ex_an} using local uncertainty and the Besicovitch theorem, as in~\cite{Nam-22}. The heart of the paper is then Section~\ref{sec:local} where we prove the local exclusion estimate, distinguishing according to the size of the box. All in all, the logical structure of the argument is 
\begin{equation}\label{eq:structure}
\mbox{Section~\ref{sec:local}} \Rightarrow \mbox{Theorem~\ref{LEP_an}} \Rightarrow \mbox{Theorem~\ref{th:LT_FN}} \Rightarrow \mbox{Theorem~\ref{LT_ex_an}}.
\end{equation}
We dispose of the last two implications first, with essentially known methods, in order to focus on the key new estimates in Section~\ref{sec:local}.

\bigskip

\noindent\textbf{Acknowledgments.} 
We thank Douglas Lundholm for insightful discussions. Funding from the European Research Council (ERC) under the European Union's Horizon 2020 Research and Innovation Programme (Grant agreement CORFRONMAT No 758620) is gratefully acknowledged as well as the grant 0135-00166B from Independent Research Fund Denmark.

\section{Reduction to local estimates with finite $N$}\label{sec:key}

As usual with Lieb-Thirring inequalities, the main point of Theorem~\ref{LT_ex_an} is the optimal dependence on $N$. In this section we reduce the proof to local estimates with unspecified $N$ dependence via the Besicovitch covering theorem, following~\cite{Nam-22}.

\subsection{Reduction to local Lieb-Thirring at finite $N$}

Let $Q\subset \R^2$ a bounded domain (always taken to be a square in the sequel). We denote the local kinetic energy in $Q$ by
\begin{equation}\label{def:EQ}
	\mathcal{E}_{Q}^R\left [ \Psi_{N}\right ]:=\sum_{j=1}^{N}\int_{\R^{2N}}\left(e_1 (\Psi_N) + e_2 (\Psi_N) \right)\1_{Q}\left (\bx_{j}\right ) \d x_{N}
\end{equation}
with the energy densities $e_1,e_2$ as defined in~\eqref{KEbis}.
We drop the $\alpha$ dependence from the notation. All quantities implicitly depend on $\alpha$ unless we explicitly state otherwise. Our local Lieb-Thirring inequality, to be derived in Section~\ref{sec:local LT} is as follows

\begin{theorem}[\textbf{Lieb-Thirring at finite $N$}]\mbox{}\label{th:LT_FN}\\
	There exist three numbers $N_<$, $N_>$ and $C^{\mathrm{FN}}$ independent of $\alpha $ and $R$ such that if $\Psi_{N}\in L^{2}_{\mathrm{asym}}\left (\R^{2N}\right )$ is a $L^2$-normalized wave-function and $Q$ a square for which
	\begin{equation}\label{as:LT_FN}
		N_< \leq \int_{Q}\rho_{\Psi_{N}}\left (\bx\right )\d \bx\leq N_>
	\end{equation}
	 we have that
	\begin{equation}
		\mathcal{E}_{Q}^R \left [ \Psi_{N}\right ]\geq C^{\mathrm{FN}}\left |\alpha -1\right |\int_{Q}	\rho^{2}_{\Psi_{N}}(\bx)\d \bx.\nn
	\end{equation}
\end{theorem}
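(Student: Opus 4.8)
The plan is to deduce Theorem~\ref{th:LT_FN} from two local inputs to the energy $\mathcal{E}_Q^R[\Psi_N]$: the \emph{local exclusion principle} of Theorem~\ref{LEP_an} (the genuinely new estimate, proved in Section~\ref{sec:local}), which I take here as a black box, and a \emph{local uncertainty principle}, which uses no statistics and is standard. Write $\rho:=\rho_{\Psi_N}$ and $L:=|Q|^{1/2}$. Note that the asserted inequality is invariant under the scaling $\bx\mapsto\lambda\bx$ (both sides scale like $\lambda^{-2}$, and the constraint $N_<\leq\int_Q\rho\leq N_>$ is scale-free), so one could assume $L=1$; I keep $L$ explicit. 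From Theorem~\ref{LEP_an} I use the form: there are a universal number $\underline N$ and a constant $c_{\mathrm{ex}}>0$ such that $\int_Q\rho\geq\underline N$ implies $\mathcal{E}_Q^R[\Psi_N]\geq c_{\mathrm{ex}}|\alpha-1|L^{-2}$. Accordingly I set $N_<:=\underline N$ (or $\underline N+1$, matching the precise statement of Theorem~\ref{LEP_an}) and let $N_>>N_<$ be any fixed constant; its value is irrelevant for this theorem and will only be further pinned down by the Besicovitch covering of Section~\ref{sec:local LT}.

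For the local uncertainty, drop the nonnegative $e_1$-part of $\mathcal{E}_Q^R[\Psi_N]$ and the nonnegative magnetic part of $e_2$, keeping only
\begin{equation*}
 \mathcal{E}_Q^R[\Psi_N]\ \geq\ \frac14\sum_{j=1}^N\int_{\R^{2N}}\bigl|\nabla_{\bx_j}|\Psi_N|\bigr|^2\,\1_Q(\bx_j)\,\d\bx .
\end{equation*}
Let $\tau$ be the one-body kinetic-energy density, defined like $\rho$ but with $|\Psi_N|^2$ replaced by $\bigl|\nabla_{\bx_j}|\Psi_N|\bigr|^2$ in the $j$-th summand, so that $\int_Q\tau\,\d\bx$ equals the right-hand side above. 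By Cauchy--Schwarz, $|\nabla\rho|^2\leq4\rho\,\tau$ pointwise, hence $|\nabla\sqrt{\rho}|^2\leq\tau$ and $\int_Q|\nabla\sqrt{\rho}|^2\leq\int_Q\tau$ (the localized Hoffmann--Ostenhof inequality). Combining with the two-dimensional Sobolev--Poincar\'e--Gagliardo--Nirenberg inequality on the square,
\begin{equation*}
 \int_Q g^4\ \leq\ C_{\mathrm{S}}\Bigl(\int_Q g^2\Bigr)\Bigl(\int_Q|\nabla g|^2+\frac1{L^2}\int_Q g^2\Bigr),\qquad g=\sqrt{\rho},
\end{equation*}
and using $\int_Q\rho\leq N_>$, I obtain
\begin{equation*}
 \mathcal{E}_Q^R[\Psi_N]\ \geq\ \frac{1}{4C_{\mathrm{S}}}\,\frac{\int_Q\rho^2}{\int_Q\rho}\ -\ \frac{1}{4L^2}\int_Q\rho\ \geq\ \frac{1}{4C_{\mathrm{S}}N_>}\int_Q\rho^2\ -\ \frac{N_>}{4L^2}.
\end{equation*}

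To conclude I run a dichotomy on $\int_Q\rho^2$ with threshold $A:=2C_{\mathrm{S}}N_>^2$. If $\int_Q\rho^2\geq A/L^2$, then $\frac{N_>}{4L^2}\leq\frac{1}{8C_{\mathrm{S}}N_>}\int_Q\rho^2$, so the uncertainty bound gives $\mathcal{E}_Q^R[\Psi_N]\geq\frac{1}{8C_{\mathrm{S}}N_>}\int_Q\rho^2\geq\frac{|\alpha-1|}{8C_{\mathrm{S}}N_>}\int_Q\rho^2$ since $|\alpha-1|\leq1$ for $\alpha\in[0,2]$. If instead $\int_Q\rho^2<A/L^2$, then because $\int_Q\rho\geq N_<\geq\underline N$ the exclusion bound gives $\mathcal{E}_Q^R[\Psi_N]\geq c_{\mathrm{ex}}|\alpha-1|L^{-2}\geq\frac{c_{\mathrm{ex}}|\alpha-1|}{A}\int_Q\rho^2=\frac{c_{\mathrm{ex}}|\alpha-1|}{2C_{\mathrm{S}}N_>^2}\int_Q\rho^2$. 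Either way Theorem~\ref{th:LT_FN} holds with $C^{\mathrm{FN}}:=\min\bigl\{(8C_{\mathrm{S}}N_>)^{-1},\ c_{\mathrm{ex}}(2C_{\mathrm{S}}N_>^2)^{-1}\bigr\}$.

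There is no serious obstacle in this argument: given Theorem~\ref{LEP_an}, everything is routine, the only points requiring (standard) care being the localization of Hoffmann--Ostenhof and of the Sobolev--Poincar\'e inequality to the square $Q$, and the bookkeeping of constants in the dichotomy, which is forced once $\underline N$, $N_>$ and $C_{\mathrm{S}}$ are fixed. The real work, and hence the main obstacle of the paper, lies entirely in the proof of the local exclusion principle (Theorem~\ref{LEP_an}) in Section~\ref{sec:local}, where the dependence on the ratio $R/L$ forces three different arguments; that part is deliberately postponed.
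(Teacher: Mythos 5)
Your proposal is correct and follows essentially the same route as the paper: Theorem~\ref{th:LT_FN} is obtained by combining the local exclusion principle (Theorem~\ref{LEP_an}) with a local uncertainty principle, exactly as in Section~\ref{sec:local LT}. The only cosmetic differences are that you re-derive the local uncertainty via Hoffmann--Ostenhof plus a Sobolev--Poincar\'e inequality rather than citing~\cite[Lemma~3.4]{Nam-20}, and you merge the two bounds through a dichotomy on $\int_Q\rho^2$ instead of the paper's convex $\epsilon$-splitting with an optimized $\epsilon$; both yield a constant of the same form.
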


We now explain how this implies Theorem~\ref{LT_ex_an}. We recall the Besicovitch covering lemma~\cite{Besicovitch-45,Besicovitch-46} which was used to prove Lieb-Thirring inequalities in~\cite{Nam-22} (using balls instead of squares).

\begin{lemma}[\textbf{Besicovitch covering lemma}]\mbox{}\label{th:BES}\\
	Let $E$ be a bounded subset of $\R^{d}$. Let $\mathcal{F}$ be a collection of hypercubes in $\R^{d}$ with faces parallel to the coordinate planes such that every point $\bx\in E$ is the center of a cube from $\mathcal{F}$. Then there exists a sub-collection $\mathcal{G}\subset \mathcal{F}$ such that
	\begin{equation}
		\1_{E}\leq\sum_{Q\in \mathcal{G}}\1_{Q}\leq b_{d} \1_{E},
	\end{equation}
	namely, $E$ is covered by $\bigcup_{Q\in \mathcal{G}}Q$ and every point in $E$ belongs to at most $b_{d}$ cubes from $\mathcal{G}$. The constant $b_{d}$ only depends on the dimension $d\geq 1$.
\end{lemma}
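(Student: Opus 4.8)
The plan is to prove Lemma~\ref{th:BES} directly, by the classical greedy construction of Besicovitch~\cite{Besicovitch-45,Besicovitch-46}; nothing from the rest of the paper is needed. For a cube $Q$ write $\ell(Q)$ for its side length and $z_Q$ for its centre, and measure distances in the $\ell^\infty$ norm, so that $Q=\{y:\|y-z_Q\|_\infty\le\ell(Q)/2\}$. After discarding from $\mathcal{F}$ every cube whose centre does not lie in $E$ (those are never used), we may assume that every cube of $\mathcal{F}$ is centred in $E$, while every point of $E$ is still the centre of at least one cube of $\mathcal{F}$. Put $M:=\sup_{Q\in\mathcal{F}}\ell(Q)\in(0,+\infty]$. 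If $M=+\infty$, pick $Q\in\mathcal{F}$ with $\ell(Q)>2\,\mathrm{diam}_\infty(E)$ (finite since $E$ is bounded); its centre lies in $E$, hence $E\subseteq Q$ and $\mathcal{G}=\{Q\}$ works with constant $1$. So assume from now on that $M<\infty$.

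I would then select $\mathcal{G}$ greedily: first choose $Q\in\mathcal{F}$ with $\ell(Q)>M/2$; and, as long as the cubes chosen so far fail to cover $E$, let $M'$ be the supremum of $\ell(Q)$ over cubes of $\mathcal{F}$ whose centre is not yet covered — a positive number, since any not-yet-covered point of $E$ is the centre of some cube of $\mathcal{F}$ — and choose one such cube with $\ell(Q)>M'/2$. (Formally this is a transfinite recursion.) As the union of chosen cubes only grows, the thresholds $M'$ are non-increasing, so whenever $Q$ is selected before $Q'$ one has $\ell(Q')<2\ell(Q)$ and $z_{Q'}\notin Q$ (the points of $Q$ are already covered when $Q'$ is chosen, whereas $z_{Q'}$ is not). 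These two facts make the concentric one-third cubes $\{\tfrac13 Q\}_{Q\in\mathcal{G}}$ pairwise disjoint: if $Q$ precedes $Q'$ then $\|z_Q-z_{Q'}\|_\infty>\ell(Q)/2\ge\ell(Q)/6+\ell(Q')/6$. Since these disjoint open cubes all lie in the bounded set $E+[-\tfrac M2,\tfrac M2]^d$, only countably many cubes are selected; the recursion, which picks a new (hence distinct) cube at each non-terminal step, therefore stops at a countable stage, and it can stop only once the chosen cubes cover $E$. Hence $\1_E\le\sum_{Q\in\mathcal{G}}\1_Q$.

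The substantive point is the uniform multiplicity bound: for every $x\in\R^d$ the family $I_x:=\{Q\in\mathcal{G}:x\in Q\}$ should have cardinality at most a constant $b_d$ depending only on $d$ — with the covering, this yields both displayed inequalities (reading the right-hand one as $\sum_{Q\in\mathcal{G}}\1_Q\le b_d$). This is the geometric heart of the Besicovitch theorem. Translating $x$ to the origin, each $Q\in I_x$ satisfies $\|z_Q\|_\infty\le\ell(Q)/2$, while for $Q,Q'\in I_x$ with $Q$ selected before $Q'$ we have $z_{Q'}\notin Q$ and $\ell(Q')<2\ell(Q)$; so the centres lie close to the origin and are pairwise separated by a controlled fraction of their distance to it. One deduces — this is the delicate step — that the Euclidean directions $z_Q/|z_Q|$ are pairwise separated by an angle bounded below by some $\theta_d>0$: at comparable scales this follows from an $\ell^\infty$ packing estimate, and across scales one uses $z_{Q'}\notin Q$ together with the factor-$2$ size control. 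Since at most finitely many, say $b_d$, unit vectors on $S^{d-1}$ can be mutually $\theta_d$-separated, $|I_x|\le b_d$.

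I expect this angular-separation estimate to be the only genuine obstacle: it is the subtle part of the Besicovitch argument, and care is needed because the cubes through a fixed point are \emph{not} totally ordered by size — only the one-sided bound $\ell(Q')<2\ell(Q)$ for $Q$ preceding $Q'$ holds, so an early cube may be far larger than later ones and one must argue scale by scale and then combine. Everything else — the greedy recursion, countability of $\mathcal{G}$ via the disjoint one-third cubes, and the spherical packing bound — is routine. As all the constants involved are purely dimensional, $b_d$ depends only on $d$; in particular it is independent of $E$ and $\mathcal{F}$, and hence, in the application of Section~\ref{sec:key}, of the statistics parameter, of $R$, and of $N$.
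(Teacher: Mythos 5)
The paper does not actually prove Lemma~\ref{th:BES}: it is quoted as a classical result with references to \cite{Besicovitch-45,Besicovitch-46} and used as a black box, exactly as in \cite{Nam-22}. So there is no internal proof to compare against, and what you propose is a from-scratch proof. The parts you do carry out are correct and standard: the reduction to cubes centred in $E$, the treatment of $M=+\infty$, the greedy (transfinite) selection, the one-sided size control $\ell(Q')<2\ell(Q)$ for $Q$ chosen before $Q'$, the fact $z_{Q'}\notin Q$, the disjointness of the concentric one-third cubes and the resulting countability of $\mathcal{G}$, and the covering property $\1_E\le\sum_{Q\in\mathcal{G}}\1_Q$.

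The genuine gap is that the uniform overlap bound $\#\{Q\in\mathcal{G}:x\in Q\}\le b_d$ --- which is the entire substance of Besicovitch's theorem --- is asserted rather than proved; you yourself flag it as ``the delicate step'' and ``the only genuine obstacle''. Worse, the statement you would rest it on is false as formulated: the directions $z_Q/|z_Q|$ (after translating $x$ to the origin) of the selected cubes through $x$ need \emph{not} be pairwise separated by a dimensional angle. A one-dimensional example (which embeds into any $d$ by taking products with a fixed cube): let $Q=[-0.95,1.05]$ (centre $0.05$, side $2$) be chosen first and $Q'=[-0.01,2.19]$ (centre $1.09$, side $2.2<2\cdot 2$) later; this is compatible with your selection rules since $z_{Q'}=1.09\notin Q$ and $\ell(Q')<2\ell(Q)$, both cubes contain $x=0$, yet both centres lie on the same ray from $x$, so the angular separation is $0$. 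The standard proof therefore has to split the cubes through $x$ into those of comparable size, whose \emph{number} is bounded by a volume/packing count of centres that are pairwise $\ell^\infty$-separated at scale comparable to their distance from $x$ (this mechanism gives a cardinality bound, not an angle bound), and those of very different sizes, where a genuinely delicate geometric argument is needed; carrying out this two-case analysis is precisely the content of the lemma, and without it your proof is incomplete. For the purposes of this paper, citing \cite{Besicovitch-45,Besicovitch-46} (or a textbook treatment) is the appropriate route, which is what the text does.
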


\begin{proof}[Proof of Theorem \ref{LT_ex_an}, last implication in~\eqref{eq:structure}]\mbox{}\label{pr:Main}\\
	We start by considering $\Psi_{N}$'s for which the total number of particles is bounded from above by $N_{<}$
	\begin{equation}
		\int_{\R^{2}}\rho_{\Psi_{N}}\left (\bx\right )\d \bx < N_< 
	\end{equation}
 Then Theorem \ref{th:LT_FN} does not apply because of the assumption \eqref{as:LT_FN}. 
	In this case, we can however apply the diamagnetic~\cite[Theorem 7.21]{LieLos-01} and Sobolev inequalities
	\begin{equation}
		\mathcal{E}_{\alpha}^{R}\left [\Psi_{N}\right ]\geq \sum_{j=1}^{N}\int_{\R^{2N}}\left \b -\im\nabla_{\bx_j}\left \b \Psi_{N}\right \b\right \b^{2}\d x_{N}\geq\frac{1}{C_{2}} \frac{\int_{\R^{2}}\rho^{2}_{\Psi_{N}}\left (\bx\right )\d \bx}{\int_{\R^{2}}\rho_{\Psi_{N}}\left (\bx\right )\d \bx}\geq \frac{1}{C_{2}N_<}\int_{\R^{2}}\rho^{2}_{\Psi_{N}}\left (\bx\right )\d \bx.\nn
	\end{equation}
	There remains to consider the case
	\begin{equation}
		\int_{\R^{2}}\rho_{\Psi_{N}}\left (\bx\right )\d \bx \geq N_< 
	\end{equation}
	By a density argument we may assume that $\Psi_N$ is smooth, with compact support in $E^N$ with $E\subset \R^2$ bounded. Then $\rho_{\Psi_{N}}$ is continuous with a bounded support $E\subset \R^{2}$, and we can for every $\bx\in E$, find a square $Q_{\bx}\subset \R^{2}$ centered at $\bx$ such that
	\begin{equation}\label{eq:enro}
		\int_{Q_{\bx}}\rho_{\Psi_{N}}\left (\by\right )\d \by =\frac{N_< + N_>}{2}
	\end{equation}
	where $N_>$ is as in the statement of Theorem~\ref{th:LT_FN}. We apply the Besicovitch covering Lemma~\ref{th:BES} to the collection of squares $\mathcal{F}=\left \{Q_{\bx}\right \}_{\bx \in E}$ to obtain a sub-collection $\mathcal{G}\subset\mathcal{F} $ such that
	\begin{equation}\label{ine:bes}
		\1_{E}\leq\sum_{Q\in \mathcal{G}}\1_{Q}\leq b_{2} \1_{E}.
	\end{equation}
	The second inequality above implies that
	\begin{equation}
		\mathcal{E}_{\alpha}^{R}\left [\Psi_{N}\right ]\geq \frac{1}{b_{2}}\sum_{Q\in \mathcal{G}}\mathcal{E}^R _{Q}\left [\Psi_{N}\right ].\nn
	\end{equation}
	On each square $Q\in\mathcal{G}$ we have~\eqref{eq:enro} and may thus apply Theorem~\ref{th:LT_FN} to obtain 
	\begin{equation}
		\mathcal{E}_{\alpha}^{R}\left [\Psi_{N}\right ]\geq \frac{C^{\mathrm{FN}}\left |\alpha -1\right| }{b_{2}}\sum_{Q\in \mathcal{G}}\int_{Q}\rho^{2}_{\Psi_{N}}(\bx)\d \bx\geq \frac{C^{\mathrm{FN}}\left |\alpha -1\right|}{b_{2}}\int_{\R^{2}}\rho^{2}_{\Psi_{N}}(\bx)\d \bx\nn
	\end{equation}
	where we used the first inequality of~\eqref{ine:bes} in the last step.
	This provides the desired estimate with 
	\begin{equation}
		C^{\mathrm{EA}}=\min\left \{\frac{C^{\mathrm{FN}}}{b_{2}},\frac{1}{C_{2}N_<}\right \}.\nn
	\end{equation}
\end{proof}

\subsection{Reduction to a local exclusion principle}\label{sec:local LT}

We can now state the Local Pauli exclusion theorem we use to establish Theorem \ref{th:LT_FN}. Its proof will be the content of Section~\ref{sec:local}. 

\begin{theorem}[\textbf{Local exclusion principle for extended anyons}]\mbox{}\label{LEP_an}\\
	There exist three numbers $N_<$, $N_>$ and $C^{\mathrm{LE}}$ independent of $\alpha$ and $R$ such that if $\Psi_{N}\in L^{2}_{\mathrm{asym}}\left (\R^{2N}\right )$ is a $L^2$-normalized wave-function and $Q$ a square for which
	\begin{equation}\label{as:rho2}
		N_< \leq \int_{Q}\rho_{\Psi_{N}}\left (\bx\right )\d \bx\leq N_>
	\end{equation}
	we have that
	\begin{equation}\label{ine:le}
		\mathcal{E}^{R}_{Q}\left [ \Psi_{N}\right ]\geq C^{\mathrm{LE}}\frac{\left |\alpha -1 \right | }{\left \b Q\right \b} \int_{Q}	\rho_{\Psi_{N}}(\bx)\d \bx .
	\end{equation}
\end{theorem}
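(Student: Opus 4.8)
The plan is to reduce, via~\eqref{KEbis}, to a lower bound on $\int_{Q^N}(e_1+e_2)$. Since $\Psi_N$ is supported in $Q^N$ we have $\int_Q\rho_{\Psi_N}=N$, so the hypothesis~\eqref{as:rho2} reads simply $N_<\le N\le N_>$ and only finitely many values of $N$ occur. It is therefore enough to prove a \emph{per particle} estimate $\int_{Q^N}(e_1+e_2)\ge c\,|\alpha-1|\,(N-N_0)/|Q|$ with universal $c$ and $N_0$, and then to take $N_<:=2N_0+1$ (so that $N-N_0\ge N/2=\tfrac12\int_Q\rho_{\Psi_N}$) and $N_>\ge N_<$ arbitrary; constants may depend on $N_>$. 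A useful observation is that, because $|\alpha-1|\le 1$ on $[0,2]$, \emph{any} lower bound of the form $c'(N-N_0)/|Q|$ — or $c'\alpha(N-N_0)/|Q|$ with $\alpha$ bounded away from $0$ — already suffices. I would establish the per particle estimate by distinguishing three regimes according to $\gamma:=R/L$, with thresholds $0<\gamma_1<\gamma_0$ (and $\gamma_0\ge\sqrt2$) fixed at the end.

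\emph{Tiny boxes, $\gamma\ge\gamma_0$.} Any two points of $Q$ lie at distance $<R$, so by~\eqref{AJRr} one has $\alpha\bA^R(\bx_j)=\alpha R^{-2}\big((N-1)\bx_j-\sum_{k\ne j}\bx_k\big)^\perp$ on $Q^N$, an affine field in $\bx_j$ of modulus $O(\alpha N/(L\gamma^2))$. A Cauchy--Schwarz splitting of $\|(-\im\nabla_{\bx_j}+\alpha\bA^R(\bx_j))\psi\|^2_{L^2(Q)}$ gives, for fixed positions of the other particles, the fibrewise bound $\ge\tfrac12\|\nabla_{\bx_j}\psi\|^2_{L^2(Q)}-C\,N^2(L^2\gamma^4)^{-1}\|\psi\|^2_{L^2(Q)}$. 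Integrating over the remaining variables, summing over $j$, and using the fermionic inequality $\sum_j\langle\Psi_N,-\Delta^{\mathrm{Neu}}_{\bx_j}\Psi_N\rangle\ge\sum_{k=1}^{N}\mu_k(Q)\ge c(N-1)/L^2$ (sum of the $N$ lowest Neumann eigenvalues of the square) yields $\int_{Q^N}e_1\ge\tfrac14 c(N-1)/L^2$ once $\gamma_0$ is large enough (depending on $N_>$) to swallow the error term. Since $|\alpha-1|\le1$ this is the desired bound, with $N_0=1$.

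\emph{Medium boxes $\gamma_1\le\gamma\le\gamma_0$, and large boxes $\gamma\le\gamma_1$.} For $\gamma$ in the compact range $[\gamma_1,\gamma_0]$ and $\alpha$ bounded away from $0$, apply~\eqref{eq:basic mag} as in~\eqref{KEbis} so that $e_2$ dominates a scalar functional of $|\Psi_N|$ carrying the positive pair potential $\alpha R^{-2}\1_{|\bx_j-\bx_k|<R}$ (the curl of~\eqref{AJRr}); the Dyson-lemma argument of~\cite{LarLun-16} (combining~\eqref{eq:basic mag} with~\cite{LieSeiSolYng-05,Rougerie-EMS}, uniformly on $[\gamma_1,\gamma_0]$) then trades bosonic kinetic energy to smear this bump over $Q$ and gives $\gtrsim\alpha(N-N_0)/|Q|$. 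For $\alpha$ below a fixed threshold I would instead discard $e_2$ and treat $\alpha\bA^R$ as an $O(\alpha N/L)$ perturbation of the free kinetic energy in $e_1$, recovering $\gtrsim(N-1)/L^2$ from the same fermionic eigenvalue sum; the two $\alpha$-ranges overlap. For $\gamma\le\gamma_1$ the flux tubes have radius $R\ll L$ and the model is a perturbation of the ideal anyon gas, for which~\cite{LunSei-18} proves exactly this local exclusion with constant $\propto|\alpha-1|=|\alpha_b|$, using the magnetic Hardy inequality $(-\im\nabla+\alpha\nabla^\perp w_0)^2\ge(\alpha-1)^2|\cdot|^{-2}$ on functions odd in a relative variable — fermionic antisymmetry forcing odd relative angular momentum — together with a combinatorial local-exclusion step. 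I would re-run that scheme; the one genuinely new point is that~\eqref{AJRr} differs from the ideal potential~\eqref{AJP} only at scale $R$, and that the flux of particles sitting in neighbouring sub-cells, which for $R=0$ is an exact gauge inside a given sub-cell (hence removable at no cost), must now be gauged away at a price controlled by the overlap, of relative size $O(\gamma)$; choosing $\gamma_1$ small makes this a fraction of the leading term, the residual boundary-layer errors being handled by routine localisation. Taking a single $\gamma_1<\gamma_0$, $N_0$ the largest of the three values above, and $N_<=2N_0+1$ then proves the theorem.

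\emph{Main obstacle.} In all three regimes the essential difficulty is that the magnetic potential~\eqref{AJRr} felt by particle $j$ depends on the positions of all the others, so the kinetic operator is irreducibly many-body and cannot be diagonalised. The constant-curl collapse in the tiny-box regime, the passage to a pair interaction via~\eqref{eq:basic mag} in the medium regime, and the quantitative removal of the inter-cell gauges in the large-box regime are the three devices that bypass this. I expect the large-box regime to be the hard part: upgrading the ideal-anyon local exclusion of~\cite{LunSei-18} to nonzero $R$ — propagating the Hardy-type bound past the scale $R$ of the smeared fluxes and keeping all the sub-cell gauge transformations under quantitative control — is where the real work lies.
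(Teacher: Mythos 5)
Your proposal keeps the paper's three-regime skeleton in $\gamma=R/L$, but it rests on reading the hypothesis as ``all $N$ particles live in $Q$'', so that $\int_Q\rho_{\Psi_N}=N\leq N_>$ and every source of magnetic field sits inside the box. That removes the main difficulty the theorem must address. The localized energy $\mathcal{E}^R_Q$ of \eqref{def:EQ} is built from the full potential $\bA^R(\bx_j)$, and the way Theorem~\ref{LEP_an} is consumed in the proof of Theorem~\ref{th:LT_FN} (and the paper's own proof, via the decomposition into $p_n(\Psi_N,Q)$ and the energies $E^{R}_{n}(Q,m)$ with an infimum over the positions $Y_m$ of the exterior particles, cf.\ \eqref{eq:potbox}--\eqref{eq:enerinout}) requires the bound to hold uniformly in the number $m$ and the locations of particles \emph{outside} $Q$, whose flux tubes of radius $R$ penetrate the box. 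This uniformity is precisely what distinguishes extended anyons from the ideal case, and it is what drives the paper's key lemmas: for large boxes, outside fluxes are gauged away at controlled cost via a corridor dichotomy (Lemma~\ref{lem:toe2}); for small boxes, the exterior-generated field inside $Q$ can be arbitrarily strong, so no perturbative argument is possible and the paper proves a diamagnetic Birman--Schwinger bound on the number of low Neumann modes, uniform in $\bA$ (Lemma~\ref{lem:N}, Proposition~\ref{lem:smallexc}). Your tiny-box step, which swallows $\alpha^2|\bA^R|^2=O\big(N_>^2\gamma^{-4}L^{-2}\big)$ by Cauchy--Schwarz, and your medium-box small-$\alpha$ step, which treats $\alpha\bA^R$ as an $O(\alpha N_>/L)$ perturbation, both exploit the ``no exterior particles'' reading and fail in the setting the theorem actually needs, since the exterior contribution to $\bA^R_j$ is not controlled by $N_>$ at all.

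A second, independent gap sits in the large-box regime, which you yourself identify as the hard part. The assertion that the flux of particles in neighbouring sub-cells ``must be gauged away at a price controlled by the overlap, of relative size $O(\gamma)$'' is not correct as stated: the wave function may concentrate in the boundary layer of width $\sim R$, and then the gauge-removal cost is not a small fraction of the leading term. The paper handles this with a dichotomy (Lemma~\ref{lem:toe2}): either little mass lies in the corridor, in which case one restricts to the inner square and gauges exactly, or much mass lies there, in which case a Poincar\'e--Sobolev inequality plus the diamagnetic inequality already yields energy $\gtrsim |Q|^{-1}$. Likewise, the actual substance of this regime --- the reduction to exactly two particles per sub-cell (Lemma~\ref{lem:ap_E2}) and the two-anyon lower bound at positive radius with the Bessel quantity $g^2(\nu,\gamma)$ from~\cite{LarLun-16} (Lemma~\ref{Lem:E2}), which is how the Hardy-type bound is ``propagated past the scale $R$'' --- is only gestured at in your sketch. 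So while the overall architecture matches the paper, the two points where the extended-anyon problem genuinely differs from the ideal one (uniformity in exterior fluxes, and the quantitative corridor/gauge control) are missing or incorrectly argued.
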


To prove Theorem \ref{th:LT_FN} we combine the above with a local uncertainty principle, i.e. essentially a Poincar\'e-Sobolev inequality. We use the version from ~\cite[Lemma~3.4]{Nam-20}, which is convenient for our purpose.

\begin{lemma}[\textbf{Local uncertainty}]\mbox{}\label{local_uncertainty}\\
	Let $\Psi_{N}\in H^{1}\left (\R^{2N}\right )$ for arbitrary $N\geq 1$ and let $Q$ be a square in $\R^{2}$. Then
	\begin{equation}\label{ine:li}
		\mathcal{E}^{R}_{Q}\left [\Psi_{N}\right ]\geq \frac{1}{4}\sum_{j=1}^{N}\int_{\R^{2N}}\left \b \nabla_{\bx_{j}} \b\Psi_{N}\b\right \b^{2}\1_{Q}\left (\bx_{j}\right ) \geqslant \frac{C_{2}\int_{Q}\rho^{2}_{\Psi_{N}}\left (\bx \right )\d \bx }{\int_{Q}\rho_{\Psi_{N}}\left (\bx \right ) \d \bx}-\frac{1}{\left \b Q\right \b}\int_{Q}\rho_{\Psi_{N}}\left (\bx \right )\d \bx
	\end{equation}
	for a universal constant $C_2$.
\end{lemma}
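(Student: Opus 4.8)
The plan is to split the asserted chain of inequalities and prove the two halves separately. The left inequality is bookkeeping: by~\eqref{KEbis}--\eqref{def:EQ} the integrand defining $\mathcal{E}^R_Q[\Psi_N]$ is $\sum_{j=1}^N\1_Q(\bx_j)$ times the pointwise sum of $\tfrac12\big|(-\im\nabla_{\bx_j}+\alpha\bA^R(\bx_j))\Psi_N\big|^2$, of $\tfrac14\big|\nabla_{\bx_j}|\Psi_N|\big|^2$, and of the nonnegative smeared-flux term $\tfrac14\big(\curl\bA^R(\bx_j)\big)|\Psi_N|^2$; discarding the first and third (both pointwise $\geq0$) gives at once $\mathcal{E}^R_Q[\Psi_N]\geq\tfrac14\sum_{j=1}^N\int_{\R^{2N}}\big|\nabla_{\bx_j}|\Psi_N|\big|^2\1_Q(\bx_j)\,\d\bx$. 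Everything then reduces to the Poincar\'e--Sobolev bound
\[
\frac14\sum_{j=1}^N\int_{\R^{2N}}\big|\nabla_{\bx_j}|\Psi_N|\big|^2\1_Q(\bx_j)\,\d\bx\ \geq\ \frac{C_2\int_Q\rho_{\Psi_N}^2}{\int_Q\rho_{\Psi_N}}-\frac{1}{|Q|}\int_Q\rho_{\Psi_N},
\]
which involves only $\rho:=\rho_{\Psi_N}$ (and is trivial if the left-hand side is $+\infty$).

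The first ingredient is a localized Hoffmann--Ostenhof inequality. Write $\rho=\sum_{j=1}^N\rho^{(j)}$ with $\rho^{(j)}(\bx):=\int_{\R^{2(N-1)}}|\Psi_N(\bx_1,\dots,\bx_j=\bx,\dots,\bx_N)|^2\prod_{k\neq j}\d\bx_k$. For each fixed $j$ and each fixed value $\bx$ of the $j$-th variable, Cauchy--Schwarz in the remaining $N-1$ variables gives $|\nabla\rho^{(j)}(\bx)|^2\leq4\,\rho^{(j)}(\bx)\int_{\R^{2(N-1)}}\big|\nabla_{\bx_j}|\Psi_N|\big|^2\prod_{k\neq j}\d\bx_k$, i.e. $\int_{\R^{2(N-1)}}\big|\nabla_{\bx_j}|\Psi_N|\big|^2\prod_{k\neq j}\d\bx_k\geq|\nabla\sqrt{\rho^{(j)}}(\bx)|^2$. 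Because $\1_Q(\bx_j)$ does not depend on the integrated variables, integrating $\bx_j=\bx$ over $Q$ and summing over $j$ yields $\sum_j\int_{\R^{2N}}\big|\nabla_{\bx_j}|\Psi_N|\big|^2\1_Q(\bx_j)\,\d\bx\geq\sum_j\int_Q|\nabla\sqrt{\rho^{(j)}}|^2$; a second Cauchy--Schwarz applied to $\nabla\sqrt\rho=\rho^{-1/2}\sum_j\sqrt{\rho^{(j)}}\,\nabla\sqrt{\rho^{(j)}}$ gives $\sum_j|\nabla\sqrt{\rho^{(j)}}|^2\geq|\nabla\sqrt\rho|^2$, hence
\[
\sum_{j=1}^N\int_{\R^{2N}}\big|\nabla_{\bx_j}|\Psi_N|\big|^2\1_Q(\bx_j)\,\d\bx\ \geq\ \int_Q|\nabla\sqrt\rho|^2 ,
\]
so in particular $g:=\sqrt\rho\in H^1(Q)$.

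The second ingredient is a rescaled Gagliardo--Nirenberg--Sobolev inequality on the square, namely $\int_Qg^4\leq C''\big(\int_Q|\nabla g|^2+|Q|^{-1}\int_Qg^2\big)\int_Qg^2$ for a universal constant $C''$. I would obtain this by dilating $Q$ to the unit square $Q_1$, applying the standard two-dimensional bound $\|u\|_{L^4(Q_1)}^2\leq C\,\|u\|_{L^2(Q_1)}\,\|u\|_{H^1(Q_1)}$ — which follows by composing a bounded extension operator $H^1(Q_1)\to H^1(\R^2)$, chosen to be bounded also $L^2\to L^2$, with the Ladyzhenskaya inequality $\|v\|_{L^4(\R^2)}^2\leq C\,\|v\|_{L^2(\R^2)}\,\|\nabla v\|_{L^2(\R^2)}$ — and then tracking the powers of the side-length under the dilation, the lower-order $L^2$-term being precisely what produces the $|Q|^{-1}$ prefactor. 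Dividing by $\int_Qg^2$ and comparing coefficients with the target inequality (the $\tfrac14$ in front of the gradient is the binding one), the lemma follows with $C_2:=(4C'')^{-1}$.

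I do not anticipate a serious obstacle: both inputs are classical. The only points that need (routine) care are carrying the cutoff $\1_Q(\bx_j)$ through the Hoffmann--Ostenhof step — legitimate exactly because it is constant in the variables one integrates out — and the scaling bookkeeping, which is also what makes the correction $-|Q|^{-1}\int_Q\rho$ unavoidable: on a bounded domain without boundary conditions the pure Sobolev inequality $\|g\|_{L^4}^2\lesssim\|\nabla g\|_{L^2}\,\|g\|_{L^2}$ fails already for $g$ constant.
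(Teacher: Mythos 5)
Your proposal is correct, but it takes a more self-contained route than the paper. The paper's proof of Lemma~\ref{local_uncertainty} is a one-line citation: after the same trivial first step (the magnetic term $e_1$ and the smeared-flux part of $e_2$ are pointwise nonnegative, so only $\frac14\sum_j|\nabla_{\bx_j}|\Psi_N||^2\1_Q(\bx_j)$ is kept), it simply invokes \cite[Lemma 3.4]{Nam-20} applied to $|\Psi_N|$, which is exactly the stated local uncertainty principle. You instead reprove that input from scratch: a localized Hoffmann--Ostenhof argument (Cauchy--Schwarz in the integrated variables for each $\rho^{(j)}$, then Cauchy--Schwarz in $j$) reduces the localized kinetic energy to $\int_Q|\nabla\sqrt{\rho}|^2$, and a Gagliardo--Nirenberg--Sobolev inequality on the unit square (extension operator plus Ladyzhenskaya), rescaled to side-length $L$ so that the lower-order term produces the $|Q|^{-1}\int_Q\rho$ correction, yields the claimed bound with $C_2=(4C'')^{-1}$; the bookkeeping with the factor $\frac14$ and the error coefficient is handled correctly. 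This is essentially the standard proof underlying the cited lemma, so what your version buys is independence from the reference and explicit visibility of where the $-|Q|^{-1}\int_Q\rho$ term comes from, at the cost of length; the paper's citation is shorter and delegates the (routine but nontrivial) extension/interpolation machinery. No gap to report; only cosmetic points remain (e.g.\ the degenerate case $\int_Q\rho=0$, and the standard regularization needed to differentiate $\sqrt{\rho^{(j)}}$), which are routine.
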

\begin{proof}
	We apply~\cite[Lemma 3.4]{Nam-20} to the wave function $\left \b\Psi_{N}\right \b$.
\end{proof}

\begin{proof}[Proof of Theorem \ref{th:LT_FN}, second implication in~\eqref{eq:structure}]
We assume~\eqref{as:rho2}. Combining Inequalities~\eqref{ine:le} and~\eqref{ine:li} we obtain, for any $\epsilon \in\left [0,1\right ]$ 
	\begin{align}
		\left (1-\epsilon+\epsilon\right )	\mathcal{E}^{R}_{Q}\left [\Psi_{N}\right ]&\geq \epsilon\frac{C_{2}\int_{Q}\rho^{2}_{\Psi_{N}}\left (\bx \right )\d \bx }{\int_{Q}\rho_{\Psi_{N}}\left (\bx \right ) \d \bx}-\frac{\epsilon}{\left \b Q\right \b}\int_{Q}\rho_{\Psi_{N}}\left (\bx \right )\d \bx\nn \\
		&+\left (1-\epsilon\right ) \frac{C^{\mathrm{LE}}|\alpha-1|}{\left \b Q\right \b}\int_{Q}	\rho_{\Psi_{N}}(\bx)\d \bx\nn \\
		&= \epsilon\frac{C_{2}\int_{Q}\rho^{2}_{\Psi_{N}}\left (\bx \right )\d \bx}{\int_{Q}\rho_{\Psi_{N}}(\bx)\d \bx}+\left ((1-\epsilon)C^{\mathrm{LE}}\b \alpha -1\b -\epsilon\right )\frac{1}{\b Q\b}\int_{Q}\rho_{\Psi_{N}}(\bx)\d \bx.\nn
	\end{align}
	We choose (clearly this is smaller than $1$)
	$$ \epsilon = \frac{C^{\mathrm{LE}}|\alpha-1|}{1+C^{\mathrm{LE}}|\alpha-1|}$$
	which makes the expression in parenthesis vanish. The desired result follows by bounding the remaining integral of $\rho_{\Psi_{N}}$ to obtain the constant
	\begin{equation}
		C^{\mathrm{FN}}=\frac{C_{2} }{N_>}\frac{C^{\mathrm{LE}}|\alpha-1|}{1+C^{\mathrm{LE}}|\alpha-1|}.\nn
	\end{equation}
\end{proof}

\section{Local exclusion principle for extended anyons}\label{sec:local}

There remains to deal with the heart of the matter, namely the proof of Theorem~\ref{LEP_an}. This result is true without any assumption on $\gamma =RL^{-1}$, but we use different proofs for three particular ranges of $\gamma$, thus covering all $\gamma \in \R_{+}$. We introduce two constants $c_{1}$ and $c_{2}>c_{1}$ to be fixed later on and work in three types of boxes, corresponding to the sketch of proof given at the end of Section~\ref{sec:intro}:
\begin{enumerate}
	\item \textbf{Large boxes} where $\gamma< c_{1}$
	\item \textbf{Medium boxes} where $c_{1}\leq \gamma \leq c_{2}$
	\item \textbf{Small boxes} where $\gamma> c_{2}$
\end{enumerate}
In each case we establish a lower bound on the energy of $n$ particles localized in $Q$, uniformly with respect to a number $m\geq 0$ of particles outside the box. We combine the three results in the end of the section to obtain the energy $\mathcal{E}_{Q}$ of \eqref{def:EQ}. The logic is that the first bound will be proven for $c_1$ small enough, the third bound for $c_2$ large enough, but the second bound is valid for all values of $c_1,c_2$, provided that $c_{1}$ be bounded away from zero and that $c_{2}$ remains bounded above.

The following notation is used throughout this section. We define the Neumann ground-state energy for $n$ extended anyons on a domain $Q \subset \R^{2}$ interacting with $m$ anyons in the exterior $Q^c=\R^{2}\setminus Q$.  We denote $Q_{0}=\left [0,1\right ]^{2}$ and define
\begin{equation}\label{eq:potbox}
	\bA^{R}_{j}\left (X_{n},Y_{m}\right )=\sum_{\substack{k=1\\k\neq j}}^{n}\nabla^{\perp}w_{R}\left (\bx_{j}-\bx_{k}\right ) + \sum_{k=1}^{m}\nabla^{\perp}w_{R}\left (\bx_{j}-\by_{k}\right )
\end{equation}
the magnetic vector potential for $n$ particles living in $Q$ (with coordinates $X_n = \left(\bx_{1},...,\bx_{n} \right)$) interacting with $m$ fixed anyons located at $Y_{m}=\left(\by_{1},...,\by_{m}\right)$, all outside $Q$. In line with the definition of $e_2$ in~\eqref{KEbis} we also denote
\begin{equation}\label{eq:scalbox}
 V^R _j (X_n,Y_m) = 2\pi\alpha\sum_{k\neq j}\frac{\1_{B\left (\bx_{k},R\right )}\left (\bx_{j} \right )}{\pi R^{2}} + 2\pi \alpha\sum_{k= 1} ^m \frac{\1_{B\left (\by_{k},R\right )}\left (\bx_{j} \right )}{\pi R^{2}}.
\end{equation}
For $\Psi_n \in L^2 (Q^n)$, set
\begin{align}\label{eq:enerinout}
\mathcal{E}^{R}_{n}\left (Q,Y_m\right )\left [\Psi_n\right ] &:= \frac{1}{2}\sum_{j=1}^{n}\int_{Q ^{n}}\left \b\left( -\im\nabla_{\bx_j}+\alpha	\bA^{R}_{j}\left (X_{n},Y_{m}\right )\right )\Psi_{n}\right \b^{2} \d X_{n} \nonumber\\
&+ \frac{1}{4} \sum_{j=1}^{n}\int_{Q ^{n}}\left \b\nabla_{\bx_j}|\Psi_{n}|\right \b^{2} \d X_{n} + \frac{1}{4}\sum_{j=1}^{n}\int_{Q ^{n}}V^R _j (X_n,Y_m)|\Psi_{n}|^{2} \d X_{n} \nonumber\\
&=:\sum_{j=1}^{n} \int_{Q ^{n}} e_{j}(\Psi_n,Y_m) \d X_{n}.
\end{align}
The fermionic Neumann energy is then
\begin{equation}
	E^{R}_{n}\left ( Q,m \right ):=\inf_{Y_{m}\in \left (Q^c \right )^{m}} \inf \left \{\mathcal{E}^{R}_{n}\left (Q,Y_m\right )\left [\Psi_n\right ] ,\Psi_{n}\in{L_{\rm asym}^{2}(Q^n) ,\; \norm{\Psi_n}_{L^{2}(Q^{n})} }=1 \right\}.
\end{equation}
We will drop the arguments $Q$ or $m$ of the previous energy when $Q=Q_{0}= [0,1]^2$ or $m=0$, namely 
$$ E^R_n (m) := E^{R}_{n}\left ( Q_0,m \right ) \mbox{ and } E^R_n := E^R_n \left ( Q_0,0 \right ).$$
We also introduce the notation 
	\begin{equation}
		D_{\bx_j}^{R}\left (Y_m\right ):=-\im\nabla_{\bx_j}+\alpha	\bA^{R}_{j}\left (X_{n},Y_{m}\right ).\nn
	\end{equation}
The main goal of this section is to prove 

\begin{proposition}[\textbf{Exclusion principle on finite boxes}]\mbox{}\label{lem:exclbox}\\
	There exist constants $\underline{N} >0$ and $C>0$ such that, for any $R > 0$, $m\geq 0$ and $N \geq \underline{N}$
	\begin{equation}
		E^{R}_{N}\left (Q,m\right )\geqslant \frac{C N \b 1- \alpha \b}{\b Q\b}.	
		\label{eq:exclmain}
	\end{equation}	
\end{proposition}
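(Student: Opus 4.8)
The plan is, first, to rescale to the unit box $Q_0=[0,1]^2$. The change of variables $\bx_j\mapsto L\bx_j$ turns $D^R_{\bx_j}$ into $L^{-1}D^{\gamma}_{\bx_j}$ and $V_R$ into $L^{-2}V_{\gamma}$, where $\gamma:=R/L$, so that $E^R_N(Q,m)=|Q|^{-1}E^{\gamma}_N(Q_0,m)$; hence it suffices to prove
\[
E^{\gamma}_N(Q_0,m)\ \geq\ C\,N\,|1-\alpha|\qquad\text{for all }N\geq\underline{N},\ \gamma>0,\ m\geq0.
\]
I would then treat the three ranges $\gamma<c_1$ (\emph{large boxes}), $c_1\leq\gamma\leq c_2$ (\emph{medium boxes}) and $\gamma>c_2$ (\emph{small boxes}) separately, fixing the thresholds $0<c_1<c_2$ only at the end. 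Since $\alpha\in[0,2]$ gives $|1-\alpha|\leq1$, for the last two ranges it is in fact enough to prove the $\alpha$-independent bound $E^{\gamma}_N(Q_0,m)\geq C\,N$; the genuine anyonic gain is needed, and obtained, only for small $\gamma$. This is consistent with the fact that when $\gamma$ is not small the smeared flux tubes act essentially as ordinary magnetic fields, which already force a linear lower bound on their own, whereas for $R=0$ at $\alpha=1$ one has exact bosons and no exclusion, so the prefactor $|1-\alpha|$ is unavoidable there.

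\emph{Large boxes.} I would adapt the ideal-anyon local exclusion of~\cite{LunSei-18} (together with its extended-anyon counterpart in~\cite{LarLun-16}). At scales larger than $R=\gamma L$ the extended model is indistinguishable from the ideal one, so the interior Aharonov--Bohm interactions are handled exactly as there, the short-distance regularization affecting only negligibly small scales. The genuinely new point is the influence of the particles outside $Q_0$: for $\by_l$ at distance larger than $R$ from $Q_0$ the term $\nabla^{\perp}w_R(\bx_j-\by_l)=\nabla_{\bx_j}\arg(\bx_j-\by_l)$ is a curl-free, single-valued gradient on $Q_0$ and is gauged away at no cost, so that only the exterior particles lying within distance $R$ of $\partial Q_0$ play a role, and they produce a perturbation supported in a shell of width $\sim\gamma$ along the boundary. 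I would bound the energetic cost of this shell perturbation --- using that its curl is nonnegative and a Hardy-type inequality --- by $O(\gamma)$ times the kinetic energy already present, hence absorb it once $c_1$ is small, and conclude $E^{\gamma}_N(Q_0,m)\geq c\,|1-\alpha|\,N$ for $N\geq\underline{N}$, uniformly in $m$.

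\emph{Medium and small boxes.} For $c_1\leq\gamma\leq c_2$ I would use the reduction recorded in~\eqref{eq:enerinout}: dropping the nonnegative magnetic term and the nonnegative exterior part of $V_R$ leaves the lower bound $\tfrac14\sum_j\int_{Q_0^n}|\nabla_{\bx_j}|\Psi_n||^2+\tfrac14\int_{Q_0^n}V_R^{\mathrm{int}}|\Psi_n|^2$, a two-body functional in which the statistics enters through the pair interaction $\curl\bA^R$, cf.~\eqref{AJRr}. Since $\gamma$ is bounded away from $0$ and $\infty$, the Dyson lemma~\cite{LieSeiSolYng-05,Rougerie-EMS} lets one trade kinetic energy for a bounded effective repulsion spread over the whole box, and the two-body estimates of~\cite{LarLun-16} then give, with constants depending only on $c_1,c_2$, a bound $E^{\gamma}_N(Q_0,m)\geq c\,N$ for $N\geq\underline{N}$. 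For $\gamma>c_2$ every flux disk covers $Q_0$, so $\curl(\alpha\bA^R_j)\geq 2\alpha(n-1)/R^2\geq0$ on $Q_0$ from the interior particles alone, while exterior particles farther than $R$ from $Q_0$ again contribute a curl-free potential that is gauged away. Keeping the magnetic term of~\eqref{eq:enerinout} and combining the favourable sign of the curl with the diamagnetic inequality, I would reduce matters to the one-body fact that the Neumann magnetic Laplacian on $Q_0$ has only a bounded number of eigenvalues below $\sim|Q_0|^{-1}$, uniformly in the field (a strong field pushes these states above the threshold, a weak field gives the non-magnetic count); a Pauli-principle argument then yields $E^{\gamma}_N(Q_0,m)\geq c\,N$ for $N\geq\underline{N}$.

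Finally I would choose $c_1$ small enough for the large-box step and $c_2$ large enough for the small-box step --- the medium-box step holds for any fixed $0<c_1<c_2$ --- so that the three ranges exhaust $\R_{+}$, and take $\underline{N}$ and $C$ as the worst of the three sets of constants; undoing the rescaling gives Proposition~\ref{lem:exclbox}. I expect the main obstacle to be the large-box step: controlling the exterior perturbation of the ideal-anyon operator at a cost genuinely $O(\gamma)$ and uniform in $\alpha$ and $m$ is precisely the estimate absent from~\cite{LarLun-16}, and it is what makes the final constant independent of $R$. A secondary difficulty is the small-box step, where $\bA^R_j$ depends on the particle configuration, so that the passage from the one-body mode count to the many-body fermionic bound must be handled with some care.
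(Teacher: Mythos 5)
Your three-regime decomposition and the individual mechanisms (gauging away exterior flux up to a boundary corridor when $\gamma$ is small, Dyson-type smearing for intermediate $\gamma$, a field-uniform count of low Neumann eigenvalues plus the Pauli principle for large $\gamma$) are essentially the ones the paper uses. The genuine gap is in how you produce the factor $N$. You claim, separately in each fixed range of $\gamma=R/L$, a bound $E^{\gamma}_N(Q_0,m)\geq c\,|1-\alpha|\,N$ for all $N\geq \underline N$; but in the large-box regime no such linear-in-$N$ bound follows by ``adapting'' \cite{LunSei-18,LarLun-16}. The reduction available there (and in Lemma~\ref{lem:ap_E2}) goes through the two-particle energy and yields only a constant per box --- indeed with a prefactor $C_n=\binom{n}{2}\left(3/4\right)^{n-2}$ that degrades with $n$, which is precisely why Proposition~\ref{lem:excllarge} is stated only for $2\leq n\leq \overline N$ and is not proportional to $n$. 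In \cite{LunSei-18} the linearity in $N$ comes from a splitting/iteration argument (their Lemma 4.8), and for extended anyons that iteration cannot be run inside a fixed $\gamma$-regime: subdividing the square halves $L$ and hence doubles $\gamma$, so after finitely many steps one leaves the large-box range. This is exactly why the paper first proves bounds for \emph{bounded} particle numbers uniformly in $\gamma$ (Propositions~\ref{lem:excllarge}, \ref{lem:medium}, \ref{lem:smallexc}), passes to $E(n,Q)=\inf_{R,m}E^R_n(Q,m)$ as in \eqref{eq:lowest}, and only then runs a single global iteration (Lemma~\ref{lem:finite}, built on Lemmas~\ref{superadd} and~\ref{scaling_prop}) to generate the factor $N$. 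Your plan of treating the three ranges separately for all $N$ and merging constants at the end is therefore missing the central step of the proof of this proposition.

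A secondary problem: in the medium regime you assert that the two-body estimates of \cite{LarLun-16} give an $\alpha$-independent bound $c\,N$ with constants depending only on $c_1,c_2$. Those estimates are proportional to $|\alpha|$ and degenerate as $\alpha\to 0$, where $|1-\alpha|\simeq 1$, so they cannot suffice by themselves; since the argument passes to $|\Psi_n|$, the fermionic antisymmetry has been discarded and must be reinstated for small $|\alpha|$. The paper does this via the continuity argument at the end of the proof of Proposition~\ref{lem:medium}, bounding the infimum $h(\alpha)$ below near $\alpha=0$ by the free-fermion Neumann problem; without some such input your medium-box bound fails in that range of $\alpha$.
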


Before distinguishing between different box sizes in order to follow the strategy explained above, we start by general considerations that will reduce the proof to bounded particle numbers. We borrow and adapt several arguments from~\cite{LunSei-18}. 

\begin{lemma}[\textbf{Scaling property of the energy on a square}]\mbox{}\label{scaling_prop}\\
	For any square $Q$ such that $\left \b Q\right \b =L^{2}$ we have the scaling property
	\begin{equation}
		E^{R}_{n}\left (Q,m\right )=\left \b Q\right \b ^{-1}E^{R/L}_{n}\left (m\right ).
		\label{scaling}
	\end{equation}
\end{lemma}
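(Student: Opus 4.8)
The plan is to exhibit an explicit unitary between $L^2_{\rm asym}(Q^n)$ and $L^2_{\rm asym}(Q_0^n)$ implementing a rescaling of coordinates, and to track how each of the three terms in the energy functional $\mathcal{E}^R_n(Q,Y_m)[\Psi_n]$ transforms. Write $Q = \bx_0 + LQ_0$ for the appropriate corner $\bx_0$ (the value of $\bx_0$ is irrelevant since everything in sight is translation-covariant, so I would just take $Q = LQ_0$ after a harmless translation). For $\Psi_n \in L^2_{\rm asym}(Q^n)$ define $\Phi_n(X_n) := L^{n}\Psi_n(LX_n)$ on $Q_0^n$; the prefactor $L^{n} = (L^2)^{n/2}$ is chosen so that $\|\Phi_n\|_{L^2(Q_0^n)} = \|\Psi_n\|_{L^2(Q^n)}$, and antisymmetry is obviously preserved. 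Likewise, exterior points $Y_m \in (Q^c)^m$ correspond to $Y_m/L \in (Q_0^c)^m$, and the inner infima over normalized antisymmetric $\Psi_n$ and the outer infima over $Y_m$ are in bijection under this change of variables.

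Next I would carry out the term-by-term computation. The key scaling facts are: $\nabla^\perp w_R(\bz) = \bz^\perp/|\bz|_R^2$ is homogeneous of degree $-1$, so $\nabla^\perp w_R(L\bz) = L^{-1}\,\bz^\perp/|\bz|_{R/L}^2 = L^{-1}\nabla^\perp w_{R/L}(\bz)$; hence $\bA^R_j(LX_n, LY_m) = L^{-1}\bA^{R/L}_j(X_n,Y_m)$. Also $(-\im\nabla_{\bx_j})$ acting in the $L\bx_j$ variable produces a factor $L^{-1}$ relative to differentiation in $\bx_j$. Therefore, after the substitution $\bx_j \mapsto L\bx_j$ (Jacobian $L^{2n}$, which combines with the $L^{2n}$ from $|\Phi_n|^2$ versus $|\Psi_n|^2$), the magnetic kinetic term picks up exactly $L^{-2} = |Q|^{-1}$, and so does the $\frac14|\nabla|\Psi_n||^2$ term. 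For the scalar term, $R^{-2}\1_{B(\bx_k,R)}(\bx_j)$ under $\bx\mapsto L\bx$ becomes $R^{-2}\1_{B(\bx_k,R/L)}(\bx_j) = L^{-2}(R/L)^{-2}\1_{B(\bx_k,R/L)}(\bx_j)$, so $V_R(LX_n,LY_m) = L^{-2}V_{R/L}(X_n,Y_m)$, and this term scales the same way. Collecting, $\mathcal{E}^R_n(Q,Y_m)[\Psi_n] = L^{-2}\,\mathcal{E}^{R/L}_n(Q_0,Y_m/L)[\Phi_n] = |Q|^{-1}\mathcal{E}^{R/L}_n(Q_0,Y_m/L)[\Phi_n]$.

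Finally I would take the double infimum on both sides: since $\Psi_n\leftrightarrow\Phi_n$ is a norm-preserving bijection of normalized antisymmetric functions and $Y_m\leftrightarrow Y_m/L$ is a bijection of $(Q^c)^m$ onto $(Q_0^c)^m$, the infima match up and yield $E^R_n(Q,m) = |Q|^{-1}E^{R/L}_n(Q_0,m) = |Q|^{-1}E^{R/L}_n(m)$, which is \eqref{scaling}. I do not anticipate a genuine obstacle here; this is a bookkeeping lemma. The only point that requires a little care is checking that all three terms of the energy scale with the \emph{same} power $L^{-2}$ — in particular that the regularization radius must be rescaled to $R/L$ for this to hold, which is precisely why the statement involves $E^{R/L}_n$ rather than $E^R_n$ — and that the measure Jacobian is correctly balanced against the normalization factor $L^n$ in the definition of $\Phi_n$.
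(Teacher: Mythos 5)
Your proposal is correct and follows essentially the same route as the paper: translate to the unit square, substitute $\bx \mapsto L\bx$ with $\Phi_n = L^n \Psi_n(L\cdot)$, use $|L\bx|_R = L|\bx|_{R/L}$ (equivalently the degree $-1$ homogeneity of $\nabla^\perp w_R$ up to rescaling $R\to R/L$) so that every term in the energy picks up the factor $|Q|^{-1}$, and then take the infima over normalized antisymmetric states and exterior configurations $Y_m \leftrightarrow Y_m/L$. Your explicit treatment of the scalar term $V_R$ is a detail the paper dispatches with ``dealt with similarly,'' but the argument is the same.
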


\begin{proof}
	We first translate the variables to work on $[0,L]^2$, with $L$ the side-length of the original square. We proceed to the change of variables $\bx \to L\bx $. The first term in~\eqref{eq:enerinout} becomes
	\begin{equation}
		\sum_{j=1}^{n}\int_{Q_{0}^{n}}\left \b\left (\frac{-\im\nabla_{\bx_j}}{L}+\alpha\bA_{j}^{R}\left (LX_n,Y_m\right )\right )\Psi\left (L\bx_{1},...,L\bx_{n}\right )\right \b^{2}\left \b Q\right \b^{n}\d\bx_{1}...\d\bx_{n}.
	\end{equation} 
	Using the definition~\eqref{AJRr}, the property $\left \b L\bx\right \b_{R}=L\left \b \bx\right \b_{R/L}$ this becomes
	\begin{equation}\label{RL}
		\frac{1}{\left \b Q\right \b}\sum_{j=1}^{n}\int_{Q_{0}^{n}}\left \b\left (-\im\nabla_{\bx_j}+\alpha\bA_{j}^{R/L}(X_n,\frac{Y_{m}}{L})\right )\Phi\left (\bx_{1},...,\bx_{n}\right )\right \b^{2}\d\bx_{1}...\d\bx_{n}
	\end{equation}
	with $\Phi\left (\bx_{1},...,\bx_{n}\right ) =L^{n}\Psi\left (L\bx_{1},...,L\bx_{n}\right )$ so that $\int_{Q_{0}^{n}}\left \b\Phi\right \b^{2}=1$. The second part of~\eqref{eq:enerinout} is dealt with similarly and we conclude by taking the infimum.
\end{proof}

Next we have the equivalent of \cite[Lemma 4.2]{LunSei-18}. 

\begin{lemma}[\textbf{Superadditivity}]\mbox{}\label{superadd}\\
	Let $Q:=\cup_{q=1}^{K}Q_{q}$ with $\left \{Q_{q}\right \}_{q=1}^{K}$ a collection of disjoint and simply connected subsets of $Q$. Let $\vec{n}\in \mathbb{N}^{K}$ such that $\sum_{q=1}^{K}n_{q}=n$. We define the potential
	\begin{equation}
		W (X_n):=\sum_{\vec{n}}\sum_{q=1}^{K}E^{R}_{n_{q}}\left (Q_{q},m+n-n_{q}\right )\mathds{1}_{\vec{n}}\left (\bx_{1},\dots,\bx_{n}\right )\nn
	\end{equation}
	where $\1_{\vec{n}}$ denote the characteristic function of the subset of $Q^{n}$ where exactly $n_{q}$ of the points $\left \{\bx_{1},...,\bx_{n} \right\}$ are in $Q_{q}$ for all $1\leqslant q\leqslant K.$
	
	We have for any $Y_{m}\in \left (Q^{c}\right )^{m}$ and any normalized $\Psi_{n}\in L^{2}_{\mathrm{asym}}\left (Q^{n}\right )$ that
	\begin{equation}\label{modif_diam}
	\mathcal{E}^{R}_{n}\left (Q,Y_m\right )\left [\Psi_n\right ] \geqslant 	\int_{Q ^{n}}W\left (X_{n}\right )\left \vert\Psi_{n}\right \vert^{2}\d \bx_{1}\dots\d \bx_{n}.
	\end{equation}
	In particular
	\begin{equation}
		E^{R}_{n}\left (Q,m\right )\geqslant \min_{\vec{n}}\sum_{q=1}^{K}	E^{R}_{n_{q}}\left (Q_{q},m+n-n_{q}\right ).
		\label{division}
	\end{equation}
\end{lemma}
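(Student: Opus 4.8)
The plan is to establish the pointwise bound on the energy density $e(\Psi_n, Y_m)$ that underlies \eqref{modif_diam}, and then to extract \eqref{division} by bounding $W$ below by its minimum over $\vec n$. The starting observation is that the partition $Q = \cup_{q=1}^K Q_q$ induces, for each configuration $X_n = (\bx_1,\dots,\bx_n) \in Q^n$, a decomposition of the index set $\{1,\dots,n\}$ into the groups $I_q = \{ j : \bx_j \in Q_q\}$, with $|I_q| = n_q$ and $\sum_q n_q = n$. This decomposition is locally constant on the interior of each stratum $\{\1_{\vec n} = 1\}$, so it is harmless to work on a fixed stratum.

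First I would show that the energy density $e(\Psi_n, Y_m)$ defined in \eqref{eq:enerinout} splits as a sum over the groups $I_q$ of ``intra-group'' densities, plus nonnegative cross terms. The kinetic term $\tfrac12 \sum_j |D^R_{\bx_j}(Y_m)\Psi_n|^2$ and the diamagnetic term $\tfrac14\sum_j |\nabla_{\bx_j}|\Psi_n||^2$ are already diagonal sums over $j$; for $j \in I_q$, one rewrites $\bA^R_j(X_n,Y_m)$ by splitting the sum over $k \neq j$ into $k \in I_q$ (the ``internal'' flux of $Q_q$) and $k \notin I_q$, absorbing the latter $k \in I_{q'}$, $q'\neq q$, together with the original $Y_m$ contributions into an effective external field: the point is that those particles play exactly the role of ``$m + n - n_q$ particles in $Q_q^c$'' for the group living in $Q_q$. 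Since all $Q_q$ are disjoint and the $\by_k$ lie in $Q^c \subset Q_q^c$, this identification is legitimate. The scalar potential $V_R$ in \eqref{eq:scalbox} is a sum of nonnegative pair terms, so dropping the cross-group pairs only decreases the energy; what remains for group $q$ is bounded below, after freezing the positions of the particles outside $Q_q$ as the worst-case external configuration $Y_{m+n-n_q}$, by $E^R_{n_q}(Q_q, m+n-n_q)$ times the mass of $\Psi_n$ in that stratum. Summing over $q$ and over strata $\vec n$ gives precisely $\int_{Q^n} W |\Psi_n|^2$, which is \eqref{modif_diam}.

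The passage from \eqref{modif_diam} to \eqref{division} is then immediate: $W(X_n) \geq \min_{\vec n} \sum_{q=1}^K E^R_{n_q}(Q_q, m + n - n_q)$ pointwise (the sum collapses to a single $\vec n$ on each stratum, bounded below by the minimum), and $\int_{Q^n}|\Psi_n|^2 = 1$.

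The main obstacle, and the step requiring care rather than routine manipulation, is the ``gauge bookkeeping'' in the first step: one must check that restricting the magnetic Schrödinger energy of $\Psi_n$ on $Q_q^{n_q}$ — with the particles outside $Q_q$ held fixed and viewed as external flux sources — genuinely dominates the infimum defining $E^R_{n_q}(Q_q, m+n-n_q)$, given that $\Psi_n$ restricted to $Q_q$ is not normalized and not, a priori, antisymmetric in a way compatible with the single-group definition. Antisymmetry is inherited because $\Psi_n \in L^2_{\mathrm{asym}}$ is in particular antisymmetric in the variables of each fixed group $I_q$; the normalization is handled by the stratum mass $\int_{\{\1_{\vec n}=1\}} |\Psi_n|^2 \prod_{j \notin I_q} \d\bx_j$ appearing as a weight, exactly as in \cite[Lemma~4.2]{LunSei-18}. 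One also needs the $Q_q$ to be simply connected so that the Neumann problem on $Q_q$ is the relevant comparison object; this is built into the hypothesis.
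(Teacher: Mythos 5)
Your overall route is the same as the paper's (which itself follows \cite[Lemma 4.2]{LunSei-18}): stratify $Q^n$ according to which particles lie in which $Q_q$, note that for a fixed stratum the particles outside $Q_q$ (the other groups together with $Y_m$, all located in $Q_q^c$ by disjointness) can be viewed as an admissible frozen external configuration for the Neumann problem on $Q_q$, invoke the infimum defining $E^R_{n_q}(Q_q,m+n-n_q)$ with the stratum mass as weight (antisymmetry in each group's variables being inherited from $\Psi_N\in L^2_{\rm asym}$), and then pass to \eqref{division} by bounding $W$ by its minimum and using normalization. All of this matches the paper's proof.

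One step, as you wrote it, does not go through: for the scalar potential you propose to \emph{drop} the cross-group pairs of $V_R$ and then bound what remains for group $q$ below by $E^R_{n_q}(Q_q,m+n-n_q)$ times the stratum mass. But $E^R_{n_q}(Q_q,m+n-n_q)$ is the infimum of the \emph{full} functional \eqref{eq:enerinout}, whose $V_R$-part contains the interaction of the internal particles with all external ones (the second sum in \eqref{eq:scalbox}); after discarding the cross-group terms, your group-$q$ energy is the sub-problem functional \emph{minus} nonnegative terms, and such a quantity need not dominate the infimum of the full functional. The remedy is not to drop but to reassign: in \eqref{eq:scalbox} each ordered cross pair $\1_{B(\bx_k,R)}(\bx_j)$ with $j\in I_q$, $k\notin I_q$ is exactly the internal--external term required by the sub-problem on $Q_q$ in which $j$ is internal and $k$ is one of the frozen external sources, and the pairs involving $Y_m$ distribute likewise, so on each stratum the scalar part of the energy decomposes exactly as the sum over $q$ of the sub-problems' scalar parts — nothing needs to be discarded. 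With this bookkeeping corrected (you already did precisely the analogous reassignment for the vector potential $\bA^R_j$), your argument coincides with the paper's.
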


\begin{proof}	
	For any $Y_{m}\in \left (Q^{c}\right )^{m}$ and a normalized $\Psi_{n}\in L^{2}_{\mathrm{asym}}\left (Q^{n}\right )$ we have, with $e_{j}$ defined in \eqref{eq:enerinout}, that

	\begin{align*}
			\mathcal{E}^{R}_{n}\left (Q,Y_m\right )\left [\Psi_n\right ]&=
		\sum_{j=1}^{n}\int_{Q ^{n}}e_{j}(\Psi_{n},Y_m)\d \bx_{1}\dots\d \bx_{n}\\
		&=\sum_{\left \{A_{k}\right \}}\int_{Q_{1}^{\b A_{1}\b}}\int_{Q_{2}^{\b A_{2}\b}}...\int_{Q_{K}^{\b A_{K}\b}}\sum_{q=1}^{K}\sum_{j\in A_{q}}e_{j}(\Psi_{n},Y_m) \d \bx_{1}\dots\d \bx_{n}\\
		&\geqslant\sum_{\left \{A_{k}\right \}}\sum_{q=1}^{K}E^{R}_{\b A_{q}\b }\left( Q_{q},m+n-\b A_{q}\b\right)\int_{Q_{1}^{\b A_{1}\b}}\int_{Q_{2}^{\b A_{2}\b}}...\int_{Q_{K}^{\b A_{K}\b}}\left| \Psi_{n}\right|^{2}\d \bx_{1}\dots\d \bx_{n}\nonumber\\
			&=\sum_{\vec{n}}\sum_{q=1}^{K}E^{R}_{n_{q} }\left( Q_{q},m+n-n_{q}\right)\int_{Q^{n}}\mathds{1}_{\vec{n}}\left \b \Psi_{n}\right \b^{2}\d \bx_{1}\dots\d \bx_{n}\\
			&=\int_{Q ^{n}}W\left (X_{n}\right )\left| \Psi_{n}\right|^{2}\d \bx_{1}\dots\d \bx_{n}.
	\end{align*}
	Here the sum over $\left \{A_{k}\right \}$ runs over all partitions of the particles into the sets $Q_{q}$, $\mathrm{i.e}$, over collections of disjoint subsets $A_{k}\subset \left \{1,2,\dots ,N\right \}$ such that $\b A_{1}\b +\dots +\b A_{K}\b =n$. We used the fact that (by the definition of $E_{n}^{R}\left (Q,m\right )$) we have for almost every $\left \{\bx_{l}\right \}_{l\in A_{q}^{c}}\in\left (Q_{q}^{c}\right )^{n-n_{q}}$ that
	\begin{equation}
	\sum_{j\in A_{q}}\int_{Q_{q}^{\vert A_{q}\vert}}e_{j}\left (\Psi_{n},Y_{m}\right )\d X_{A_{q}}\geq E^{R}_{\vert A_{q}\vert}\left (Q_{q},m+n-n_{q}\right )\int_{Q_{q}^{\vert A_{q}\vert}}\left \vert \Psi_{n}\right \vert^{2}\d X_{A_{q}}
	\end{equation}
where we denoted
$$ \d X_{A_{l}}=\prod_{j\in A_{l}}\d \bx_{j}.$$
This proves the first bound of the lemma and the second follows by bounding $W$ from below by its smallest value.
\end{proof}

Now we reduce to bound with unifnormly bounded particle numbers.
%
%
Denote 
\begin{equation}\label{eq:lowest}
E (n,Q) := \inf_{R>0,m  \in \mathbb{N}} E^{R}_{n}\left(Q,m\right).  
\end{equation}

\begin{lemma}[\textbf{Reduction to finite particle numbers}]\label{lem:finite}\mbox{}\\
Let $\overline{N} = 4^k$ for some integer $k\geq 1$ and $N \geq \overline{N} $. Assume that $E(n,Q) >0$ for all $4^{k-1}+1 \leq n\leq 4N$. There exists a constant $C_k>0$, independent of $N$, such that
\begin{equation}\label{eq:reduc finite}
E (N,Q) \geq C_k N \min\left(E(4^{k-1}+1,Q), \ldots, E(4^k,Q)\right). 
\end{equation}
\end{lemma}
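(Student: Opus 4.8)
\section*{Proof proposal}

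My approach is to first collapse the dependence on the box $Q$ using the scaling property (Lemma~\ref{scaling_prop}). Since $E^{R}_{n}(Q,m)=|Q|^{-1}E^{R/L}_{n}(m)$ for a square of side $L$, taking the infimum over $R$ and $m$ gives $E(n,Q)=|Q|^{-1}e(n)$ with $e(n):=E(n,Q_{0})=\inf_{R>0,\,m\geq 0}E^{R}_{n}(m)$, a quantity that no longer refers to $Q$. Hence the factor $|Q|^{-1}$ cancels on both sides of~\eqref{eq:reduc finite}, and it suffices to prove: if $e(n)>0$ for all integers $4^{k-1}<n\leq N$, then $e(N)\geq 4^{-k}N\mu$, where $\mu:=\min_{4^{k-1}<n\leq 4^{k}}e(n)$. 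This yields~\eqref{eq:reduc finite} with $C_{k}=4^{-k}$.

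The engine is a quadtree recursion. Decompose $Q_{0}=[0,1]^{2}$ into its four congruent subsquares of side $1/2$; by superadditivity (Lemma~\ref{superadd}) together with scaling — each subsquare has area $1/4$, so $E^{R}_{n_{q}}(Q_{q},\cdot)=4E^{2R}_{n_{q}}(\cdot)\geq 4e(n_{q})$ — one obtains, for every integer $n\geq 0$,
\[ e(n)\;\geq\;4\,\min\Bigl\{\,\sum_{q=1}^{4}e(n_{q})\;:\;n_{q}\geq 0,\ \sum_{q=1}^{4}n_{q}=n\,\Bigr\}. \]
I will also use the elementary facts $e(0)=0$ and $e(n)\geq 0$ for all $n$ (the energy density in~\eqref{eq:enerinout} is a sum of three manifestly nonnegative terms). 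The key remark is that when $e(n)>0$ the minimising split cannot be a trivial one, $(n,0,0,0)$ up to permutation — otherwise $e(n)\geq 4e(n)$, impossible — so the recursion may be run over splits in which every part is $\leq n-1$; this is what makes the induction below well founded.

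Now run a strong induction on $N\geq 4^{k}$. The base case $N=4^{k}$ holds since $e(4^{k})\geq\mu=4^{-k}\cdot 4^{k}\mu$ ($4^{k}$ lies in the range defining $\mu$). For $N>4^{k}$ the hypothesis gives $e(N)>0$, so the recursion reads $e(N)\geq 4\sum_{q=1}^{4}e(n_{q}^{*})$ for some split $n_{1}^{*}+\dots+n_{4}^{*}=N$ with every $n_{q}^{*}\leq N-1$. Call a part \emph{light} if $n_{q}^{*}\leq 4^{k-1}$, \emph{heavy} if $4^{k-1}<n_{q}^{*}\leq 4^{k}$, \emph{big} if $n_{q}^{*}>4^{k}$. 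A light part contributes $e(n_{q}^{*})\geq 0$; a heavy part contributes $e(n_{q}^{*})\geq\mu\geq 4^{-k}n_{q}^{*}\mu$; a big part, having $4^{k}\leq n_{q}^{*}<N$, contributes $e(n_{q}^{*})\geq 4^{-k}n_{q}^{*}\mu$ by the induction hypothesis. Since the light parts together carry at most $4\cdot 4^{k-1}=4^{k}$ particles, summing the bound $4^{-k}n_{q}^{*}\mu$ over heavy and big parts gives $\sum_{q}e(n_{q}^{*})\geq 4^{-k}\mu(N-4^{k})$; and since $N>4^{k}$ forces at least one non-light part (pigeonhole), also $\sum_{q}e(n_{q}^{*})\geq\mu$. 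Therefore $e(N)\geq 4\max\!\bigl(\mu,\,4^{-k}\mu(N-4^{k})\bigr)$, and an elementary check (the first term dominates when $N\leq\tfrac{4}{3}4^{k}$, the second when $N\geq\tfrac{4}{3}4^{k}$) shows the right-hand side is $\geq 4^{-k}N\mu$. This closes the induction and gives the claim with $C_{k}=4^{-k}$.

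The whole argument is essentially bookkeeping once the recursion is in place, and the only genuinely delicate point is the one highlighted in the second paragraph: without the hypothesis $e(n)>0$ the quadtree recursion degenerates (a trivial split forces $e(n)=0$), so it is precisely the positivity on the range $4^{k-1}+1\leq n\leq N$ that makes the induction well founded. The secondary nuisance — that the linear-in-$N$ bound $4^{-k}\mu(N-4^{k})$ is too weak for $N$ just above $4^{k}$ — is disposed of by the complementary estimate $\sum_{q}e(n_{q}^{*})\geq\mu$ and is routine.
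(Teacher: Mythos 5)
Your proof is correct and follows essentially the same route as the paper: reduce to the unit square by scaling, apply the four-subsquare superadditivity recursion (the infimum over $R$ and $m$ absorbing both the rescaled radius and the particles outside), and use the positivity hypothesis to rule out the trivial split and iterate the factor $4$ into linear growth in $N$. The only difference is bookkeeping: the paper follows the Lundholm--Seiringer dyadic-block iteration $e_n\geq 4e_{n-1}$ (treating $N=4^l$ ``for simplicity''), whereas your strong induction with the light/heavy/big classification handles arbitrary $N\geq 4^k$ directly and produces the explicit constant $C_k=4^{-k}$.
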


The assumption $E(n,Q) >0$ for all $4^{k-1}+1 \leq n\leq 4N$ is only temporary. Its validity will follow from the considerations in the next subsections.

\begin{proof}
 This follows exactly the proof of~\cite[Lemma~4.8]{LunSei-18}. In view of Lemma~\ref{scaling_prop} we work on the unit square and drop $Q$ from the notation. Splitting it into four equally large squares and using the superadditivity of Lemma~\ref{superadd} and the scaling property of Lemma~\ref{scaling_prop} we have 
 $$ E(N) \geq 4 \min_{\vec{n}} \sum_{q=1} ^4 E(n_q).$$
 It is here important that we have taken the infimum over $R$ and $m$ in~\eqref{eq:lowest}. At least one of the squares must contain $N/4$ particles, hence, dropping the other terms, we obtain 
 \begin{equation}\label{eq:iterate}
  E(N) \geq 4 \min_{j = \lfloor \frac{N}{4} \rfloor + 1 , \ldots ,N} E(j).
 \end{equation}
  Denoting
 $$ e_i:= \min\left\{ E (j), 4^{i} +1 \leq j \leq 4^{i+1} \right\}$$
 we obtain from~\eqref{eq:iterate} that 
 $$ e_i \geq 4 \min (e_i,e_{i-1})$$
 and, since our assumption implies $e_i > 0$, 
 \begin{equation}\label{ine:en}
 	e_i \geq 4 e_{i-1}.
 \end{equation}
 We treat the case $N= 4^l$ for $l$ integer for simplicity (the generalization is straightforward, as in~\cite[Lemma~4.8]{LunSei-18}). We then have
 \begin{equation}
 E\left (N\right )\geq 4e_{l-1}\nn
 \end{equation}
 and iterating \eqref{ine:en} a finite number of times, we deduce
 $$ E(N) \geq 4^{l-k} \min\left(E(4^{k-1}+1), \ldots, E(4^k)\right) \geq C_k N \min\left(E(4^{k-1}+1), \ldots, E(4^k)\right)$$
 where $C_k$ depends only on $k$, thus on $\overline{N}=4^{k}$.
\end{proof}

\subsection{Large boxes}\label{sec:large}

Our exclusion principle on large boxes reads as follows: 

\begin{proposition}[\textbf{Exclusion principle on large boxes}]\mbox{}\label{lem:excllarge}\\
	There exist $0 < c_1 < \frac{1}{12}$, a natural number $\overline{N}>0$ and $C>0$ independent of $R$, $\alpha$, $Q$, $m$ and $n$ such that, for any $\gamma = R/L \leq c_1 $ and $2 \leq n\leq \overline{N}$ 
	\begin{equation}
		E^{R}_{n}\left (Q,m\right )\geqslant \frac{C \b 1- \alpha \b}{\b Q\b}.	
		\label{eq:gage1}
	\end{equation}
	
\end{proposition}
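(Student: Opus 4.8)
\textbf{Plan of proof for Proposition \ref{lem:excllarge}.}
The guiding idea is that for $\gamma = R/L \ll 1$ the extended nature of the anyons is a harmless perturbation, so we want to reduce to the ideal-anyon situation and borrow the local exclusion mechanism of \cite{LunSei-18,LarLun-16}. After rescaling to the unit square $Q_0$ via Lemma \ref{scaling_prop}, we need a lower bound on $E^{\gamma}_n(m)$ that is uniform in $m$ and, crucially, strictly positive whenever $2 \le n \le \overline N$. The first step is to dispose of the $m$ exterior particles. In \cite{LunSei-18} the flux of the exterior particles can be gauged away for free because, on the simply connected box $Q_0$, the vector potential $\sum_k \nabla^\perp w_0(\bx_j - \by_k)$ with all $\by_k \in Q_0^c$ is a gradient (the function $\bx \mapsto \sum_k w_0(\bx-\by_k)$ being harmonic with no singularity inside $Q_0$, its conjugate harmonic $\sum_k \arg(\bx-\by_k)$ is well defined and single-valued on $Q_0$). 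For $R>0$ this exact gauge argument fails because $\nabla^\perp w_R(\bx-\by_k)$ differs from $\nabla^\perp w_0(\bx-\by_k)$ on $B(\by_k,R)$, which may intersect $Q_0$. The plan is to write $\bA^R_j(X_n,Y_m) = \bA^0_j(X_n,Y_m) + \mathbf{err}$, absorb the $Y_m$-part of $\bA^0_j$ into a unitary gauge transformation $U = \exp\bigl(\im\alpha \sum_{j,k}\arg(\bx_j - \by_k)\bigr)$ on $L^2_{\rm asym}(Q_0^n)$, and control the error term.

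Concretely, after the gauge transformation the kinetic part becomes $\tfrac12\sum_j |(-\im\nabla_{\bx_j} + \alpha\tilde\bA_j)\Psi|^2$ with $\tilde\bA_j = \sum_{k\ne j}\nabla^\perp w_R(\bx_j-\bx_k) + \sum_k\bigl(\nabla^\perp w_R - \nabla^\perp w_0\bigr)(\bx_j-\by_k)$, and the remaining vector potential involving the exterior particles is supported in $\bigcup_k B(\by_k,R)\cap Q_0$. The second step is to estimate this error. Using $|\nabla^\perp w_R(\bz) - \nabla^\perp w_0(\bz)| \le C/|\bz|\,\1_{|\bz|\le R}$ (both terms bounded by $1/|\bz|$ there), together with the positive term $\tfrac14 V_R(X_n,Y_m) \ge \tfrac14\cdot\tfrac{2}{R^2}\sum_{j,k}\1_{B(\by_k,R)}(\bx_j)$ in the energy density $e(\Psi_n,Y_m)$, and the diamagnetic inequality to pass to $|\Psi_n|$, one bounds the cross terms $2\alpha\,\mathrm{Re}\langle (-\im\nabla)\Psi, \mathbf{err}\,\Psi\rangle$ by a Cauchy--Schwarz/completing-the-square argument: the self-interaction term $V_R$ controls $\|\mathbf{err}\,\Psi\|^2$ on $B(\by_k,R)$ up to a factor $\sim R^2 \cdot R^{-2} = O(1)$, which needs to be made genuinely small. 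This is where $\gamma = R/L \le c_1$ enters: on the unit box the relevant smallness parameter is $R/L = \gamma$, and after localizing with a partition near the $\by_k$'s the error is $O(\gamma)$ relative to the positive terms, so choosing $c_1$ small absorbs it. The outcome is a bound of the form $E^\gamma_n(m) \ge \tfrac12 E^{0}_n - C\gamma(\cdots)$, reducing to the zero-radius box energy with exterior particles gauged away entirely.

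The third step is to invoke the ideal-anyon local exclusion: for $n \ge 2$ particles of non-trivial statistics on a box with no exterior flux, $E^0_n > 0$ — this is exactly the content behind Theorem \ref{Ltida}, and more precisely the hard-core / local-exclusion estimates of \cite{LunSei-18} (or \cite{LarLun-16} for the diamagnetic two-body route), which give $E^0_n \ge c|1-\alpha|$ for $2\le n \le \overline N$ with $c>0$. Since $2 \le n \le \overline N$ ranges over finitely many values, we get a uniform positive lower bound; choosing $c_1$ small enough that the $O(c_1)$ error is at most half of it yields \eqref{eq:gage1}. The main obstacle, and the place requiring genuine care, is step two: the naive bound on $\|\mathbf{err}\,\Psi\|^2_{L^2(B(\by_k,R))}$ against the Coulomb self-energy $V_R$ comes with an $O(1)$ constant, not an $O(\gamma)$ one, so the smallness must be extracted more carefully — e.g. by also using a fraction of the genuine kinetic term $\tfrac14\|\nabla|\Psi|\|^2$ to run a Hardy/Poincaré estimate on the thin sets $B(\by_k,R)\cap Q_0$, whose measure is $O(R^2) = O(\gamma^2)$, thereby gaining the needed powers of $\gamma$. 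Handling the possible overlap of several balls $B(\by_k,R)$ and the case where many $\by_k$ cluster just outside $\partial Q_0$ (so that their combined flux through $Q_0$ is not small) is the subtle point; the bounded number of particles $n \le \overline N$ limits the interior contributions, but the exterior $m$ is arbitrary, so one must genuinely exploit that the $\by_k$ lie in $Q_0^c$ and that $w_R - w_0$ is supported in a ball of radius $R$.
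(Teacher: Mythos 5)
Your overall architecture (rescale to the unit box, neutralize the exterior particles by a gauge transformation, then invoke a few-body exclusion bound) is in the spirit of the paper, but the pivotal second step of your plan contains a genuine gap which you flag yourself without resolving, and it is fatal as stated. After gauging away the ideal part of the exterior potential, your error field $\sum_k(\nabla^\perp w_R-\nabla^\perp w_0)(\bx_j-\by_k)$ enters the energy \emph{quadratically}: if $m$ exterior particles cluster near one boundary point $\by$, the error is of size $m/|\bx_j-\by|$ on $B(\by,R)\cap Q_0$ and its square scales like $m^2$, whereas the positive terms you propose to absorb it with (the term $\tfrac14 V_R$, which scales like $m$, and a fraction of $\|\nabla|\Psi_n|\|^2$, which is independent of $m$) cannot keep up. No choice of Cauchy--Schwarz weights, and no Hardy/Poincar\'e gain in powers of $\gamma$ on the thin sets, can make this absorption uniform in $m$ --- and uniformity in $m$ is exactly what the proposition demands. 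A secondary mismatch: after your gauge transformation the \emph{interior} interaction still has radius $R$, so what you reduce to is not the ideal-anyon energy $E^0_n$ but the extended-anyon box energy; quoting the ideal-anyon bound of \cite{LunSei-18} would require yet another comparison argument that your plan does not supply.

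The paper's proof avoids the error term altogether, by a different two-stage device. First, Lemma~\ref{lem:ap_E2} reduces the $n$-body energy to the two-body energy $E^{2R}_2(m+n-2)$ by splitting the box into four sub-squares, superadditivity, the diamagnetic inequality and a projection-onto-constants argument, so no $n$-body ideal bound is ever needed. Second, in Lemma~\ref{lem:toe2} one does not expand around $R=0$ on all of $Q$: one restricts the first particle to $S=[2R,L-2R]^2$, where $|\bx_j-\by_k|\geq 2R$ for every exterior particle, so that the regularized and ideal exterior potentials coincide \emph{exactly} and are gauged away with zero error, uniformly in $m$. The price is a corridor $\Omega=Q\setminus S$ with $|\Omega|\leq 8LR$, handled by a dichotomy: if little mass lies in the corridor, a fixed fraction of $E^R_2(Q)$ survives; if a definite fraction of the mass lies in the corridor, a Poincar\'e--Sobolev inequality forces a kinetic energy of order $|Q|^{-1}$ because the corridor is thin when $\gamma\leq c_1$. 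Finally the clean two-body energy (no exterior particles, interior radius still $R$) is bounded below by $C_{\gamma_1}|1-\alpha|$ via the extended-anyon estimate of \cite{LarLun-16} (Lemma~\ref{Lem:E2}), not via the ideal-anyon theorem. To rescue your approach you would need some substitute for this corridor mechanism, i.e.\ a way to exploit that the $\by_k$ lie outside $Q$ which does not pass through a perturbative expansion around $R=0$, since such an expansion cannot be uniform in the number of exterior particles.
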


The proof is inspired from~\cite{LunSei-18}. It occupies the rest of the subsection and requires several preparatory lemmas. 

\begin{lemma}[\textbf{Upper-bound for the two-particles energy on the unit square}]\mbox{}\label{lem:upp}\\   The two-particles energy on the unit square can be bounded by a constant $C$ which does not depend on $R$, $m$ and $\alpha$,
	\begin{equation}
		E^{R}_{2}\left (m\right )\leqslant C\nn
	\end{equation}

\end{lemma}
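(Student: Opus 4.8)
The plan is to use that $E^{R}_{2}(m)$ is defined by a double infimum, so that it suffices to exhibit \emph{one} exterior configuration $Y_{m}\in(Q_{0}^{c})^{m}$ and \emph{one} admissible trial state $\Psi_{2}$ for which $\mathcal{E}^{R}_{2}(Q_{0},Y_{m})[\Psi_{2}]$ is bounded by a constant depending on neither $R$, nor $m$, nor $\alpha$. As trial state I would fix, once and for all, a smooth antisymmetric $L^{2}(Q_{0}^{2})$-normalized $\Psi_{2}$ (for instance the Slater determinant of two smooth orthonormal functions on $Q_{0}$). The quantities $M_{1}:=\|\Psi_{2}\|_{L^{\infty}}$ and $M_{2}:=\sum_{j=1}^{2}\int_{Q_{0}^{2}}|\nabla_{\bx_{j}}\Psi_{2}|^{2}$ are then finite, and --- this is the property that will matter --- since $\Psi_{2}$ is antisymmetric it vanishes on the diagonal $\{\bx_{1}=\bx_{2}\}$, so that by the fundamental theorem of calculus along segments in the convex set $Q_{0}^{2}$ one gets the pointwise bound $|\Psi_{2}(\bx_{1},\bx_{2})|\leq M_{3}\,|\bx_{1}-\bx_{2}|$ with $M_{3}:=\|\nabla\Psi_{2}\|_{L^{\infty}}$.

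For the exterior configuration I would assume $m\geq 1$ (if $m=0$ all the terms below involving $Y_{m}$ are simply absent, and the argument is identical and easier) and place the external particles far away, e.g. $\by_{k}:=(R+m+1+k,0)\in Q_{0}^{c}$ for $k=1,\dots,m$. Then for every $\bx\in Q_{0}$ one has $|\bx-\by_{k}|\geq R+m+k>R$, with two consequences: first, $\1_{B(\by_{k},R)}(\bx)=0$, so the external particles contribute nothing to $V_{R}$ on $Q_{0}^{2}$; second, $\nablap w_{R}$ is evaluated outside its regularisation region, there being equal to $\bz\mapsto\bz^{\perp}/|\bz|^{2}$, whence $\big|\sum_{k=1}^{m}\nablap w_{R}(\bx-\by_{k})\big|\leq\sum_{k=1}^{m}|\bx-\by_{k}|^{-1}\leq\sum_{k=1}^{m}(R+m+k)^{-1}<1$, uniformly in $\bx$, $R$ and $m$.

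It then remains to bound $\mathcal{E}^{R}_{2}(Q_{0},Y_{m})[\Psi_{2}]$ term by term. For the magnetic part I would use $|(-\im\nabla+\alpha\bA)\Psi|^{2}\leq 2|\nabla\Psi|^{2}+2\alpha^{2}|\bA|^{2}|\Psi|^{2}$ together with $\alpha\leq 2$ to reduce matters to controlling $\int_{Q_{0}^{2}}|\bA^{R}_{j}(X_{2},Y_{m})|^{2}|\Psi_{2}|^{2}$; splitting $\bA^{R}_{j}$ into the self part $\nablap w_{R}(\bx_{1}-\bx_{2})$ and the external part (bounded by $1$ above), and combining $|\nablap w_{R}(\bx_{1}-\bx_{2})|\leq|\bx_{1}-\bx_{2}|^{-1}$ with the diagonal bound $|\Psi_{2}|\leq M_{3}|\bx_{1}-\bx_{2}|$, one gets $\int_{Q_{0}^{2}}|\nablap w_{R}(\bx_{1}-\bx_{2})|^{2}|\Psi_{2}|^{2}\leq M_{3}^{2}|Q_{0}|^{2}=M_{3}^{2}$, hence a magnetic contribution $\leq M_{2}+C(M_{3}^{2}+1)$. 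The diamagnetic kinetic term $\tfrac14\sum_{j}\int|\nabla_{\bx_{j}}|\Psi_{2}||^{2}$ is $\leq\tfrac14 M_{2}$ by the pointwise diamagnetic inequality. Finally, in $\tfrac14\int_{Q_{0}^{2}}V_{R}|\Psi_{2}|^{2}$ the external part vanishes by the choice of $Y_{m}$, while the self part equals $\tfrac1{R^{2}}\int_{Q_{0}^{2}}\1_{|\bx_{1}-\bx_{2}|<R}|\Psi_{2}|^{2}\leq M_{3}^{2}\,R^{-2}\int_{Q_{0}^{2}}\1_{|\bx_{1}-\bx_{2}|<R}|\bx_{1}-\bx_{2}|^{2}\leq M_{3}^{2}$. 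Summing the three contributions gives $E^{R}_{2}(m)\leq\mathcal{E}^{R}_{2}(Q_{0},Y_{m})[\Psi_{2}]\leq C$ with $C$ a universal constant, as claimed.

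The only genuinely delicate point --- and thus the step I expect to be the main obstacle --- is the self-interaction term of the magnetic energy: $|\nablap w_{R}(\bx_{1}-\bx_{2})|^{2}$ is controlled only by $|\bx_{1}-\bx_{2}|^{-2}$, which fails to be locally integrable in two dimensions, so no bound can be obtained with a generic trial function. This is exactly what forces the use of a trial state that vanishes (at least linearly) on the diagonal, and the antisymmetry of a fermionic two-body function provides precisely that cancellation. The placement of the external particles is the other thing one must get right, but it is elementary: pushing them past distance $R$ from $Q_{0}$ simultaneously annihilates their $\curl$-contribution to $V_{R}$ and desingularises $\nablap w_{R}$ there, and the extra factor $m$ in the denominator of $\sum_{k}(R+m+k)^{-1}$ keeps the external vector potential bounded uniformly in the particle number.
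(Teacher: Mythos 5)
Your proposal is correct and follows essentially the same route as the paper: a variational upper bound with a fixed antisymmetric trial state whose linear vanishing on the diagonal cancels the non-integrable $|\bx_1-\bx_2|^{-2}$ singularity of $|\nabla^{\perp}w_R|^2$, combined with the uniform bound $|\nabla^{\perp}w_R(\bz)|\leq |\bz|^{-1}$ (the paper simply takes the explicit state $(x_1^1-x_2^1)+(x_1^2-x_2^2)$ instead of a generic Slater determinant). Your treatment is in fact slightly more complete, since you explicitly exploit the infimum over $Y_m$ by placing the exterior particles far away and you bound the $V_R$ and $\nabla|\Psi|$ terms of $\mathcal{E}^{R}_{2}(Q_0,Y_m)$ directly, points the paper's proof passes over in silence.
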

\begin{proof}
	We can construct a fermionic trial state on the unit cube $Q_{0}$ by setting  
	\begin{equation}
		\psi_{2}\left (\bx_{1},\bx_{2}\right ):=\left (x_{1}^{1}-x_{2}^{1}\right )+\left (x_{1}^{2}-x_{2}^{2}\right )\nn
	\end{equation}
	with the notation $\bx_{j}=\left (x_{j}^{1},x_{j}^{2}\right )$.
	We then have 
	\begin{align}
		\norm{\psi_{2}}_{L^{2}\left (Q_{0}^{2}\right )}E^{R}_{2}\left (m\right )&\leq \mathcal{E}^{R}_{2}\left [	\psi_{2}\right ]=2\int_{Q_{0}^{2}}\left \b -\im\nabla_{\bx_{1}}\psi_{2}+\alpha\frac{\left ( \bx_{1}-\bx_{2}\right )^{\perp}}{\b  \bx_{1}-\bx_{2}\b_{R}^{2}}\psi_{2} \right  \b^{2}\d \bx_{1}\d \bx_{2}\nn\\
		&= 2\int_{Q_{0}^{2}}\left (\left \b \nabla_{\bx_{1}}\psi_{2}\right  \b^{2} + \alpha^{2}\frac{\b \psi_{2}\b^{2} }{\b  \bx_{1}-\bx_{2}\b^{2}}\right )\d \bx_{1}\d \bx_{2}\nn\\
		&\leq 8\nn
	\end{align}
	where we expanded the square using that $\psi_{2}$ is real and applied both that $\b\alpha\b \leq 1$ and that $\b  \bx_{1}-\bx_{2}\b_{R}\geq \b\bx_{1}-\bx_{2}\b$.
\end{proof}

A key step is to prove a lower bound on the energy of $n$ particles in terms of the energy with exactly two particles. The basic idea is simple: if the ground state for $n$ particles is nearly constant, it assigns non-zero probability to the event ``a subsquare contains exactly two particles''. If the ground state is not nearly constant, it must come with some positive kinetic energy.

\begin{lemma}[\textbf{A priori bounds in terms of $	E^{R}_{2}$}]\mbox{}\label{lem:ap_E2}\\  
	There exist two constants $C_{1}$ and $C_{2}$ independent of $R$, $\alpha$, $n$ and $m$ such that
	\begin{equation}\label{eq:red}
		E^{R}_{n}\left (m\right )\geqslant \frac{C_{n}}{C_{1}+C_{2}C_{n}}	E^{2R}_{2}\left (m+n-2\right )
	\end{equation}
	where $C_{n}=\binom{n}{2}\left (\frac{3}{4}\right )^{n-2}$.
\end{lemma}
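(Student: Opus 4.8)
The plan is to combine a crude lower bound keeping only the gradient part of the energy with the superadditivity Lemma~\ref{superadd} applied to a dyadic splitting, and then interpolate between them. Fix $n$ and let $\Psi$ (antisymmetric, normalized on $Q_0^n$) together with exterior positions $Y=Y_{m+n-2}$ be a (near-)minimizer for $E^R_n(m)$ — by scaling (Lemma~\ref{scaling_prop}) we may work on $Q_0=[0,1]^2$. From~\eqref{eq:enerinout}, discarding the magnetic and scalar terms,
\begin{equation}\label{plan:lap}
E^R_n(m) \;\geq\; \frac14 \sum_{j=1}^n \int_{Q_0^n} \bigl|\nabla_{\bx_j}|\Psi|\bigr|^2 \,\d X_n \;=:\; \frac14\, T[\Psi].
\end{equation}
On the other hand, split $Q_0$ into the four congruent subsquares $Q_1,\dots,Q_4$ of side $1/2$. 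Applying Lemma~\ref{superadd} and then Lemma~\ref{scaling_prop} on each $Q_q$ (side $1/2$, so $R\mapsto 2R$ and a prefactor $|Q_q|^{-1}=4$), and using that $E^{2R}_{j}(\cdot)\geq 0$ for every $j$ (the energy density in~\eqref{eq:enerinout} is a sum of nonnegative terms), we get $W(X_n)\geq 4\,E^{2R}_2(m+n-2)\, N_2(X_n)$, where $N_2(X_n)=\#\{q:\ \text{exactly }2\text{ of the }\bx_i\text{ lie in }Q_q\}$; hence
\begin{equation}\label{plan:super}
E^R_n(m) \;\geq\; \int_{Q_0^n} W\,|\Psi|^2 \;\geq\; 4\,E^{2R}_2(m+n-2)\ \mathbb{E}_{\Psi}[N_2].
\end{equation}

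\textbf{The key step: stability of $\mathbb{E}_\Psi[N_2]$.} For the uniform density on $Q_0^n$ one computes, using $|Q_q|=1/4$ and $|Q_0\setminus Q_q|=3/4$,
\[
\mathbb{E}_{\mathrm{unif}}[N_2] \;=\; 4\binom{n}{2}\Bigl(\tfrac14\Bigr)^2\Bigl(\tfrac34\Bigr)^{n-2} \;=\; \tfrac14\,C_n,
\]
which (combined with the prefactor $4$ from scaling in~\eqref{plan:super}) is exactly what produces the factor $C_n$. The heart of the argument — following the corresponding lemma of~\cite{LunSei-18} — is the quantitative claim that if the gradient energy $T[\Psi]$ is small then $|\Psi|^2$ is close in $L^1(Q_0^n)$ to $\mathds 1_{Q_0^n}$, so that $\mathbb{E}_\Psi[N_2]$ cannot be much smaller than $\tfrac14 C_n$. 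Concretely, since $|\Psi|$ is symmetric and $\|\,|\Psi|\,\|_{L^2(Q_0^n)}=1$, a Poincaré inequality on the cube gives $\bigl\|\,|\Psi|-\langle|\Psi|\rangle\,\bigr\|_{L^2}^2 \lesssim T[\Psi]$, hence $\bigl\|\,|\Psi|^2-\mathds 1\,\bigr\|_{L^1}\lesssim \sqrt{T[\Psi]}$, and therefore
\[
\mathbb{E}_\Psi[N_2] \;=\; 4\binom n2 \int_{(Q_q)^2\times(Q_0\setminus Q_q)^{n-2}}|\Psi|^2 \;\geq\; \tfrac14 C_n \;-\; c\,\sqrt{T[\Psi]},
\]
with $c$ depending only on $n$ through $\binom n2$ and the Poincaré constant (and thus bounded on the bounded range of $n$ in which Lemma~\ref{lem:ap_E2} is used; a more careful argument through the one-body density gives uniformity in $n$). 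This is the step I expect to be the main obstacle, both for optimizing the error term and for the uniformity of the constants.

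\textbf{Conclusion by interpolation.} For $t\in[0,1]$, \eqref{plan:lap} and \eqref{plan:super} give
\[
E^R_n(m) \;\geq\; (1-t)\,4\,E^{2R}_2(m+n-2)\Bigl(\tfrac14 C_n - c\sqrt{T[\Psi]}\Bigr) \;+\; \tfrac t4\,T[\Psi].
\]
If $\sqrt{T[\Psi]}\geq \tfrac{C_n}{8c}$ then $E^R_n(m)\geq \tfrac14 T[\Psi]$ is already bounded below by a constant, and Lemma~\ref{lem:upp} ($E^{2R}_2(m+n-2)\leq C$) turns this into a bound of the form $E^R_n(m)\geq c' E^{2R}_2(m+n-2)$. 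Otherwise the bracket is $\geq \tfrac18 C_n\geq 0$, and we choose $t$ so that the (now possibly negative) $\sqrt{T[\Psi]}$-terms are dominated by $\tfrac t4 T[\Psi]$ — using once more $E^{2R}_2(m+n-2)\leq C$ to bound the coefficient $(1-t)4E^{2R}_2(m+n-2)c$ independently of everything — after which $T[\Psi]$ can be discarded and one is left with $E^R_n(m)\geq (1-t)\,E^{2R}_2(m+n-2)\,C_n$ for an explicit $t$. Taking the worst of the two cases and tracking the (purely numerical) constants produced yields the bound $E^R_n(m)\geq \tfrac{C_n}{C_1+C_2C_n}\,E^{2R}_2(m+n-2)$, with $C_1,C_2$ depending only on the universal constants above (and on the range of $n$), as claimed.
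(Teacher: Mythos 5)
Your overall architecture (dyadic splitting plus Lemma~\ref{superadd} and Lemma~\ref{scaling_prop} to produce the potential $4E^{2R}_2(m+n-2)N_2$, comparison of $|\Psi|^2$ with the uniform density via the Neumann spectral gap, interpolation with the kinetic term, and the upper bound of Lemma~\ref{lem:upp}) is the same as the paper's. But the key quantitative step is carried out in a way that loses a power of $C_n$, and the final claim that ``tracking the constants'' yields \eqref{eq:red} does not hold. Your control of the deviation is first order: passing through $\bigl\| |\Psi|^2-\1\bigr\|_{L^1}\lesssim \sqrt{T[\Psi]}$ gives $\mathbb{E}_\Psi[N_2]\geq \tfrac14 C_n - c\sqrt{T[\Psi]}$, an error \emph{linear} in $\|\nabla|\Psi|\|$, while the kinetic energy you can spend against it is \emph{quadratic}. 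Balancing a linear error against a quadratic resource forces the threshold $\sqrt{T}\sim C_n$, and in your first case you only get $E^R_n(m)\geq \tfrac14 T\gtrsim C_n^2$, hence (after dividing by the universal upper bound $E^{2R}_2(m+n-2)\leq C$) a bound of order $C_n^2\,E^{2R}_2(m+n-2)$. Since $C_n=\binom n2(3/4)^{n-2}\to 0$, this is strictly weaker than $\frac{C_n}{C_1+C_2C_n}E^{2R}_2(m+n-2)\sim \frac{C_n}{C_1}E^{2R}_2$ for large $n$, and no choice of threshold or of the interpolation parameter $t$ repairs it: the scheme intrinsically produces $\min\bigl(C_n,\,\mathrm{const}\,C_n^2\bigr)$. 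So as written you prove a weaker inequality than the statement (it would still suffice for the paper's application, where $n\leq\overline N$ is bounded, but not for the lemma with $n$-independent $C_1,C_2$).

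The repair is to make the error quadratic in the fluctuation, which is exactly what the paper does. Write $|\Psi|=a+v$ with $a=\langle|\Psi|\rangle$, $\int v=0$, $\|v\|_2^2\leq \pi^{-2}T[\Psi]$ (the Neumann gap on the cube is $\pi^2$, independent of dimension), and estimate the cross term by a Cauchy--Schwarz with a free parameter,
\begin{equation*}
2a\Bigl|\int \1_{\{N_q=2\}}\,v\Bigr| \;=\; 2a\Bigl|\int (\1_{\{N_q=2\}}-p_q)\,v\Bigr| \;\leq\; \epsilon\, a^2 p_q(1-p_q) \;+\; \epsilon^{-1}\|v\|_2^2 ,
\end{equation*}
so that $\mathbb{E}_\Psi[N_2]\geq (1-\epsilon)\tfrac14 C_n - C\epsilon^{-1}\,T[\Psi]$ with an error of order $T$, not $\sqrt T$. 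Feeding this into your convex combination and optimizing the weight then gives precisely a constant of the form $\frac{C_n}{C_1+C_2C_n}$ (this is the functional-language version of the paper's operator argument $W_2\geq(1-\epsilon)PW_2P+(1-\epsilon^{-1})P^\perp W_2P^\perp$ together with $-\Delta\geq\pi^2P^\perp$ and the choice of $\kappa$ equalizing the two prefactors). Incidentally, your worry about the $\binom n2$-dependence of $c$ is unfounded: since the indicator of the event ``exactly two particles in $Q_q$'' is bounded by $1$, the deviation of each probability is bounded by $\||\Psi|^2-\1\|_{L^1}$ with no combinatorial factor; the genuine issue is the $\sqrt T$ versus $T$ scaling of the error, not uniformity in $n$.
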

\begin{proof}
	We follow the same route as in~\cite[Lemma 4.3]{LunSei-18} but with fermionic wave functions becoming bosonic wave functions via the diamagnetic inequality.
	
	We start from $	E^{R}_{n}\left (m\right )$ and divide $Q_{0}$ of side-length $1$ in four equally large squares of side-length $1/2$ 
	$$Q_{0}:=Q_{1}\sqcup Q_{2}\sqcup Q_{3}\sqcup Q_{4}.$$
	We apply Lemma \ref{superadd} to obtain
	\begin{align}
		\sum_{j=1}^{N}\int_{Q_{0}^{n}}e_{j}(\Psi,Y_m)\geqslant\int_{Q_{0}^{n}}W\left \vert \Psi\right \vert^{2}\nn
	\end{align}
	and we use the scaling property \eqref{scaling} to get
	\begin{equation}
		W\geqslant	W_{2}=4E^{2R}_{2}\left (m+n-2\right )\sum_{\vec{n}}\sum_{q=1}^{4} \delta_{n_{q}=2}\mathds{1}_{\vec{n}}\nn .
	\end{equation}
 We can compute 
	\begin{align}
		\int_{Q_{0}^{n}}W_{2}&=E_{2}^{R}\left (m+n-2\right )\binom{n}{2}\left (\frac{3}{4}\right )^{n-2}=C_{n}E_{2}^{R}\left (m+n-2\right )\nn
	\end{align}
	by counting the probability that exactly two particles are in a given square. We now want to estimate
	\begin{equation}
		\int_{Q_{0}^{n}}W_{2}\left \vert \Psi\right \vert.\nn
	\end{equation}
	To this aim we will use a little bit of the kinetic energy 
	\begin{equation}
		T_{\alpha}^{R,m}=\sum_{j=1}^{n}\int_{Q_{0} ^{n}}\left \b\left( -\im\nabla_{\bx_j}+\alpha	\bA^{R}_{j}\left (X_{n},Y_{m}\right )\right )\Psi_{n}\right \b^{2} \d X_{n}\nn
	\end{equation} and the inequality \eqref{modif_diam} to obtain
	\begin{equation}
		T_{\alpha}^{R,m}=\kappa	T_{\alpha}^{R,m}+\left (1-\kappa\right )T_{\alpha}^{R,m}\geqslant \kappa	T_{\alpha}^{R,m}+\left (1-\kappa\right )W_{2}\nn
	\end{equation}
	for any $\kappa\in \left [0,1\right ]$. The diamagnetic inequality \cite[Theorem 7.21]{LieLos-01} leads to
	\begin{align}
		E_{n}^{R}\left (m\right )&=\inf_{\Psi\in L^{2}_{\rm asym}\left (Q_0^{n}\right ), \norm{\psi}=1 }\left \langle \Psi ,T_{\alpha}^{R,m}\Psi \right \rangle\nonumber\\
		&\geqslant \inf_{\Psi\in L^{2}_{\rm asym}\left (Q_0^{n}\right ),  \norm{\psi}=1 }\left \langle \left \vert\Psi\right \vert ,\left (-\kappa\Delta_{Q_0^{n}}+\left (1-\kappa\right )W_{2}\right )\left \vert\Psi\right \vert \right \rangle\nonumber\\
		&\geqslant  \inf_{\Psi\in L^{2}_{\rm sym}\left (Q_0^{n}\right ),  \norm{\psi}=1 }\spec\left ( -\kappa\Delta_{Q_0^{n}}+\left (1-\kappa\right )W_{2}\right )\label{op_on_bos}
	\end{align}
	where $\Delta_{Q_{0}^n}$ is the Neumann Laplacian on $Q_{0}^{n}$. Note that $E_{n}^{R}\left (Q_{0},m\right )$ was defined on fermionic wave functions whereas now  the operator of \eqref{op_on_bos} acts on bosons. We consider the orthognal projection 
	\begin{equation}
		P:= | u_{0} \rangle \langle u_0 |\nn
	\end{equation}
	onto the normalised ground state of $\Delta_{Q_{0}^{n}}$, i.e. the constant function $u_{0}\equiv 1$, and the orthogonal complement 
	\begin{equation}
		P^{\perp}:=\mathds{1}-P\nn
	\end{equation}
	for which we have
	\begin{equation}
		-\Delta_{Q_0^{n}}\geqslant \pi^{2}P^{\perp}.\nn
	\end{equation}
	We use the Cauchy-Schwarz inequality to get that
	\begin{equation}
		W_{2}=\left (P+P^{\perp}\right )W_{2}\left (P+P^{\perp}\right )\geqslant \left (1-\epsilon\right )PW_{2}P+\left (1-\epsilon^{-1}\right )P^{\perp}W_{2}P^{\perp}\nn
	\end{equation}
	for arbitrary $\epsilon\in \left [0,1\right ]$.
	
	Now we have
	\begin{equation}
		PW_{2}P=P\int_{Q_{0}^{n}}W_{2}\nn
	\end{equation}
	and 
	\begin{equation}
		P^{\perp}W_{2}P^{\perp}\leqslant P^{\perp}\norm{W_{2}}_{\infty}\leqslant16E_{2}^{2R}\left (m+n-2\right )P^{\perp}.\nn
	\end{equation}
	The combination of the previous estimates leads to
	\begin{align}
		-\kappa\Delta_{Q_0^{n}}+\left (1-\kappa\right )W_{2}\geqslant \left (\kappa\pi^{2} -\left (1-\kappa\right )\left (\epsilon^{-1}-1\right )16E_{2}^{2R}\left (m+n-2\right )\right )P^{\perp}\nn\\
		+\left (1-\kappa\right )\left (1-\epsilon\right )C_{n}E_{2}^{2R}\left (m+n-2\right )P.\nn
	\end{align}
	We choose $\kappa$ to make the prefactors in front of the projections equal
	\begin{equation}
		\kappa =\frac{E_{2}^{2R}\left (m+n-2\right )\left [C_{n}\left (1-\epsilon\right )+16\left (\epsilon^{-1}-1\right )\right ]}{\pi^{2}+E_{2}^{2R}\left (m+n-2\right )\left [C_{n}\left (1-\epsilon\right )+16\left (\epsilon^{-1}-1\right )\right ]}.\nn
	\end{equation}
	We finally obtain the bound
	\begin{equation}
		E^{R}_{n}\left (m\right )\geqslant \frac{\pi^{2}C_{n}\left (1-\epsilon\right )E_{2}^{2R}\left (m+n-2\right )}{\pi^{2}+E_{2}^{2R}\left (m+n-2\right )\left [C_{n}\left (1-\epsilon\right )+16\left (\epsilon^{-1}-1\right )\right ]}\nn
	\end{equation}
	where the choice $\epsilon =\frac{1}{2}$ combined with the upper bound on $E_{2}^{R}\left (m+n-2\right )$ of Lemma \ref{lem:upp} gives the result.

\end{proof}

The above lemma provides a lower bound in terms of the energy with exactly two particles inside the box, but possibly with the extra influence of many fixed particles outside the box. For $R=0$, the influence of the latter can be gauged away as in ~\cite{LunSei-18}. In our case, the magnetic flux-tube they carry might overlap the box. We will show that this interaction can be gauged away, at a controlable cost. Indeed, since $R \ll L$ the flux-tube of outside particles touches the particles in the box only when they are close to the boundary.

\begin{lemma}[\textbf{Gauging away particles outside the box}]\mbox{}\\\label{lem:toe2}
	There exists a small $\gamma_{0}< \frac{1}{8}$ such that for any $\gamma \leq\gamma_{0} $ we can find a constant $C_{1}$ independent of $R$, $m$, $\alpha$ and $Q$ and a constant $C_{\gamma_{0}}$ only depending on $\gamma_{0}$ such that
	\begin{equation}
		E^{R}_{2}\left (Q,m\right )\geqslant \min\left \{C_{1}E_{2}^{\tilde{R}}\left (Q\right ),\frac{C_{\gamma_{0}}}{\b Q\b}\right \} 	
		\label{eq:gage}
	\end{equation}
	where $\tilde{R}=LR/(L-4R)$.
\end{lemma}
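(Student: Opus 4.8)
The plan is to adapt the ``gauging away'' argument of~\cite{LunSei-18}. After reducing to the unit square one separates the exterior particles according to whether their flux tube misses the box or pokes into a boundary collar of width $\mathcal{O}(\gamma)$; the former are removed by an exact gauge transformation, and the dichotomy of the statement comes from treating the latter.

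By Lemma~\ref{scaling_prop} we may take $Q=Q_0=[0,1]^2$, so that $R=\gamma\le\gamma_0$ with $\gamma_0<1/12$ to be fixed. Given $Y_m\in(Q_0^c)^m$, split it as $Y_m=Y''\sqcup Y'$ with $Y''=\{\by_k:\ \mathrm{dist}(\by_k,Q_0)\ge R\}$ (``far'') and $Y'$ (``near''). For $\by_k\in Y''$ one has $\b\bx-\by_k\b\ge R$ on $Q_0$, hence $\nablap w_R(\bx-\by_k)=\nabla\theta_k(\bx)$ for a single-valued smooth branch $\theta_k$ of $\mathrm{arg}(\cdot-\by_k)$ on the simply connected set $Q_0$, and moreover $\chi_R(\bx-\by_k)=0$ on $Q_0$. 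The symmetric gauge transformation $\Psi_2(\bx_1,\bx_2)\mapsto e^{\im\alpha(\Lambda(\bx_1)+\Lambda(\bx_2))}\Psi_2(\bx_1,\bx_2)$ with $\Lambda=\sum_{\by_k\in Y''}\theta_k$ preserves the normalisation, the antisymmetry, $\b\Psi_2\b$, the Neumann condition and $V_R$, while removing exactly the $Y''$-contribution to $\bA^R_j$. Hence $E^R_2(Q_0,m)=\inf_{Y'}E^R_2(Q_0,Y')$, the infimum being over configurations of at most $m$ near particles; if $Y'=\emptyset$ we are done with the first alternative and $C_1=1$, so assume $Y'\neq\emptyset$. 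Every near particle lies within distance $R$ of $\partial Q_0$, so the union of flux tubes $\mathcal{T}=\bigcup_{\by_k\in Y'}B(\by_k,R)\cap Q_0$ — the only place in $Q_0$ where the exterior part of $\curl\bA^R_j$ and the matching terms of $V_R$ are supported — sits inside the collar $\mathcal{C}=\{\bx\in Q_0:\ \mathrm{dist}(\bx,\partial Q_0)<2\gamma\}$.

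Now let $\Psi_2$ be an almost-minimiser for $E^R_2(Q_0,Y')$ and put $\mathcal{M}=\int_{Q_0^2}\1_{\mathcal{C}}(\bx_1)\b\Psi_2\b^2$. \emph{If $\mathcal{M}$ is not small}, then $\b\Psi_2\b^2$ places a fixed fraction of its mass on $S=\{\bx_1\in\mathcal{C}\}\cup\{\bx_2\in\mathcal{C}\}\subset Q_0^2=[0,1]^4$, a set of $4$-volume $\le 16\gamma$; the Sobolev embedding $H^1(\R^4)\hookrightarrow L^4(\R^4)$ gives $\int_S\b\Psi_2\b^2\le C\b S\b^{1/2}\big(\b\nabla\b\Psi_2\b\b_{L^2(Q_0^2)}^2+1\big)$, so $\b\nabla\b\Psi_2\b\b_{L^2}^2\gtrsim\gamma^{-1/2}$ once $\gamma_0$ is small; since $\mathcal{E}^R_2(Q_0,Y')[\Psi_2]\ge\tfrac14\b\nabla\b\Psi_2\b\b_{L^2}^2$ this yields the second alternative with $C_{\gamma_0}\sim\gamma_0^{-1/2}$. \emph{If $\mathcal{M}$ is small}, the wave-function essentially avoids the collar and one wants $E^R_2(Q_0,Y')\ge(1-o_{\gamma_0}(1))E^R_2(Q_0)$: on $Q_0$ minus a neighbourhood of $\mathcal{T}$ the near potential $\bA^{R,\mathrm{near}}_j:=\sum_{\by_k\in Y'}\nablap w_R(\cdot-\by_k)$ is curl-free, hence a gradient $\nabla\Lambda$; extending $\Lambda$ to $Q_0$ and gauging with it leaves a residual field $\widetilde{\bA}_j$ supported in a slightly enlarged collar. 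Using $\b(-\im\nabla_{\bx_j}+\alpha\bA^{\mathrm{intra}}_j+\alpha\widetilde{\bA}_j)\Psi_2\b^2\ge(1-\eta)\b(-\im\nabla_{\bx_j}+\alpha\bA^{\mathrm{intra}}_j)\Psi_2\b^2-C\eta^{-1}\b\widetilde{\bA}_j\b^2\b\Psi_2\b^2$ and absorbing the error into the $e_2$-terms $\tfrac14\sum_j\int\b\nabla_{\bx_j}\b\Psi_2\b\b^2+\tfrac14\int V_R\b\Psi_2\b^2$ — which is possible because $V_R\sim\gamma^{-2}$ precisely on $\mathcal{T}$ and a Poincar\'e inequality on the collar turns the smallness of $\mathcal{M}$ and of $\b\mathcal{C}\b\lesssim\gamma$ into control of $\int\b\widetilde{\bA}_j\b^2\b\Psi_2\b^2$ — one is left with $(1-o_{\gamma_0}(1))\,\mathcal{E}^R_2(Q_0)[\Psi_2]\ge(1-o_{\gamma_0}(1))E^R_2(Q_0)$. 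Taking the infimum over $Y'$ and undoing the scaling proves the lemma.

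The second branch is elementary; the first branch is the crux and is genuinely delicate, because arbitrarily many exterior particles may accumulate against $\partial Q_0$, so $\widetilde{\bA}$ is not uniformly bounded and one must exploit the full $\gamma^{-2}$-strength of the repulsion $V_R$ on the flux tubes — the term deliberately kept in~\eqref{KEbis} through~\eqref{eq:basic mag} — in order to control $\int\b\widetilde{\bA}_j\b^2\b\Psi_2\b^2$ uniformly in the number of near particles. This, together with $Q_0$ and the relevant sub-domains being simply connected and the collar $\mathcal{C}$ occupying only a thin fraction of $Q_0$ bounded away from a fixed interior region, is exactly where the hypothesis $\gamma\le\gamma_0<1/4$ enters.
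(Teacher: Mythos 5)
Your skeleton matches the paper's up to the key step: scale to the unit square, gauge away exterior fluxes, run a dichotomy on the mass the (almost-)minimizer puts in a boundary collar of width $O(R)$, and in the large-mass case obtain the absolute bound $C_{\gamma_0}/|Q|$ from a Sobolev/Poincar\'e estimate on the thin collar plus the diamagnetic inequality; that branch is essentially the paper's Case 2 and is fine. The genuine gap is in your small-mass branch. There you gauge on all of $Q_0$ and propose to control the residual field $\widetilde{\bA}_j$ of the ``near'' particles by absorbing $C\eta^{-1}\int|\widetilde{\bA}_j|^2|\Psi_2|^2$ into $\tfrac14\int V_R|\Psi_2|^2$. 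This cannot work uniformly in $m$: if $m'$ exterior particles accumulate at essentially one point at distance $\approx R$ from $\bx$, then $|\bA^{R,\mathrm{near}}(\bx)|\sim m'/R$, so $|\widetilde{\bA}|^2\sim (m')^2/R^2$, while $V_R(\bx)\leq 2m'/R^2$; the ratio is $m'$, unbounded. Worse, particles at distance slightly larger than $R$ contribute fully to $\bA^{R,\mathrm{near}}$ but not at all to $V_R$. Nor does smallness of $\mathcal{M}=\int\1_{\mathcal{C}}(\bx_1)|\Psi_2|^2$ help, since the weight $|\widetilde{\bA}|^2$ is not bounded in terms of anything independent of $m$: smallness of the unweighted collar mass gives no control of the weighted integral. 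You correctly flag this as ``the crux'', but you do not supply a mechanism that closes it, and along these lines there is none.

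The paper's proof sidesteps the issue entirely: in the small-mass case one does not gauge on the whole box, but simply restricts the (nonnegative) magnetic energy density by inserting $\1_{S}(\bx_1)$ with $S=[2R,L-2R]^2$. For $\bx_1\in S$ \emph{every} exterior particle, near or far, satisfies $|\bx_1-\by_k|\geq 2R$, so its regularized potential coincides there with the pure Aharonov--Bohm gradient field and is removed exactly by the single-valued gauge $\prod_k e^{\im\alpha(\phi_{1k}+\phi_{2k})}$ on the simply connected $Q$; no residual field has to be estimated and the $V_R$ term is not needed at all (it is dropped at the outset). The only price is the mass deficit $\int_{Q\times\Omega}|\Psi_2|^2$ with $\Omega=Q\setminus S$, which is precisely the quantity your dichotomy already monitors: if it is at most $c$ one gets $E^R_2(Q,m)\geq\tfrac12(1-c)\,E^R_2(Q)$, and if it is at least $c$ one is in the second branch you (and the paper) handle by Poincar\'e--Sobolev and diamagnetism. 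Replacing your ``gauge everywhere and absorb the residual into $V_R$'' step by this localization to $S$ turns your outline into the paper's argument; your separate far/near splitting then becomes unnecessary.
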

\begin{proof}
Here we drop the second term in~\eqref{eq:enerinout} to consider only the magnetic kinetic energy. We remove from $Q$ a very thin corridor such that the particles (of small radius) inside the restricted domain cannot interact with the outside. The proof will show that if the density is small in the corridor, we can neglect it. If not then the kinetic energy has to be large enough for the statement to hold.
	
	We assume that $L\geq 8R$ and define $S=\left [2R, L-2R\right ]^{2}$. On $S^{2}$, define the change of gauge
	\begin{equation}
		\widetilde{\Psi}_{2}=\prod_{k=1}^{m}e^{\im\alpha \left (\phi_{1k}+\phi_{2k}\right )}\Psi_{2}\;\;
		\text{with}\;\;
		\phi_{jk}:=\arg \frac{\bx_{j}-\by_{k}}{\b \bx_{j}-\by_{k}\b}
	\end{equation}
	where $\by_1,\ldots,\by_m$ are the coordinates of particles outside the box. Note that 
	$$ \nabla_{\bx_j}\phi_{jk} = \frac{\left (\bx_{j}-\by_{k}\right )^{\perp}}{\left \b \bx_{j}-\by_{k}\right \b^{2}}= \frac{\left (\bx_{j}-\by_{k}\right )^{\perp}}{\left \b \bx_{j}-\by_{k}\right \b^{2}_{R}}$$
	when $|\bx_j - \by_k| \geq R$, and that $\nabla_{\bx_j}\phi_{jk}$ is regular and curl-free in that range. We thus have, using also the fermionic symmetry,
	\begin{align}
		\mathcal{E}^{R}_{2}\left (Q,Y_m\right )[\Psi_2]&\geq \int_{Q^{2}}\left \b\left (-\im\nabla_{1}+\alpha\frac{\left (\bx_{1}-\bx_{2}\right )^{\perp}}{\left \b \bx_{1}-\bx_{2}\right \b^{2}_{R}}+\alpha \sum_{k=1}^{m}\frac{\left (\bx_{1}-\by_{k}\right )^{\perp}}{\left \b \bx_{1}-\by_{k}\right \b^{2}_{R}}\right )\Psi_{2}\right \b^{2}\d \bx_{1}\d \bx_{2}\nonumber\\
		&\geq\int_{Q^{2}}\left \vert\left (-\im\nabla_{1}+\frac{\left (\bx_{1}-\bx_{2}\right )^{\perp}}{\left \vert \bx_{1}-\bx_{2}\right \vert^{2}_{R}}\right )\widetilde{\Psi}_{2}\right \vert^{2}\1_{S^{2}}\d \bx_{1}\d \bx_{2}\nonumber\\
		&\geq \frac{1}{2}E^{R}_{2}\left (S\right )\int_{S^{2}}\vert\Psi_{2}\vert^{2} \d \bx_{1}\d \bx_{2}
		\label{eq:last}
	\end{align}
	where we used that $\b\widetilde{\Psi_2}\b =\b\Psi_2\b $. We denote $\Omega = Q\backslash S$ with
	\begin{equation}\label{eq:corridor}
	\b \Omega \b =8 LR - 16 R^2 \leq 8 LR.
	\end{equation}
We have that 
\begin{equation}
\int_{Q^{2}}\vert\Psi_{2}\vert^{2}=\int_{S^{2}}\vert\Psi_{2}\vert^{2}+\int_{S\times \Omega}\vert\Psi_{2}\vert^{2}+\int_{Q\times \Omega}\vert\Psi_{2}\vert^{2}
\end{equation}
which implies that
\begin{equation}
	\int_{Q\times \Omega}\vert\Psi_{2}\vert^{2}\leq \int_{Q^{2}}\vert\Psi_{2}\vert^{2}-\int_{S^{2}}\vert\Psi_{2}\vert^{2}\leq 2\int_{Q\times \Omega}\vert\Psi_{2}\vert^{2}
\end{equation}
	We then observe that if for a small number $c< 1$
	\begin{equation}
		\int_{Q\times\Omega}\vert\Psi_{2}\vert^{2}\d \bx_{1}\d \bx_{2}\leq c\int_{Q^{2}}\vert\Psi_{2}\vert^{2}\d \bx_{1}\d \bx_{2}\nn
	\end{equation}
then
\begin{equation}
	\int_{S^{2}}\vert\Psi_{2}\vert^{2}\geq \left (1-2c\right )\int_{Q^{2}}\vert\Psi_{2}\vert^{2}
\end{equation}	and the first inequality of \eqref{eq:gage} results from \eqref{eq:last} to which we apply the scaling property~\eqref{scaling} leading to $C_{1}=\frac{1}{2}-c$.
	
	On the other hand, when 
	\begin{equation}\label{eq:cas2}
		\int_{Q\times\Omega}\vert \Psi_{2}\vert^{2}\d \bx_{1}\d \bx_{2}\geq c\int_{Q^{2}}\vert \Psi_{2}\vert^{2}\d \bx_{1}\d \bx_{2}
	\end{equation}
	we introduce $P$, the orthogonal projector on the constant function and $P^{\perp}$ such that 
	$$P+P^{\perp}=\1_{L^{2}\left (Q^{4}\right )}.$$
	Then 
	\begin{align}
		\int_{Q\times \Omega}\vert \Psi_{2}\vert^{2}\d \bx_{1}\d \bx_{2}
		&\leq 2\frac{\b Q\times \Omega\b}{\b Q\times Q\b}\int_{Q^{2}}\vert P\b \Psi_{2}\b\vert^{2}\d \bx_{1}\d \bx_{2} +2\int_{Q\times \Omega}\vert P^{\perp}\b \Psi_{2}\b\vert^{2}\d \bx_{1}\d \bx_{2}\label{eq:lastin}\nonumber\\
		&\leq 16\frac{L^{3}R}{L^{4}}\int_{Q^{2}}\vert P\b \Psi_{2}\b\vert^{2}\d \bx_{1}\d \bx_{2} +2\b Q\times \Omega\b^{\frac{1}{2}}\left (\int_{Q^{2}}\vert P^{\perp}\b \Psi_{2}\b\vert^{4}\d \bx_{1}\d \bx_{2}\right )^{\frac{1}{2}}\nn
	\end{align}
	where we used the Cauchy-Schwarz inequality and~\eqref{eq:corridor}. We now apply  a Poincar\'e-Sobolev inequality~\cite[Theorem 8.11]{LieLos-01} to $P^{\perp}\b \Psi_{2}\b=\b \Psi_{2}\b-P\b \Psi_{2}\b$
	\begin{equation}
		\left (\int_{Q^{2}}\vert P^{\perp}\b \Psi_{2}\b\vert^{4}\d \bx_{1}\d \bx_{2}\right )^{\frac{1}{2}}\leq C_{\mathrm{ps}} \int_{Q^{2}}\vert \nabla\b \Psi_{2}\b\vert^{2}\d \bx_{1}\d \bx_{2}\nn
	\end{equation}
	which provides 
	\begin{align}
		C_{\mathrm{ps}}\int_{Q^{2}}\vert \nabla\b \Psi_{2}\b\vert^{2}\d \bx_{1}\d \bx_{2}&\geq \frac{C}{L\sqrt{LR}}	\int_{Q\times \Omega}\vert \Psi_{2}\vert^{2}\d \bx_{1}\d \bx_{2}-\frac{CR}{ L^{2}\sqrt{LR}}\int_{Q^{2}}\vert \Psi_{2}\vert^{2}\d \bx_{1}\d \bx_{2}\label{eq:av}\\
		&\geq \frac{C}{L^{2}}\int_{Q^{2}}\vert \Psi_{2}\vert^{2}\d \bx_{1}\d \bx_{2}\left (c \left ( 6 \sqrt{\gamma}\right )^{-1}-\sqrt{\gamma}\right )\nn
	\end{align}
	using that $\vert \Omega\vert\geq 6RL$ and Assumption~\eqref{eq:cas2} for the first term in the right hand side of \eqref{eq:av}.
	The diamagnetic inequality \cite[Theorem 7.21]{LieLos-01} allows to obtain the second bound of \eqref{eq:gage} provided that we chose a $\gamma $ small enough.
\end{proof}

We are now reduced to finding a lower bound on the energy of two particles inside the box, with no particles outside. This can be estimated using previous results of~\cite{LarLun-16}. Essentially, with two isolated particles the effect of the flux tube between them may never be spoiled by that of other particles, and it provides a non-zero energy via a Hardy-type inequality. 

We need to introduce the function 
\begin{align}
	g^{2}:\R_{+}\times \left [0,1\right ]&\to \R\nn\\
	\nu ,\gamma &\mapsto g^{2}\left (\nu ,\gamma\right )\nn
\end{align}
which is the smallest positive solution $\lambda$ associated with the Bessel equation
\begin{equation}\label{eq:bessel}
	-u''-\frac{u'}{r}+\nu^{2}\frac{u}{r^{2}}=\lambda u
\end{equation}
on the interval $\left [\gamma,1\right ]$ with Neumann boundary conditions while $g\left (\nu ,\gamma\right )=\nu$ for $\gamma \geq 1$.
The function $g$ has the property that
\begin{equation}\label{lim:g}
	g\left (\nu ,\gamma\right )\xrightarrow[\gamma\to 0]{}j_{\nu}'\geq \sqrt{2\nu}
\end{equation}
where $j_{\nu}'$ denotes the first positive zero of the derivative of the Bessel function $J_{\nu}$.

\begin{lemma} [\textbf{Bound on $E_{2}^{R}$}]\mbox{}\\\label{Lem:E2}There exist $\gamma_{1}<1/12$ and a constant $C_{\gamma_{1}}$ only depending on $\gamma_{1}$ such that for all $\gamma \leq \gamma_{1}$ we have
	\begin{equation}\label{eq:E2}
		E_{2}^{\gamma} \geqslant C_{\gamma_{1}}\left \b \alpha -1\right \b .
	\end{equation}
	
\end{lemma}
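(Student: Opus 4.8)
The plan is to adapt the two‑particle analysis of~\cite{LarLun-16} (which refines the $\gamma=0$ argument behind~\cite{LunSei-18}). Write $\nu:=|\alpha-1|\in[0,1]$; there is nothing to prove if $\nu=0$. Since the diamagnetic and scalar terms in~\eqref{eq:enerinout} are nonnegative and, for a single pair, $\bA_1^\gamma=\nabla^\perp w_\gamma(\bx_1-\bx_2)=-\bA_2^\gamma$ depends only on the relative variable, I would pass to center‑of‑mass and relative coordinates $\bR=\tfrac12(\bx_1+\bx_2)$, $\br=\bx_1-\bx_2$, setting $\widetilde\Psi_2(\bR,\br)=\Psi_2(\bR+\br/2,\bR-\br/2)$. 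The Fermi statistics of $\Psi_2$ becomes the property that $\widetilde\Psi_2(\bR,\cdot)$ is odd, hence only carries odd angular momenta $\ell$ in $\br$; since $\alpha\in[0,2]$ one has $\min_{\ell\ \mathrm{odd}}|\ell+\alpha|=|\alpha-1|=\nu$, so for $|\br|\geq\gamma$ the relative magnetic operator effectively sees an Aharonov--Bohm flux of fractional part $\nu$ (equivalently, one integer flux is gauged away). Using the identity $\sum_j|(-\im\nabla_{\bx_j}+\alpha\bA_j^\gamma)\Psi_2|^2=\tfrac12|\nabla_\bR\widetilde\Psi_2|^2+2|(-\im\nabla_\br+\alpha\nabla^\perp w_\gamma(\br))\widetilde\Psi_2|^2$ and discarding, for the moment, the center‑of‑mass kinetic energy, the problem reduces, for each fixed $\bR$, to a lower bound for the relative magnetic Neumann operator on the rectangle $\mathcal D_\bR:=\{\br:\bR\pm\br/2\in Q_0\}$ — which is centered at the origin and contains the disc of radius $\rho_\bR:=2\,\mathrm{dist}(\bR,\partial Q_0)$ — together with the scalar term $\tfrac14 V_\gamma$ on $\{|\br|<\gamma\}$.

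On a disc $B(0,\rho)$ with $\rho$ of order one (which is the case of the inscribed disc when $\bR$ is away from $\partial Q_0$, since then $\rho_\bR\in[\tfrac14,1]$ and $\gamma/\rho_\bR\leq4\gamma_1<1$), I would decompose over odd $\ell$, cut out $\{|\br|<\gamma\}$ with a Neumann condition (which only lowers the energy), and use that the bottom of the Neumann spectrum of the $\ell$‑sector on $B(0,\rho)$ equals $\rho^{-2}g^2(|\ell+\alpha|,\gamma/\rho)\geq\rho^{-2}g^2(\nu,\gamma/\rho)$ by the definition of $g$ and its monotonicity; on $\{|\br|<\gamma\}$ the scalar term $\tfrac14 V_\gamma\geq\gamma^{-2}$ alone controls $|\widetilde\Psi_2|^2$; and on $\mathcal D_\bR\setminus B(0,\rho)$ one still has the bare magnetic Hardy inequality $(-\im\nabla_\br+\alpha\nabla^\perp w_\gamma(\br))^2\geq\nu^2|\br|^{-2}\geq\tfrac12\nu^2$. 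These are the inputs borrowed from~\cite{LarLun-16}. For center‑of‑mass positions near $\partial Q_0$, where $\mathcal D_\bR$ degenerates, I would instead keep the center‑of‑mass kinetic energy and argue as in the proof of Lemma~\ref{lem:toe2}: either a fixed fraction of $\int_{\mathcal D}|\widetilde\Psi_2|^2=1$ sits in the bulk with $|\br|$ of order one, in which case the one‑disc estimate applies, or it does not, in which case a Poincaré--Sobolev inequality in $\bR$ forces the center‑of‑mass kinetic energy to be $\gtrsim1$.

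Putting these together yields $E_2^\gamma\geq c\min\{g^2(\nu,\tfrac13),\ \tfrac1{\log(1/\gamma_1)},\ 1\}$, the middle term (coming from the logarithmic capacity of the near‑diagonal region) taking over the lower bound precisely when $\nu$ is so small that the Bessel term degenerates. It then remains to check that $g^2(\nu,\tfrac13)\geq c'\nu$ on the range of $\nu$ where that term is the active one, which follows from~\eqref{lim:g}, the elementary bound $(j_\nu')^2\geq2\nu$, and the continuity of $g$ established in~\cite{LarLun-16}. Since $\nu\leq1$, each of the three terms in the minimum is bounded below by $C_{\gamma_1}\nu$ for a suitable $C_{\gamma_1}>0$ depending only on $\gamma_1$, which is~\eqref{eq:E2}.

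The delicate points are twofold. Geometrically, the relative‑coordinate domain $\mathcal D_\bR$ depends on the center of mass and degenerates as $\bR\to\partial Q_0$, so the variables do not separate, and on the part of a slice outside the inscribed disc only the \emph{quadratic} Hardy bound is available; the center‑of‑mass kinetic energy must therefore be spent to confine the mass to the bulk, exactly as in Lemma~\ref{lem:toe2}. Quantitatively, getting a rate \emph{linear} in $|\alpha-1|$ rather than the quadratic rate of the bare Hardy inequality rests on the finite‑box improvement $(j_\nu')^2\geq2\nu$, which is sharp only at $\gamma=0$ and must be propagated to small $\gamma\leq\gamma_1$; this is why $\gamma_1$ (the radius‑to‑box ratio) must be small and why $C_{\gamma_1}$ depends on it. I expect the bookkeeping near $\partial Q_0$, together with tracking how $\gamma$ enters $g$, to be the main source of technical work.
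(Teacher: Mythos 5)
Your setup (center-of-mass/relative coordinates, odd relative angular momenta giving effective flux $\nu=|\alpha-1|$, the annulus Neumann eigenvalue $\rho^{-2}g^2(\nu,\gamma/\rho)$, the scalar term controlling $\{|\br|<\gamma\}$) is the right mechanism, but the assembly has a genuine gap: the configurations with center of mass $\mathbf{R}$ in the bulk but relative coordinate outside the inscribed disc (i.e. the two particles far apart, $|\br|>2\,\mathrm{dist}(\mathbf{R},\partial Q_0)$) are only given your ``bare Hardy'' bound $\nu^2|\br|^{-2}\gtrsim\nu^2$, which is \emph{quadratic} in $\nu$. This region carries an $O(1)$ fraction of the uniform measure on $Q_0^2$, and a fermionic state can put essentially all of its mass there (e.g. the two particles localized near opposite corners), so from the steps you describe one only gets $E_2^\gamma\gtrsim\min\{g^2(\nu,\cdot),\nu^2\}\sim\nu^2$ for small $\nu$; your claimed conclusion $E_2^\gamma\geq c\min\{g^2(\nu,\tfrac13),\,1/\log(1/\gamma_1),\,1\}$ does not follow. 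The center-of-mass Poincar\'e dichotomy you invoke only penalizes mass with $\mathbf{R}$ near $\partial Q_0$; it does nothing for bulk-$\mathbf{R}$, large-$|\br|$ mass. To close this you need an extra ingredient: either a non-constancy dichotomy in the \emph{full} configuration space (project $|\Psi_2|$ onto constants and use the diamagnetic kinetic energy plus Poincar\'e--Sobolev on $Q_0^2$, as in the proof of Lemma~\ref{lem:ap_E2}), or the disc-averaging geometry of~\cite{LarLun-16}, which is precisely what produces the factors $(1-12\gamma)_+^3$ and $12\gamma$ in the bound the paper uses. A secondary point: on $\mathcal D_{\mathbf{R}}\setminus B(0,\rho)$ the sets of constant $|\br|$ are \emph{partial} arcs, so the angular-decomposition Hardy bound with constant $\nu^2$ is not available as stated; one must use the antipodal-pair version (the domain is centrally symmetric and $\widetilde\Psi_2(\mathbf{R},\cdot)$ is odd), i.e. exactly the antipodal-antisymmetric variant of \cite[Lemma~3.1]{LarLun-16}, and its constant is not simply $\nu^2$.

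For comparison, the paper does not re-derive the two-body bound at all: it quotes \cite[Lemma~5.3]{LarLun-16}, noting only that its building block \cite[Lemma~3.1]{LarLun-16} must be used in the antipodal-antisymmetric (fermionic-base) case, which yields $E_2^{R}\geq\frac{\pi}{48}g^2(c\alpha_2,12\gamma)(1-12\gamma)_+^3$ with $\alpha_2=|1-\alpha|$ for $\alpha\in[0,2]$; the conclusion then follows, as in your last step, from the limit property \eqref{lim:g} and $(j_\nu')^2\geq2\nu$ by choosing $\gamma_1$ small. So the part of your argument that matches the paper (the $g$-asymptotics giving linearity in $|\alpha-1|$) is fine, but the geometric bookkeeping you propose to redo by hand is exactly where the sketch currently fails.
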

\begin{proof}
	We apply~\cite[Lemma 5.3]{LarLun-16}. Note that this result is stated for anyons based on a bosonic wave function while in our case we used a fermionic one~\eqref{eq:stat}. However, the proof given in~\cite{LarLun-16} may be applied  to this case. Indeed, the basic building block is~\cite[Lemma~3.1]{LarLun-16}. We use it in the antipodal-antisymmetric case rather than the antipodal-symmetric case, and otherwise follow  the rest of the proof mutatis mutandis. The result is  
	\begin{equation}
		E_{2}^{R} \geqslant \frac{\pi}{48}g^{2}\left (c\alpha_{2} ,12\gamma\right )\left (1-12\gamma\right )^{3}_{+}\nn
	\end{equation}
	where 
	\begin{equation}\label{eq:alphaN}
		\alpha_{N}:=\min_{p\in\left \{0,1,\cdots,N-2\right\} }\min_{q\in \mathbb{Z}}\left \b\left (2p +1\right )\left (1-\alpha\right ) -2q\right \b .\nn
	\end{equation}
	We use that $\alpha_{2}=\b 1-\alpha\b$ for $\alpha \in [0,2]$. Combining with the limit property~\eqref{lim:g} we can pick a $\gamma_{1}$ such that
	\begin{equation}
		E_{2}^{R} \geqslant C_{\gamma_{1}}\left \b 1-\alpha \right \b .\nn
	\end{equation}
\end{proof}

We may now conclude the

\begin{proof}[Proof of Proposition~\ref{lem:excllarge}]
	We start by applying Lemma \ref{lem:ap_E2} with the scaling property \eqref{scaling} to the energy
	\begin{equation}
		E^{R}_{n}\left (Q,m\right )\geqslant \frac{C_n}{\b Q\b}	E^{2\gamma}_{2}\left (m+n-2\right ).\nn
	\end{equation}
	We use Lemma \ref{lem:toe2} to bound the two-particles energy appearing in the above. We then restrict $2\gamma \leq \gamma_{0}$ and obtain that
	\begin{equation}
		E^{R}_{n}\left (Q,m\right )\geqslant \frac{C_n}{\b Q\b}\times \left\{C_{1}E_{2}^{\frac{2\gamma}{1-4\gamma}}\;\;\text{or }\;\; C_{\gamma_{0}}\right\}.\nn
	\end{equation}
	If we consider the $C_{\gamma_{0}}$ case there is nothing more to prove and the final constant is $C_{\gamma_{0}}C_n$. In the other case we use Lemma \ref{Lem:E2} which provides a new restriction $2\gamma/(1-4\gamma)\leq \gamma_{1}$ under which the bound \eqref{eq:E2} leads to
	\begin{equation}
		E^{R}_{n}\left (Q,m\right )\geqslant \frac{C_n}{\b Q\b}C_{1}C_{\gamma_{1}}\b 1-\alpha\b.\nn
	\end{equation}
	This concludes the proof, upon redefining $C_n$ and taking 
	$$c_{1}=\frac{1}{2}\min\left \{\gamma_{0},\gamma_{1}/(2+4\gamma_{1})\right \}.$$
	The final constant is not uniform in $n$. Our assumption that $n\leq \overline{N}$ is needed to control it.
\end{proof}

\subsection{Medium Boxes}

We now deal with medium boxes where $c_{1}\leq \gamma \leq c_{2}$ with $c_{1}<\sqrt{2}\leq c_{2}$. In this case~\cite[Lemma 5.1]{LarLun-16} provides the desired bound for $\alpha$ separated from $0$ (i.e. the fermionic end, in our convention). We combine this with a ``perturbative'' treatment of the magnetic interaction for small $\alpha$ to obtain the 

\begin{proposition}[\textbf{Exclusion principle on medium boxes}]\label{lem:medium}\mbox{}\\
	Let $c_{1}\leq \gamma = R/L \leq c_{2}$ and $Q$ be a square of side-length $L$. Assume that $\underline{N}\leq n \leq \overline{N}$ for some large enough $\underline{N}$. There is a constant $C\left (c_{1},c_{2}\right )$ depending only on $c_1,c_2$ such that
	\begin{equation}
		E_{n}^{R}\left (m,Q\right )\geq C\left (c_{1},c_{2}\right )\frac{\b 1- \alpha \b}{\left \b Q\right \b}.\nn
	\end{equation} 
\end{proposition}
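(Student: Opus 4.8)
The plan is to reduce, via the scaling property of Lemma~\ref{scaling_prop}, to the unit square $Q_0=[0,1]^2$: since $E^{R}_{n}(Q,m)=|Q|^{-1}E^{\gamma}_{n}(m)$ with $\gamma=R/L$, it suffices to show that $E^{\gamma}_{n}(m)\geq C(c_1,c_2)\,|1-\alpha|\,(n-1)_{+}$ for every $n$, every $m\geq 0$ and every $\gamma\in[c_1,c_2]$, uniformly in these parameters. Moreover the last two terms of the energy density $e(\Psi_n,Y_m)$ in~\eqref{eq:enerinout} are non-negative, so it is enough to bound from below the magnetic kinetic part $\tfrac12\sum_{j}\int_{Q_0^{n}}\bigl|D^{R}_{\bx_j}(Y_m)\Psi_n\bigr|^2$.

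The heart of the matter is then a direct appeal to~\cite[Lemma~5.1]{LarLun-16}. In the regime where $R/L$ is bounded away from $0$ \emph{and} from above --- exactly the medium-box range --- the Dyson lemma lets one spend a fixed fraction of the kinetic energy to smear the pairwise magnetic field $\curl(\alpha\bA^{R})$, a sum of bumps of total mass $2\pi\alpha$ and radius comparable to the box, over all of $Q_0$; this turns the $n$-body problem into a family of effective two-body problems and produces a lower bound $\gtrsim(n-1)_{+}$ times the two-body gap, the latter being $\gtrsim|1-\alpha|$. As in the proof of Lemma~\ref{Lem:E2}, the statement of~\cite{LarLun-16} is for anyons built on \emph{bosonic} wave functions whereas our convention~\eqref{eq:stat} is fermionic; one reruns the argument feeding in the antipodal-antisymmetric version of~\cite[Lemma~3.1]{LarLun-16} rather than the antipodal-symmetric one, which replaces the relevant distance of the flux to the even integers by $\alpha_2=|1-\alpha|$ (using $\alpha\in[0,2]$). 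Finally, the $m$ particles fixed outside $Q_0$ only add the non-negative term $2\pi\alpha\sum_{k}\chi_R(\bx_j-\by_k)$ to $\curl(\alpha\bA^{R}_j)$, and the corresponding non-negative term to $V_R$, so they can only help and the resulting bound is uniform in $m$. Undoing the scaling yields the Proposition with $C(c_1,c_2)$ depending only on $c_1$ and $c_2$.

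The main point to verify carefully is that~\cite[Lemma~5.1]{LarLun-16} transfers verbatim: that fermionic symmetrisation changes the arithmetic prefactor precisely to $|1-\alpha|$ --- and not to the smaller quantity $\alpha_n$ that would appear if one paired more than two particles at a time, which does not happen because the Dyson-lemma mechanism in~\cite{LarLun-16} pairs particles two by two --- and that the outside particles really can be absorbed monotonically at the level of the magnetic energy density rather than only after a diamagnetic bound (which would destroy the fermionic structure needed for the $(n-1)_{+}$ behaviour, and hence fail for $\alpha$ near the bosonic value $0$). One also checks that $C(c_1,c_2)$ stays bounded below as long as $c_1>0$ and $c_2<\infty$: the $c_1$-dependence enters through the Hardy/Bessel estimate for the smeared flux, which degenerates only as $R/L\to 0$ --- a range covered by the large-box analysis of Proposition~\ref{lem:excllarge} --- while the $c_2$-dependence enters only through how much kinetic energy the Dyson lemma must consume. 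No new estimate is needed beyond~\cite{LarLun-16}; this is the comparatively soft regime, which is why the result is quoted rather than reproved.
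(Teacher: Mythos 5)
The core of your argument does not deliver the constant you claim. In the medium-box regime the bound of \cite[Lemma 5.1]{LarLun-16} is \emph{not} a statistics-transmutation estimate: it is obtained from~\eqref{eq:basic mag}, i.e. from the scalar magnetic interaction $\curl(\alpha\bA^R)$ smeared by the Dyson lemma, and the whole argument runs through $|\Psi_n|$, which carries no memory of the antisymmetry. Consequently its output is proportional to the attached flux, i.e. to $|\alpha|$ (the quoted bound is of the form $E^R_n(m)\gtrsim |\alpha|\,(n-1)_+$ up to $\gamma$- and Bessel-dependent factors), and there is no ``antipodal-antisymmetric version'' of it that would upgrade the prefactor to $|1-\alpha|$; that mechanism (via \cite[Lemma 3.1]{LarLun-16} and the quantity $\alpha_2$) belongs to the two-particle Hardy-type bound used in Lemma~\ref{Lem:E2} for \emph{large} boxes, not to Lemma 5.1. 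Your proposal therefore proves $E^R_n(m)\gtrsim |\alpha|(n-1)_+/L^2$, which fails to imply the Proposition precisely when $\alpha$ is close to $0$: in the fermionic convention of this paper $\alpha=\alpha_f$, so $\alpha=0,2$ are the fermionic ends and $\alpha=1$ the bosonic one, whence $|1-\alpha|\simeq 1$ while $|\alpha|\to 0$ there. (Your parenthetical remark that the bound would ``fail for $\alpha$ near the bosonic value $0$'' shows the conventions got mixed up; near the bosonic point $\alpha=1$ the target $|1-\alpha|$ is small and the flux bound is actually strongest.)

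The missing ingredient is exactly how the paper closes this gap: after obtaining the $|\alpha|$-proportional bounds (one for $\gamma<\sqrt2$ via Dyson's lemma, one for $\gamma\geq\sqrt2$ where the smeared flux covers the whole box), one must reinstate the fermionic constraint, which the magnetic-interaction argument discarded. The paper does this by a continuity argument: setting $h(\alpha):=\inf_{R,\,n\geq 2,\,m} E^R_n(m)/(n-1)_+$ with $R/L$ in the compact interval $[c_1,c_2]$, one has $h(\alpha)\geq C|\alpha|$ from the above, while $h(0)>0$ because at $\alpha=0$ the problem is that of free fermions in a Neumann box; continuity of $h$ then gives a uniform positive lower bound, which implies the stated $|1-\alpha|$ bound since $|1-\alpha|\leq 1$ on $[0,2]$. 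Without this step (or some substitute that genuinely uses antisymmetry for small $|\alpha|$), the proof is incomplete. Your treatment of the outside particles (the extra flux tubes only add non-negative terms) and the scaling reduction are fine and match the paper.
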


\begin{proof}
We simplify notations by working on the unit square and separate, as announced, the case of small $\alpha$ from the rest of the argument. 

\medskip

\noindent\textbf{Step 1, a bound linear in $|\alpha|$}. We introduce the function
\begin{equation}
	K_{\alpha}=\sqrt{2\vert\alpha\vert}\frac{I_{0}\left (\sqrt{2\vert\alpha\vert}\right )}{I_{1}\left (\sqrt{2\vert\alpha\vert}\right )}\nn
\end{equation}
where $I_{\nu}$ is the modified Bessel function of order $\nu$. One can show that 
\begin{equation}
	CI_{0}\left (2\right )\geqslant K_{\alpha}\geqslant 2\nn
\end{equation}
when $\alpha \in\left [0,2\right ]$. Indeed, the second bound comes from \cite[Lemma 5.1]{LarLun-16} while the first one follows from the fact that $I_{0}$ is an increasing function and that for $x \in\left [0,2\right ]$,
\begin{align}
	I_{1}\left (x\right )&:=\sum_{m=0}^{\infty}\frac{1}{m!\Gamma\left (m+2\right )}\left (\frac{x}{2}\right )^{2m+1}\geqslant \frac{x}{2\Gamma\left (2\right )}.\nn
\end{align}
We use~\cite[Lemma 5.1]{LarLun-16}, which provides two bounds depending on the range of $\gamma$, stated for bosonic based anyons. The proof starts from~\eqref{eq:basic mag}, so that lower bounds are obtained in terms of the modulus of the wave-function, which is always bosonic. One can thus follow the argument mutatis mutandis in our case.

	The first bound of the lemma holds when $\gamma <\sqrt{2}$ and is obtained by the application of Dyson's lemma on the kinetic energy added to the magnetic interaction energy obtained via~\eqref{eq:basic mag} (see~\cite[Lemma 1.1]{LarLun-16} for details). It states that
	\begin{equation}
		E_{n}^{R}\left (m\right )\geq \frac{\b \alpha \b \min\left \{\left (1-\gamma^{2}/2\right )^{-1},K_{\alpha}/2\right \}}{K_{\alpha}+2\b \alpha \b\left (-\ln\left (\gamma/\sqrt{2}\right )\right )}\left (n-1\right )_{+}.\nn
	\end{equation} 
	Under the additional assumption that $c_{1}\leq \gamma$ the divergence of the logarithm is under control and the bound reduces to
	\begin{equation}
		E_{n}^{R}\left (m\right )\geq C\frac{\b \alpha \b }{I_{0}\left (\sqrt{2}\right )+\left (-\log \left (c_{1}/\sqrt{2}\right )\right )}\left (n-1\right )_{+}\geq  C_{1}\b \alpha \b\left (n-1\right )_{+}.\nn
	\end{equation} 
	The second bound is valid for any $\gamma \geq \sqrt{2}$ and is obtained using the magnetic interaction of \cite[Lemma 1.1]{LarLun-16} where the indicator function equals $1$ on the whole box $Q$:
	\begin{equation}
		E_{n}^{R}\left (m\right )\geq  2\b \alpha \b \gamma^{-2}n\left (n-1\right )_{+}\geq \frac{2\b \alpha \b }{c_{2}^{2}}\left (n-1\right )_{+}.\nn
	\end{equation}
	These bounds yield the desired conclusion when $|\alpha| \geq c >0$ for a fixed constant $c>0$.
	There remains to deal with the case where $|\alpha|$ is allowed to become small. We will have to use that our basic wave-functions are fermionic, a constraint we dropped in the argument above (indeed, we have used only the second term in~\eqref{eq:enerinout}). 
	
	\medskip
	
	\noindent\textbf{Step 2, small $|\alpha|$.} We use the first term in~\eqref{eq:enerinout} and the fact that we work with antisymmetric functions. We split the vector potential~\eqref{eq:potbox} between the part generated by particles outside the box and particles inside the box : 
	\begin{align}\label{eq:inpot}
	 \bA^{\rm ex}_{j}\left (X_{n},Y_{m}\right) &= \bA^{\rm ex}_{j}\left (\bx_j,Y_{m}\right) := \sum_{k=1}^{m}\nabla^{\perp}w_{R}\left (\bx_{j}-\by_{k}\right )\nonumber\\
	 \bA^{\rm in}_{j}\left (X_{n},Y_{m}\right) &= \bA^{\rm in}_{j}\left (X_n\right) := \sum_{\substack{k=1\\k\neq j}}^{n}\nabla^{\perp}w_{R}\left (\bx_{j}-\bx_{k}\right).
	\end{align}
	The energy of a $L^2$-normalized antisymmetric wave-function $\Psi_n$ is then bounded from below as  
	\begin{align}\label{eq:smallal}
	\mathcal{E}^{R}_{n}\left (Q,Y_m\right )\left [\Psi_n\right ] &\geq \frac{1}{2}\sum_{j=1}^{n}\int_{Q ^{n}}\left \b\left( -\im\nabla_{\bx_j}+\alpha	\bA^{\rm ex}_{j}\left (\bx_j,Y_{m}\right) +\alpha	\bA^{\rm in}_{j}\left (X_n\right )\right )\Psi_{n}\right \b^{2} \d X_{n} \nn \\
	&\geq  \frac{1}{2}(1-\delta)\sum_{j=1}^{n}\int_{Q ^{n}}\left \b\left( -\im\nabla_{\bx_j}+\alpha	\bA^{\rm ex}_{j}\left (\bx_j,Y_{m}\right) \right )\Psi_{n}\right \b^{2} \d X_{n} \nn \\
	&- \frac{1}{2}(\delta^{-1}-1) |\alpha|^2 \sum_{j=1}^{n}\int_{Q ^{n}} \left| \bA^{\rm in}_{j}\left (X_n\right )\right|^2 \left| \Psi_n \right|^2 \d X_{n}
	\end{align}
	by expanding the square and using the Cauchy-Schwarz inequality, where $0 < \delta < 1$ can be chosen freely. The first term in the right-hand side is a a sum of one-body energies in the fermionic wave-function $\Psi_n$. Hence 
	$$ 
	\sum_{j=1}^{n}\int_{Q ^{n}}\left \b\left( -\im\nabla_{\bx_j}+\alpha	\bA^{\rm ex}_{j}\left (\bx_j,Y_{m}\right) \right )\Psi_{n}\right \b^{2} \geq \sum_{k=1} ^n \lambda_k 
	$$
	where $(\lambda_k)_k$ is the sequence of eigenvalues of the one-particle magnetic Laplacian 
	$$
	\left( -\im\nabla_{\bx}+\alpha	\bA^{\rm ex}_{j}\left (\bx,Y_{m}\right) \right )^2.
	$$
	It follows from Lemma~\ref{lem:N} below that if $n$ is large enough, then 
	$$ \sum_{k=1} ^n \lambda_k \geq C n$$
	independently of all parameters of the problem. 
	
	On the other hand, since we work on the unit square, our assumption on $\gamma$ implies a uniform lower bound on $R$, so that, returning to the definition~\eqref{wrr}-\eqref{AJRr}, we have for the second term of right-hand side of~\eqref{eq:smallal}
	$$ |\alpha|^2 \sum_{j=1}^{n}\int_{Q ^{n}} \left| \bA^{\rm in}_{j}\left (X_n\right )\right|^2 \left| \Psi_n \right|^2 \d X_{n} \leq C |\alpha|^2 n ^3.$$
	Inserting in~\eqref{eq:smallal} we find   
	$$ \mathcal{E}^{R}_{n}\left (Q,Y_m\right )\left [\Psi_n\right ] \geq C n (1-\delta) - C' (\delta^{-1}-1)|\alpha|^2 n ^3 $$
	and since we assume that $n<\overline{N}$ for some fixed constant $\overline{N}$, it suffices to choose $|\alpha|$ small enough to deduce that $ \mathcal{E}^{R}_{n}\left (Q,Y_m\right )\left [\Psi_n\right ] $ is bounded below by a positive constant. 
	
	There only remains to combine with the bounds from Step 1 of the proof to obtain, after a last adjustment of constants, the desired conclusion for the whole range of $\alpha$, under the stated constraints on $n$ and $\gamma$. 
	\end{proof}

\subsection{Small boxes}

We now work under the assumption that $\gamma \geq c_{2} > \sqrt{2}$. We shall actually assume more, namely that the particle number is uniformly bounded from above, and that $\gamma$ is sufficiently large compared to that upper bound. We use only the first term in~\eqref{eq:enerinout} to obtain:

\begin{proposition}[\textbf{Exclusion principle on small boxes}]\mbox{}\label{lem:smallexc}\\
	Assume that $\gamma \geq C\sqrt{N}\geq \sqrt{2}$  for a sufficiently large constant $C$. There exists a $\underline{N}$ such that, if $\underline{N} < n \leq N$, then  
	\begin{equation}
		E^{R}_{n}\left ( Q,m \right ) \geq C \frac{n}{\b Q\b}
	\end{equation}
	independently of $\alpha$, $R$ and $m$.
\end{proposition}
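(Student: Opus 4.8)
\textbf{Plan of proof for Proposition~\ref{lem:smallexc}.} The goal is to show that, once $\gamma = R/L \geq c_2 \geq \sqrt 2$, the many-body Neumann energy $E^R_n(Q,m)$ behaves like that of free fermions: only finitely many particles can sit at near-zero kinetic energy. The starting point is the reduction to the one-body magnetic Laplacian explained just above the statement: in the regime $\gamma \geq \sqrt 2$, the field $\curl_{\bx_j}\bA^R_j(X_n,Y_m)$ depends only on the external positions $Y_m$, so after the change of gauge $\Psi_n \to \widetilde\Psi_n$ the functional~\eqref{eq:func small} becomes a sum of \emph{decoupled} one-body magnetic energies $\tfrac12\langle\, \widetilde\Psi_n, \sum_j H_{\bA}^{(j)}\widetilde\Psi_n\rangle$ with one common external potential $\bA = \bA(\,\cdot\,;Y_m)$ acting on each coordinate. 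Here $\bA$ and $\curl\bA$ are bounded (for fixed $R>0$ and $Y_m$), so Lemma~\ref{lem:N} applies.

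First I would fix $\Lambda = 2$ and let $\underline N := \lceil f(2)\rceil$, where $f$ is the field-independent counting bound of Lemma~\ref{lem:N}, applied on the unit square $Q_0$; by scaling length units by $L$ (which turns $H_{\bA}$ on $Q_0$ into $L^{-2}$ times the corresponding operator on $Q$, with rescaled field), the operator $H_{\bA}$ on $Q$ has at most $\underline N$ eigenvalues below $2L^{-2}$, uniformly in $\bA$ and hence in $Y_m$. Next, for an $L^2_{\rm asym}(Q^n)$-normalized $\widetilde\Psi_n$, I would diagonalize $H_{\bA}$ on $L^2(Q)$ as $\sum_k \lambda_k |u_k\rangle\langle u_k|$ with $\lambda_1 \leq \lambda_2 \leq \cdots$, expand $\widetilde\Psi_n$ in the associated tensor (Slater) basis, and use antisymmetry: the one-body density matrix $\gamma^{(1)}_{\widetilde\Psi_n}$ satisfies $0 \leq \gamma^{(1)} \leq 1$ and $\Tr\gamma^{(1)} = n$, so
\begin{equation}
\sum_{j=1}^n \bigl\langle \widetilde\Psi_n, H_{\bA}^{(j)}\widetilde\Psi_n\bigr\rangle = \Tr\bigl(H_{\bA}\,\gamma^{(1)}_{\widetilde\Psi_n}\bigr) \geq \sum_{k=1}^n \lambda_k \geq (n-\underline N)_+ \cdot \frac{2}{L^2},\nn
\end{equation}
since at most $\underline N$ of the $\lambda_k$ can be below $2L^{-2}$. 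Dividing the prefactor $\tfrac12$ back in gives $\mathcal{E}^R_n(Q,Y_m)[\Psi_n] \geq (n-\underline N)_+/|Q|$, and taking the infimum over normalized $\Psi_n$ and then over $Y_m \in (Q^c)^m$ yields $E^R_n(Q,m) \geq (n-\underline N)_+/|Q|$, as claimed; the bound is manifestly independent of $R$ and $m$ because $\underline N = \lceil f(2)\rceil$ was.

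The one genuinely delicate point is making sure the gauge transformation and the bound are legitimately \emph{uniform} over all admissible $Y_m$, including configurations where some $\by_k$ lies within distance $R$ of $Q$: there the smeared field $\curl\bA$ genuinely penetrates the box, but it is still a bounded function (bounded by $2m/R^2$ for fixed $R$), so Lemma~\ref{lem:N}'s field-independence is exactly what rescues us — no estimate on the field norm is needed. A minor technical wrinkle is that Lemma~\ref{lem:N} is stated for $\vec\nu\cdot\bA \equiv 0$ on $\partial Q_0$ and claims the general case follows; I would invoke that reduction, together with the reduction of the full functional~\eqref{eq:enerinout} to~\eqref{eq:func small} (dropping the nonnegative second and third terms), without reproving either. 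Everything else is the standard free-fermion eigenvalue-sum argument via the one-body density matrix.
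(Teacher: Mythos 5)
Your proposal is correct and follows essentially the same route as the paper: the reduction to a decoupled one-body Neumann magnetic Laplacian via the gauge argument valid for $\gamma \geq \sqrt{2}$, then Lemma~\ref{lem:N} with $\Lambda = 2$, scaling by $L$, and the Pauli principle (which you spell out via the one-body density matrix and eigenvalue sums, making precise what the paper calls an immediate consequence of fermionic symmetry). No gaps; the uniformity in $Y_m$ and $R$ is handled exactly as in the paper, through the $\bA$-independence of the counting function $f$.
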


A possible proof of the above is as in Step~2 of the proof of Proposition~\ref{lem:medium}, using that $\gamma^{-1}$ (instead of $\alpha$) is large enough to treat the magnetic field generated by particles inside the box perturbatively. We provide another proof, observing that this field is essentially equivalent to an external one in the regime of small boxes.  Indeed, it is constant for $\gamma  > \sqrt{2}$, covering the whole box, and the total flux in the box is small when $\gamma$ is large. One can then reduce to a fermionic one-body problem with an external magnetic field. 

\begin{lemma}[\textbf{Reduction to a one-particle operator}]\label{lem:small1body}\mbox{}\\
Assume $R > \sqrt{2} L$, let $Y_m = (\by_1,\ldots,\by_m) \in (Q^c)^{m}$. Define the vector potential
\begin{equation}\label{eq:Asmall}
\bA (\bx) :=  (n-1)\frac{\bx^\perp}{ R^2}  + \sum_{j= 1} ^m \nabla^{\perp} w_R (\bx-\by_j) 
\end{equation}
associated to the magnetic field 
\begin{equation}\label{eq:Bsmall}
\curl \, \bA (\bx) = \frac{2}{R^2} \left( (n-1) +  \sum_{k=1} ^m \1_{|\bx-\by_k|\leq R}\right).
\end{equation}
Define the Neumann realization of the associated one-particle magnetic Schr\"odinger operator acting on $L^{2}\left (Q\right )$
\begin{equation}\label{eq:hsmall}
h_\bA := \left( -\im \nabla_\bx + \alpha \bA (\bx) \right)^2 = \sum_{k=1} ^\infty \lambda_k |u_k\rangle \langle u_k|  
\end{equation}
together with its spectral decomposition (with eigenvalues labeled in increasing order and $L^2$-normalized eigenfunctions). We have that, for any normalized $\Psi_n \in L_{\rm asym}^2 (Q^n)$ and any $k\in \mathbb{N}$ 
\begin{equation}\label{eq:func small}
\mathcal{E}^{R}_{n}\left (Q,Y_m\right )\left [ \Psi_n\right ] \geq \frac{1}{2} n \lambda_k \left( 1 - \frac{k-1}{n} - C \frac{kn L^2}{R^2}\right).
\end{equation}
\end{lemma}

\begin{proof}
We use only the first term in~\eqref{eq:enerinout} here. By fermionic symmetry 
\begin{multline*} 
2 \mathcal{E}^{R}_{n}\left (Q,Y_m\right )\left [ \Psi_n\right ] \geq \\ n\int_{Q^n} \left|\left( -\im \nabla_{\bx_1} + \alpha \sum_{k= 2}^n \nabla^{\perp} w_R (\bx_1-\bx_k) + \alpha \sum_{j= 1} ^m \nabla^{\perp} w_R (\bx_{1}-\by_j)\right)  \Psi_n \right|^2 \d\bx_1\ldots \d\bx_n.
\end{multline*}
We also use the translation invariance to only consider squares on the form $Q=\left [0,L\right ]^{2}$.
We observe that, under our assumption on $\gamma = R/L$ we have (see the definition \eqref{AJRr}) that $\b\bx_{1}-\bx_{k}\b_{R}=R$ and then that
$$\nabla^{\perp} w_R (\bx_1-\bx_k) = \frac{(\bx_1-\bx_k) ^\perp}{R^2}$$
for $\bx_1,\bx_k \in Q$. Define then  
$$ \varphi (\bx,\by) := - \frac{\bx \cdot \by ^{\perp}}{R^2} \quad \mbox{ and } \quad  \Phi_n := \prod_{k= 2} ^n e^{-\im \alpha\varphi (\bx_1,\bx_k)} \Psi_n .$$
Since 
$$ \nabla_{\bx} \varphi (\bx,\by) = - \frac{\by ^{\perp}}{R^2}$$
we find that 
\begin{multline*}
\int_{Q^n} \left|\left( -\im \nabla_{\bx_1} + \alpha \sum_{k= 2}^n \nabla^{\perp} w_R (\bx_1-\bx_k) + \alpha \sum_{j= 1} ^m \nabla^{\perp} w_R (\bx_{1}-\by_j)\right)  \Psi_n \right|^2 \d\bx_1\ldots \d\bx_n = \\ 
\int_{Q^n} \left|\left( -\im \nabla_{\bx_1} + \alpha (n-1) \frac{\bx_1^\perp}{ R^2} + \alpha \sum_{j= 1} ^m \nabla^{\perp} w_R (\bx_{1}-\by_j)\right)  \Phi_n \right|^2 \d\bx_1\ldots \d\bx_n = \\
\int_{Q^n} \left|\left( -\im \nabla_{\bx_1} + \alpha \bA (\bx_1) \right)  \Phi_n \right|^2 \d\bx_1\ldots \d\bx_n.
\end{multline*}
Inserting now the spectral decomposition of $h_\bA$
\begin{align*}
\int_{Q^n} \left|\left( -\im \nabla_{\bx_1} + \alpha \bA (\bx_1) \right)  \Phi_n \right|^2 \d\bx_1\ldots \d\bx_n &= \sum_{j=1} ^\infty \lambda_j \int_{Q^{n-1}} \left| \int_{Q} \overline{u_j (\bx_1)} \Phi_n (\bx_1,\ldots,\bx_n) \d\bx_1 \right|^2 \d\bx_2 \ldots \d\bx_n \\
&\geq \lambda_k \left( 1 - \sum_{j=1} ^{k-1} \int_{Q^{n-1}} \left| \int_{Q} \overline{u_j (\bx_1)} \Phi_n (\bx_1,\ldots,\bx_n) \d\bx_1 \right|^2 \d\bx_2 \ldots \d\bx_n \right)
\end{align*}
where we picked an energy level $\lambda_k$ and dropped the contribution of all levels with $j< k$. Next we split  
\begin{multline*}
\int_{Q^{n-1}} \left| \int_{Q} \overline{u_j (\bx_1)} \Phi_n (\bx_1,\ldots,\bx_n) \d\bx_1 \right|^2 \d\bx_2 \ldots \d\bx_n \\= \int_{Q^{n}} \overline{u_j (\bx_1)} u_j (\by_1)\Phi_n (\bx_1,\ldots,\bx_n) \overline{\Phi_n (\by_1,\ldots,\bx_n)} \d \by_1 \d\bx_1 \d\bx_2 \ldots \d\bx_n = \mathrm{I} + \mathrm{II}
\end{multline*}
with 
\begin{align*} 
\mathrm{I}\,&= \int_{Q^{n}} \overline{u_j (\bx_1)} u_j (\by_1)\Psi_n (\bx_1,\ldots,\bx_n) \overline{\Psi_n (\by_1,\ldots,\bx_n)} \d \by_1 \d\bx_1 \d\bx_2 \ldots \d\bx_n\\
\mathrm{II}\,&= \int_{Q^{n}} \left( \prod_{k= 2} ^n e^{\im\alpha \varphi(\bx_1,\bx_k) - \im\alpha \varphi(\by_1,\bx_k)} -1\right)\overline{u_j (\bx_1)} u_j (\by_1)\Psi_n (\bx_1,\ldots,\bx_n) \overline{\Psi_n (\by_1,\ldots,\bx_n)} \d \by_1 \d\bx_1 \d\bx_2 \ldots \d\bx_n.
\end{align*}
Now, the first term $\mathrm{I}$ is just $n^{-1}$ times the occupation number of the mode $u_j$ in the \emph{fermionic} wave-function $\Psi_n$. Hence 
$$ |\,\mathrm{I}\,| \leq \frac{1}{n}.$$
On the other hand, since all variables sit within the box $Q=\left [0,L\right ]^{2}$ we clearly have 
$$ \left| \varphi(\bx_1,\bx_k) - \varphi(\by_1,\bx_k)\right| \leq C \frac{L^{2}}{R^2}$$
 and hence 
$$ \left|1- \prod_{k= 2} ^n e^{\im\alpha \varphi(\bx_1,\bx_k) - \im\alpha \varphi(\by_1,\bx_k)} \right| \leq C \b \alpha\b\frac{nL^2}{R^2}.$$
Thus, with the Cauchy-Schwartz inequality
\begin{align*}
|\mathrm{II}| &\leq C \frac{nL^2}{R^2} \int_{Q^{n}} |u_j (\bx_1)| |u_j (\by_1)| \left|\Psi_n (\bx_1,\ldots,\bx_n) \right| \left|\Psi_n (\by_1,\ldots,\bx_n)\right| \d \by_1 \d\bx_1 \d\bx_2 \ldots \d\bx_n\\
&\leq C \frac{nL^2}{R^2} \int_{Q^{n}} |u_j (\bx_1)|^2 \left|\Psi_n (\by_1,\ldots,\bx_n)\right| ^2 \d \by_1 \d\bx_1 \d\bx_2 \ldots \d\bx_n = C \frac{nL^2}{R^2}. 
\end{align*}
The desired final result is obtained by collecting the previous inequalities.
%
\end{proof}

Now we need lower bounds on the $\lambda_{k}$'s, eigenvalues of a fermionic one-body problem with an external magnetic field. The latter can be quite general, because of the influence of particles sitting outside the box. Diamagnetic considerations however imply estimates independent of this field, see Appendix~\ref{sec:app}. Based on Lemma~\ref{lem:N} below we can conclude the 

\begin{proof}[Proof of Proposition~\ref{lem:smallexc}]
Scaling length units by a factor $L$, we apply Lemma~\ref{lem:N} to the magnetic Schr\"odinger operator~\eqref{eq:hsmall}. We thus find that the number of energy levels below a threshold $\Lambda L^{-2}$ is bounded above by a function of $\Lambda$ only. Pick some value of $\Lambda$, say $\Lambda = 2$ for reference and denote $\underline{N}-1$ the number of eigenvalues of $h_A$ below $\Lambda L^{-2}$. Using~\eqref{eq:func small} (with a proper choice of $\lambda_k$ in relation with $2 L^{-2}$) and inserting the consequences of~Lemma~\ref{lem:N} we just mentioned, we find 
$$
\mathcal{E}^{R}_{n}\left (Q,Y_m\right )\left [ \Psi_n\right ] \geq C L^{-2} n \left( 1 - \left (\underline{N}-1\right ) n^{-1} - C \underline{N} n L^2 R^{-2}\right).
$$
Under our stated assumptions on $n$ and $\gamma$, the quantity inside the parenthesis is bounded below by a positive universal constant, and this concludes the proof.
\end{proof}

\subsection{Conclusion of proofs}

First note that Proposition~\ref{lem:exclbox} follows immediately from the combination of Lemma~\ref{lem:finite}, Propositions~\ref{lem:smallexc}-\ref{lem:medium} and \ref{lem:excllarge}:

\begin{proof}[Proof of Proposition~\ref{lem:exclbox}]
Let $\underline{N}$ be the minimum of the lower bounds on $n$ appearing in Propositions~\ref{lem:smallexc} and \ref{lem:medium}. Let $\overline{N}=4^k$ be such that 
$$ \underline{N} \leq 4^{k-1} + 1.$$
Using~\eqref{eq:lowest} and applying Lemma~\ref{lem:finite} we find for $N\geq \overline{N}$
$$
E^{R}_{N} \left (Q,m\right ) \geq E (N,Q) \geq C_{k} N \min\left(E(4^{k-1}+1,Q), \ldots, E(4^k,Q)\right). 
$$
Note that the assumption $E(n,Q) >0$ for all $ 4^{k-1}+1 \leq n\leq 4N$ made in Lemma~\ref{lem:finite} is valid as shown by a combination of Propositions~\ref{lem:smallexc}-\ref{lem:medium} and~\ref{lem:excllarge} and Lemma~\ref{lem:ap_E2}.

We are thus reduced to a uniform strictly positive lower bound on $|Q| E(n)$ for particle numbers $n$ with $ \underline{N} \leq n \leq \overline{N}$. We obtain this using bounds on $E (n,Q)$ provided by Propositions~\ref{lem:excllarge}-\ref{lem:medium} or~\ref{lem:smallexc} depending on the value of $\gamma = R/L$. If $\gamma \geq C (\overline{N}) ^{1/2}$ for $C$ sufficiently large, we may use Proposition~\ref{lem:smallexc}. We use~\eqref{eq:gage1} if $\gamma < c_1$, the constant in the statement of Proposition~\ref{lem:excllarge}. Finally we use Proposition~\ref{lem:medium} in the remaining range of $\gamma$. In all cases we find 
$$ E (n,Q) = \inf_{R,m} E^{R}_{n}\left(Q,m\right) \geq \frac{C|1-\alpha|}{|Q|}$$
when $\underline{N} \leq n \leq \overline{N}$, independently of $R$, and this yields the result.
\end{proof}

We next use Proposition~\ref{lem:exclbox} to conclude the 

\begin{proof}[Proof of Theorem~\ref{LEP_an}]
We  work under the assumption that $\Psi_{N}$ has a density satisfying
\begin{equation}\label{as:rho3}
	1 + \underline{N} \leq  N_< \leq \int_{Q}\rho_{\Psi_{N}}\left (\bx\right )\d \bx\leq N_>
\end{equation}
where $\underline{N}$ is again the minimum of the lower bounds on $n$ appearing in Propositions~\ref{lem:smallexc} and \ref{lem:medium}. We split the quantity to estimate according to how many particles are in the square $Q$
\begin{equation}\label{eq:split}
	\mathcal{E}_{Q}^{R}\left [\Psi_{N}\right ] \geq \sum_{n=0}^{N}	E_{n}^{R}\left (Q,m\right )p_{n}\left (\Psi_{N},Q\right )
\end{equation}
with $p_{n}\left (\Psi_{N},Q\right )$ the $n$-particle probability distribution induced from $\Psi_{N}$
\begin{equation}
	p_{n}\left (\Psi_{N},Q\right )=\sum_{A\subseteq \left \{1,\cdots ,N\right \},\b A\b =n}\int_{(Q^{c})^{N-n}}\int_{Q^{n}}\left \b\Psi_{N}\right \b^{2}\prod_{k\in A}\d \bx_{k}\prod_{l\notin A}\d \bx_{l}\nn
\end{equation}
satisfying
\begin{equation}\label{eq:prob}
	\sum_{n=0}^{N}	p_{n}\left (\Psi_{N},Q\right )=1\;\;\text{and}\;\;	\sum_{n=0}^{N}	np_{n}\left (\Psi_{N},Q\right )=\int_{Q}\rho_{\Psi_{N}}.
\end{equation}
 Then, using Proposition~\ref{lem:exclbox} we obtain  
\begin{equation}\label{eq:split2}
	\mathcal{E}_{Q}^{R}\left [\Psi_{N}\right ] \geq \frac{C\b \alpha -1\b}{|Q|} \sum_{n=\underline{N}}^{N} n p_{n}\left (\Psi_{N},Q\right ).\nn
\end{equation}
On the other hand, using~\eqref{as:rho3} and~\eqref{eq:prob} we have 
\begin{align}\label{eq:sufficient}
\sum_{n\geq \underline{N}} n p_{n}\left (\Psi_{N},Q\right ) &\geq \underline{N} + 1 - \sum_{n < \underline{N}} n p_{n}\left (\Psi_{N},Q\right )  \nonumber\\
&\geq \underline{N} + 1 - (\underline{N} - 1)\sum_{n\geq 0} p_{n}\left (\Psi_{N},Q\right)\nonumber \\
&\geq 2\nn
\end{align}
and hence 
$$\mathcal{E}_{Q}^{R}\left [\Psi_{N}\right ] \geq \frac{C\b \alpha -1\b}{|Q|N_>} \int_{Q}\rho_{\Psi_{N}}\left (\bx\right )\d \bx,$$
concluding the proof.
\end{proof}

\appendix

\section{Diamagnetic estimates with Neumann boundary conditions}\label{sec:app}

Here we state the lemma on eigenvalues of one-body magnetic Laplacians we have used twice in the paper. As already mentioned in Remark~\ref{rem:fermions}, it can be obtained following general arguments in~\cite{Frank-09,HunSim-04}. We provide our self-contained proof for the convenience of the reader.

\begin{lemma}[\textbf{Diamagnetic bound}]\mbox{}\label{lem:N}\\
	Consider the Neumann realization of the magnetic Laplacian
\begin{equation}
	H_{\bA}:=\left (-\im\nabla +\bA\right )^{2}\nn
\end{equation}
on $H^{1}_{\bA}\left (Q_{0}\right ),$ where $\bA$ and $\curl \bA$ are bounded functions. Define the number
	\begin{equation}
		N\left (\Lambda , \bA\right ):=\text{the number of eigenvalues of}\;H^{\bA}\;\text{less than or equal to}\; \Lambda.\nn
	\end{equation}
	There exists a function $f:\R\to \R$ independent of $\bA$ such that 
	\begin{equation}
		N\left (\Lambda , \bA\right )\leq f\left (\Lambda\right ).\nn
	\end{equation}
\end{lemma} 

We first reduce to the case where $\vec{\nu} \cdot \bA \equiv 0$ on the boundary. 

\begin{lemma}[\textbf{Reduction to tangential vector potentials}]\label{lem:reducA}\mbox{}\\
The general case of Lemma~\ref{lem:N} is implied by the particular case where the normal component of $\bA$ vanishes on the boundary, $\vec{\nu} \cdot \bA \equiv 0$ on $\partial Q$.  
\end{lemma}

\begin{proof}
Let $\phi$ be the unique solution to the Dirichlet problem 
$$ 
\begin{cases}
 \Delta \phi = B = \curl (\bA) \mbox{ in } Q_0\\
 \phi = 0 \mbox{ on } \partial Q_0.
\end{cases}
$$
Note that  $\phi$ is well defined and that $\nabla\phi$ has a well-defined trace on $\partial Q$ since $B \in L^{\infty}$. This follows from elliptic
regularity theory.
By definition 
$$ \curl \bA = \curl \nabla^{\perp} \phi$$
and hence there exists some $\varphi$ such that 
$$ \bA = \nabla^{\perp} \phi + \nabla \varphi \mbox{ in } Q_0.$$
Thus we may change gauge
$$ e^{\im \varphi} H_{\bA} e^{-\im \varphi} = H_{\nabla^{\perp} \phi}$$
where the multiplication operator $e^{\im \varphi}$ is unitary. Hence $H_{\bA}$ and $H_{\nabla^{\perp} \phi}$ have the same spectrum. Since $\phi$ is constant on the boundary, clearly $\vec{\nu} \cdot \nabla^{\perp} \phi \equiv 0$ there.
\end{proof}

From now on we thus assume that
\begin{equation}
	\left \{
	\begin{array}{r c l}
		\curl \bA &=B  \;\;\;\;\;\;\;\;\;\;\;\;\;\;\\
		\bA\cdot \vec{\nu} &= 0 \;\;\text{on} \;\; \partial Q_{0}\nn
	\end{array}
	\right .  
\end{equation}
where $\vec{\nu}$ is the normal vector of $\partial Q_{0}$.
We denote $\left \{\lambda_{j}\left (\bA\right )\right \}_{j=1}^{\infty}$ the eigenvalues of $H_{\bA}$. The associated eigenfunctions $\Psi_j$ are solutions of
\begin{equation}
	\left \{
	\begin{array}{r c l}
		\left (-\im\nabla +\bA\right )^{2}\Psi_{j}=&\lambda_{j}\Psi_{j} \;\;\text{on} \;\; Q_{0} \\
		\vec{\nu}   \cdot\left (-\im\nabla +\bA\right )\Psi_{j}=&\left (-\im\nabla\cdot \vec{\nu}\right )\Psi_{j} =0 \;\;\text{on} \;\; \partial Q_{0}.\nn
	\end{array}
	\right .  
\end{equation}
For any real number $e >0$
we define the magnetic Neumann Green function $G^{\bA , e}_{\by}$ to be the solution of
\begin{equation}
	\left \{
	\begin{array}{r c l}
		\left (H_{\bA}+e\right )G^{\bA , e}_{\by}\left (\bx\right )=&\delta_{\by}\left (\bx\right ) \;\;\text{on} \;\; Q_{0} \\
		\left (-\im\nabla\cdot \vec{\nu}\right )G^{\bA , e}_{\by}\left (\bx\right ) =&0 \;\;\text{on} \;\; \partial Q_{0}.\nn
	\end{array}
	\right .  
	\label{def_green}
\end{equation}
in the weak sense that
\begin{equation}
	\int_{Q}G^{\bA , e}_{\by}\left (\bx\right ) \left (-\im\nabla +\bA\right )^{2}\phi\left (\bx\right )\d \bx +e\int_{Q}G^{\bA , e}_{\by}\left (\bx\right )\phi\left (\bx\right )\d \bx =\phi\left (\by\right )
	\label{weakgreen}\nn
\end{equation}
for all $\phi \in C^{\infty}\left (Q\right )$. 

This way, the function
\begin{equation}
	u_{g}\left (\bx\right )=\int_{Q}G^{\bA , e}_{\by}\left (\bx\right )g\left (\by\right )\d \by
	\label{ug}\nn
\end{equation}
$g\in C^{\infty}\left (Q\right )$ is the unique solution of 
\begin{equation}
	\left (H_{\bA}+e\right )u_{g}\left (\bx\right )=g\left (\bx\right )
	\label{weaku}\nn
\end{equation}
in the above weak sense (cf the Lax-Milgram theorem). In other words $G^{\bA , e}_{\by} (\bx)$ is the integral kernel of $\left (H_{\bA}+e\right )^{-1}.$ 

Our aim is now to show that the Green function with magnetic field is always smaller than the one without magnetic field. To this end we use Kato's inequality as in~\cite[Section~4.4]{LieSei-09}. We need a version thereof valid in the case of Neumann boundary conditions:
\begin{lemma}[\textbf{Kato's Inequality with Neumann boundary conditions}]\mbox{}\label{kato_ine_neum}\\
	Let $u\in H^{1}\left (Q\right )$ such that  $ \left (-\im\nabla\cdot \vec{\nu}\right )u =0$ and $H_{\bA}u$ be in $L^{1}_{\mathrm{loc}}\left (Q\right )$. Let $\mathrm{sgn}\left (u\right )=\bar{u}/\left \b u\right \b$ if $u\left (\bx\right )\neq 0$ and $0$ otherwise. Then
	\begin{equation}
		-\Delta\left \b u\right \b\leqslant \mathrm{Re}\left [\mathrm{sgn}\left (u\right )H_{\bA}u\right ]\nn
	\end{equation}
	in the weak sense that the inequality holds when integrated against any non-negative $\phi \in C^{\infty}\left (Q\right )$ with $\vec{\nu} \cdot \nabla \phi \equiv 0$ on the boundary, i.e. 
	$$ - \int_{Q} |u| \Delta \phi \leq \int_{Q} \mathrm{Re}\left [\mathrm{sgn}\left (u\right )H_{\bA}u\right ] \phi $$
\end{lemma}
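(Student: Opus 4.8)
The plan is to adapt the classical proof of Kato's inequality for magnetic Schr\"odinger operators (as in \cite[Section~4.4]{LieSei-09}) to a bounded domain with Neumann conditions, the only genuinely new point being the careful bookkeeping of boundary terms. Throughout I write $\nabla_{\bA}:=\nabla+\im\bA$ for the magnetic gradient, so that $H_{\bA}=-\nabla_{\bA}\cdot\nabla_{\bA}$, and I read the (weak) magnetic Neumann condition on $u$ as the statement that $\langle\nabla_{\bA}v,\nabla_{\bA}u\rangle_{L^2(Q)}=\langle v,H_{\bA}u\rangle_{L^2(Q)}$ for every $v\in H^1(Q)$. First I would reduce to the case $\bA\cdot\vec{\nu}\equiv0$ on $\partial Q$: a gauge transformation $u\mapsto e^{\im\varphi}u$, $\bA\mapsto\bA-\nabla\varphi$ leaves $|u|$, the quantity $\mathrm{Re}[\mathrm{sgn}(u)H_{\bA}u]$ and the class of admissible test functions $\phi$ unchanged, and $\varphi$ can be chosen exactly as in the reduction preceding Lemma~\ref{lem:N} so that the new vector potential is tangent to $\partial Q$.

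Next I would introduce the regularization $|u|_\epsilon:=\sqrt{|u|^2+\epsilon^2}$ and $u_\epsilon:=u/|u|_\epsilon$, both in $H^1(Q)$ with $|u_\epsilon|\le1$. Two elementary identities drive the computation: $|u|_\epsilon\nabla|u|_\epsilon=\tfrac12\nabla|u|^2=\mathrm{Re}(\bar u\,\nabla_{\bA}u)$ (the magnetic term drops since $\mathrm{Re}(\im\bar u\bA u)=0$), whence $\nabla|u|_\epsilon=\mathrm{Re}(\overline{u_\epsilon}\,\nabla_{\bA}u)$ and the pointwise diamagnetic bound $|\nabla|u|_\epsilon|\le|\nabla_{\bA}u|$. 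Given $0\le\phi\in C^\infty(\overline{Q})$ with $\vec{\nu}\cdot\nabla\phi\equiv0$ on $\partial Q$, Green's formula (no boundary term, since $\vec{\nu}\cdot\nabla\phi=0$) gives $-\int_Q|u|_\epsilon\Delta\phi=\int_Q\nabla|u|_\epsilon\cdot\nabla\phi=\mathrm{Re}\int_Q\overline{u_\epsilon}\,\nabla_{\bA}u\cdot\nabla\phi$. Testing the weak identity for $H_{\bA}u$ against $v=u_\epsilon\phi\in H^1(Q)$ and using the Leibniz rule $\nabla_{\bA}(u_\epsilon\phi)=\phi\,\nabla_{\bA}u_\epsilon+u_\epsilon\,\nabla\phi$ then yields
\[
\mathrm{Re}\int_Q\overline{u_\epsilon}\,\nabla_{\bA}u\cdot\nabla\phi=\int_Q\phi\,\mathrm{Re}\big(\overline{u_\epsilon}\,H_{\bA}u\big)-\int_Q\phi\,\mathrm{Re}\big(\overline{\nabla_{\bA}u_\epsilon}\cdot\nabla_{\bA}u\big).
\]
Expanding $\nabla_{\bA}u_\epsilon=|u|_\epsilon^{-1}\nabla_{\bA}u-u\,|u|_\epsilon^{-2}\nabla|u|_\epsilon$ and using $\mathrm{Re}(\bar u\,\nabla_{\bA}u)=|u|_\epsilon\nabla|u|_\epsilon$, one finds $\mathrm{Re}(\overline{\nabla_{\bA}u_\epsilon}\cdot\nabla_{\bA}u)=|u|_\epsilon^{-1}\big(|\nabla_{\bA}u|^2-|\nabla|u|_\epsilon|^2\big)\ge0$ by the diamagnetic bound. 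Since $\phi\ge0$ this proves $-\int_Q|u|_\epsilon\Delta\phi\le\int_Q\phi\,\mathrm{Re}(\overline{u_\epsilon}H_{\bA}u)$.

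Finally I would let $\epsilon\to0$: $|u|_\epsilon\to|u|$ uniformly on $\overline{Q}$ (so the left-hand side converges, as $\Delta\phi\in L^1(Q)$) and $\overline{u_\epsilon}\to\mathrm{sgn}(u)$ pointwise a.e.\ with $|\overline{u_\epsilon}|\le1$, so dominated convergence — with dominating function $|\phi|\,|H_{\bA}u|$, integrable since $H_{\bA}u\in L^1_{\loc}(Q)$ (and, in the situations where the lemma is applied, integrable up to $\partial Q$; otherwise one first restricts to $\phi$ compactly supported in $Q$) — gives $-\int_Q|u|\Delta\phi\le\int_Q\phi\,\mathrm{Re}(\mathrm{sgn}(u)H_{\bA}u)$, which is the assertion.

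I expect the main obstacle to be not the (standard) pointwise magnetic Kato computation but the control of boundary terms: one must verify that both integrations by parts — the one producing $-\int_Q|u|_\epsilon\Delta\phi$ and the one implicit in testing the weak equation for $H_{\bA}u$ against $v=u_\epsilon\phi$ (which need not vanish on $\partial Q$) — leave no boundary residue, which is precisely where the Neumann condition on $\phi$ and the weak magnetic Neumann condition on $u$ enter, the latter being available in the convenient form $\langle\nabla_{\bA}v,\nabla_{\bA}u\rangle=\langle v,H_{\bA}u\rangle$ only after the gauge reduction to $\bA\cdot\vec{\nu}=0$. A secondary technical point is checking that $v=u_\epsilon\phi$ is an admissible $H^1(Q)$ test vector and justifying the $\epsilon\to0$ passage under the sole hypothesis $H_{\bA}u\in L^1_{\loc}(Q)$.
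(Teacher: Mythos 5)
Your proof is correct and follows essentially the same route as the paper: the classical Kato regularization $\sqrt{|u|^2+\epsilon^2}$ from Reed--Simon, adapted to the Neumann square by checking that the integrations by parts produce no boundary terms. The only difference is bookkeeping: you let the Neumann condition on $u$ enter through the quadratic-form identity (testing against $u\phi/\sqrt{|u|^2+\epsilon^2}$, with the gauge reduction $\bA\cdot\vec{\nu}=0$ as in the paper's setup), whereas the paper invokes the observation that $\vec{\nu}\cdot\nabla u=0$ forces $\vec{\nu}\cdot\nabla\sqrt{|u|^2+\epsilon^2}=0$ so that no boundary residue appears -- your version is, if anything, a more detailed write-up of the same argument, and your caveat about $H_{\bA}u\in L^1_{\mathrm{loc}}$ versus integrability up to $\partial Q$ matches how the lemma is actually applied.
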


\begin{proof}
	We may follow the proof of~\cite[Theorem- X.33]{ReeSim2}. Essentially, if $u$ is smooth and non-vanishing, the result holds pointwise. Hence the inequality holds pointwise for smooth non-vanishing functions. Using a standard regularisation we deduce that the inequality holds in the weak sense by performing integrations by parts. The boundary terms vanish because we have~$\vec{\nu} \cdot \nabla \phi \equiv 0$ and $\vec{\nu} \cdot \nabla |u| \equiv 0$ on the boundary. 
	
	A complete proof is obtained as in~\cite[Theorems~X.27~and~X.33]{ReeSim2} by working with the regularized absolute value
	$$
		u_{\epsilon}\left (\bx\right )=\sqrt{\b u\left (\bx\right )\b ^2 +\epsilon^{2}}.
	$$
	The additional ingredient is the observation that $\vec{\nu} \cdot \nabla u \equiv 0$ on the boundary implies that also $\vec{\nu} \cdot \nabla u_{\epsilon} \equiv 0$ on the boundary, so that integration by parts do not produce boundary terms.
	
%
\end{proof}

We prove another intermediary lemma.
\begin{lemma}[\textbf{Positivity of the Laplace Green function}]\mbox{}\label{lem:posGe}\\
	The operator
	\begin{align}
		\left (-\Delta +e\right )^{-1}: L^{2}\left (Q\right )&\to  H^{1}\left (Q\right )\nn\\
		f&\mapsto u\nn
	\end{align} 
	defined by 
	\begin{equation}
		\left \{
		\begin{array}{r c l}
			-\Delta u + eu=&f \;\;\text{on} \;\; Q \nn\\
			\vec{\nu}   \cdot\nabla u=&0 \;\;\text{on} \;\; \partial Q.\nn
		\end{array}
		\right .  
	\end{equation}
	preserves positivity, i.e, $	\left (-\Delta +e\right )^{-1}f\geq 0$ if $f \geq 0$. Hence the Green function $G^{0,e}$ with zero magnetic field is non-negative.
\end{lemma}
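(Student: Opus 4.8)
The plan is to establish the statement in two stages: first that the resolvent $(-\Delta+e)^{-1}$ is positivity preserving on $L^2(Q)$, via a weak maximum principle; and then to deduce non-negativity of its integral kernel $G^{0,e}$ by pairing against non-negative test functions.

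For the first stage, fix $f\in L^2(Q)$ real with $f\geq 0$ and let $u=(-\Delta+e)^{-1}f\in H^1(Q)$ be the Lax--Milgram solution, characterized weakly by
\begin{equation}
	\int_Q \nabla u\cdot\nabla v\,\d\bx + e\int_Q u\,v\,\d\bx = \int_Q f\,v\,\d\bx\qquad\text{for all }v\in H^1(Q).\nn
\end{equation}
I would stress that this variational identity already encodes the Neumann boundary condition and, crucially, involves no boundary term. I would then insert the admissible test function $v=u_-:=\max(-u,0)\in H^1(Q)$, which is non-negative. By Stampacchia's chain rule, $\nabla u\cdot\nabla u_-=-|\nabla u_-|^2$ and $u\,u_-=-u_-^2$ almost everywhere, so the identity becomes
\begin{equation}
	-\int_Q|\nabla u_-|^2\,\d\bx - e\int_Q u_-^2\,\d\bx = \int_Q f\,u_-\,\d\bx\geq 0.\nn
\end{equation}
Since $e>0$, the left-hand side is non-positive and the displayed inequality forces $\int_Q u_-^2=0$, hence $u_-\equiv 0$, i.e.\ $u\geq 0$. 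This gives $(-\Delta+e)^{-1}f\geq 0$ whenever $f\geq 0$.

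For the second stage, I would take any $g\in C^\infty(Q)$ with $g\geq 0$. The associated $u_g=(-\Delta+e)^{-1}g$ is non-negative by the first stage, and by elliptic regularity for the Neumann problem on the convex square, together with the two-dimensional embedding $H^2\hookrightarrow C^0$, it is continuous. Evaluating at $\by$ and recalling that $G^{0,e}_{\by}$ is the integral kernel of $(-\Delta+e)^{-1}$, one gets $\int_Q G^{0,e}_{\by}(\bx)\,g(\bx)\,\d\bx=u_g(\by)\geq 0$. Since this holds for every non-negative $g\in C^\infty(Q)$, we conclude $G^{0,e}_{\by}(\bx)\geq 0$ for almost every $\bx$, for each $\by$. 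Equivalently, one may invoke the general principle that a self-adjoint positivity-preserving bounded operator on $L^2(Q)$ admitting an integral kernel has non-negative kernel.

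I do not expect any real obstacle here: this is standard elliptic PDE. The only points meriting a line of justification are that the truncation $u_-$ belongs to $H^1(Q)$ and obeys the stated chain rule --- which is exactly Stampacchia's lemma and needs only $u\in H^1(Q)$ --- and the logarithmic singularity of $G^{0,e}_{\by}$ in two dimensions, which is harmless since $G^{0,e}_{\by}$ is only ever tested against smooth functions while the regular object $u_g$ is the one we evaluate pointwise. As alternatives one could derive the lemma from the Beurling--Deny criterion applied to the Neumann Dirichlet form, or from positivity of the reflected heat semigroup through $(-\Delta+e)^{-1}=\int_0^\infty e^{-et}e^{t\Delta}\,\d t$.
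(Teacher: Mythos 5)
Your proof is correct, but it takes a slightly different route from the paper's. The paper exploits the variational characterization coming from Lax--Milgram: $u$ is the \emph{unique} minimizer over $H^{1}(Q)$ of $\mathcal{E}[v]=\tfrac12\int_Q\left(|\nabla v|^{2}+e|v|^{2}\right)-\int_Q f v$, and since $f\geq 0$ and $|\nabla|u||=|\nabla u|$ a.e.\ one has $\mathcal{E}[u]\geq\mathcal{E}[|u|]$, whence $u=|u|\geq 0$ by uniqueness --- a two-line argument. You instead run the weak maximum principle directly: insert the Stampacchia truncation $v=u_-\in H^{1}(Q)$ into the weak (Neumann) formulation and use $e>0$ to force $u_-\equiv 0$. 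The two arguments rest on the same underlying fact (the lattice operations $u\mapsto |u|$, $u\mapsto u_-$ act boundedly on $H^{1}$ with the expected chain rules), so neither is more elementary than the other; yours avoids invoking the minimization characterization, while the paper's avoids the test-function computation. A genuine plus of your write-up is that you spell out the passage from positivity preservation of the operator to non-negativity of the kernel $G^{0,e}$ (pairing against non-negative smooth $g$, with the symmetry of the kernel or self-adjointness handling the choice of variable, and elliptic regularity if one wants pointwise evaluation), a step the paper's proof leaves implicit in the word ``Hence''. No gaps.
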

\begin{proof}
	The function $u$ being the solution of $	-\Delta u + eu=f$ we know by the Lax-Milgram theorem that $u$ is the unique minimizer on $H^{1}\left (Q\right )$ of the energy\begin{equation}
		\mathcal{E}\left [u\right ]=\frac{1}{2}\left (\int \left \b \nabla u\right \b^{2}+e\b u\b^{2}\right )-\int fu\nn
	\end{equation}
	but since $f\geq 0$ we clearly have $\mathcal{E}\left [u\right ]\geq \mathcal{E}\left [\b u\b\right ]$ and thus $u=\b u\b\geq 0$ by uniqueness of the solution.
\end{proof}

We have all the ingredients to compare the Green functions with and without magnetic field:

\begin{lemma}[\textbf{Diamagnetic inequality for Green functions}]\mbox{}\label{lem:ineG}\\
	Let $ G^{\bA , e}_{\bx}$ and $ G^{ e}_{\bx}=G^{0 , e}_{\bx}$ be defined as in \eqref{def_green}. For almost every $\bz \in Q$
	\begin{equation}
		\left \b G^{\bA , e}_{\bx}\left (\bz\right )\right \b \leqslant G^{e}_{\bx}\left (\bz\right )\nn
	\end{equation}
\end{lemma}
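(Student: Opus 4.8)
The plan is to compare the two Green functions using Kato's inequality (Lemma~\ref{kato_ine_neum}) together with the positivity-preserving property of the field-free resolvent (Lemma~\ref{lem:posGe}), exactly as in~\cite[Section~4.4]{LieSei-09} but taking care of the Neumann boundary conditions. First I would fix $\bx \in Q$ and set $u = G^{\bA,e}_{\bx}$, which solves $(H_{\bA} + e) u = \delta_{\bx}$ weakly with the magnetic Neumann condition $(-\im\nabla\cdot\vec{\nu})u = 0$ on $\partial Q_0$. One should first check the regularity needed to apply Kato's inequality: away from $\bx$ the right-hand side $\delta_{\bx}$ vanishes so $H_{\bA} u \in L^1_{\rm loc}$, and near $\bx$ the singularity is the usual logarithmic one in two dimensions, so $u \in H^1(Q)$ and $H_{\bA} u \in L^1_{\rm loc}(Q)$. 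With this in hand, Lemma~\ref{kato_ine_neum} gives, weakly against nonnegative test functions $\phi$ with $\vec{\nu}\cdot\nabla\phi \equiv 0$,
\begin{equation}
-\Delta |u| \leq \mathrm{Re}\left[\mathrm{sgn}(u)\, H_{\bA} u\right] = \mathrm{Re}\left[\mathrm{sgn}(u)(\delta_{\bx} - e u)\right] \leq \delta_{\bx} - e|u|,\nn
\end{equation}
where I used that $H_{\bA}u = \delta_{\bx} - eu$ and $\mathrm{Re}[\mathrm{sgn}(u)\, u] = |u|$, while $\mathrm{Re}[\mathrm{sgn}(u)\delta_{\bx}] \leq |\mathrm{sgn}(u)|\,\delta_{\bx} \leq \delta_{\bx}$ as a measure.

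Rearranging, $(-\Delta + e)|u| \leq \delta_{\bx}$ in the weak Neumann sense; equivalently $(-\Delta + e)(G^e_{\bx} - |u|) \geq 0$. Since $G^e_{\bx} - |u|$ is the image under $(-\Delta+e)^{-1}$ (the field-free Neumann resolvent) of a nonnegative distribution, Lemma~\ref{lem:posGe} — more precisely its proof, which shows $(-\Delta+e)^{-1}$ preserves positivity — gives $G^e_{\bx} - |u| \geq 0$ almost everywhere, that is $|G^{\bA,e}_{\bx}(\bz)| \leq G^e_{\bx}(\bz)$ for a.e.\ $\bz \in Q$, which is the claim. The boundary conditions match throughout: the magnetic Neumann condition on $u$ ensures $\vec{\nu}\cdot\nabla|u| \equiv 0$ on $\partial Q_0$ (this is exactly the point flagged in the proof of Lemma~\ref{kato_ine_neum}, using the regularized modulus $u_\epsilon = \sqrt{|u|^2+\epsilon^2}$), so no boundary terms are produced in the integrations by parts, and the comparison takes place entirely within the Neumann functional framework.

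The main obstacle is the low regularity near the source point $\bx$: the Green function is only logarithmically bounded and its gradient is not in $L^2$ near $\bx$, so Kato's inequality and the integrations by parts must be justified with care. I would handle this by first replacing $\delta_{\bx}$ by a mollified source $\delta_{\bx}^\eta \in C^\infty(Q)$ (a nonnegative bump of unit mass), for which the solutions $G^{\bA,e,\eta}_{\bx}$ and $G^{e,\eta}_{\bx}$ are smooth and Kato's inequality applies cleanly; one then obtains $|G^{\bA,e,\eta}_{\bx}| \leq G^{e,\eta}_{\bx}$ pointwise, and passes to the limit $\eta \to 0$ using that $G^{\bA,e,\eta}_{\bx} \to G^{\bA,e}_{\bx}$ and $G^{e,\eta}_{\bx}\to G^{e}_{\bx}$ in $L^1(Q)$ (or a.e.\ along a subsequence), which follows from boundedness of the resolvents on $L^2$ and a density argument. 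A secondary point to be careful about is that $\mathrm{sgn}(u)\delta_{\bx}$ should be interpreted correctly as a measure bounded by $\delta_{\bx}$ — this too is cleanly handled at the mollified level, where $\mathrm{sgn}(u)$ multiplies an honest $L^1$ function.
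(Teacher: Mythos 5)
Your proposal is correct and follows essentially the same route as the paper: Kato's inequality with Neumann boundary conditions (Lemma~\ref{kato_ine_neum}) combined with positivity of the field-free Neumann Green function (Lemma~\ref{lem:posGe}), with the delta sources regularized to sidestep the non-$H^1$ singularity at the source point. The paper merely organizes the regularization differently, applying Kato to $u_h=(H_{\bA}+e)^{-1}h$ for smooth $h\geq 0$ and testing the resulting inequality against $u_f=(-\Delta+e)^{-1}f\geq 0$ before letting $h,f$ tend to deltas, which is the duality form of your mollified-source argument.
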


\begin{proof}
	We take two positive functions $f,h\in C^{\infty}\left (Q\right )$ and define
	\begin{equation}
		u_{h}\left (\bx\right )=\int_{Q}G^{\bA , e}_{\by}\left (\bx\right )h\left (\by\right )\d \by\;\;\text{and}\;\;u_{f}\left (\bx\right )=\int_{Q}G^{e}_{\by}\left (\bx\right )f\left (\by\right )\d \by
		\label{uhuf}.\nn
	\end{equation}
	We know by Lemma \ref{lem:posGe} that $u_{f}$ is positive. We apply Lemma \ref{kato_ine_neum} to $u_{h}$ to obtain 
	\begin{equation}
		\left (-\Delta +e\right ) \left \b u_{h}\right \b\left (\bx\right )\leqslant h\left (\bx\right )
		\label{inesur_u}.\nn
	\end{equation}
	We now multiply by $u_{f}$ and integrate to get that
	\begin{align}
		\int_{Q^{2}}G^{e}_{\by}\left (\bx\right )f\left (\by\right )\left (-\Delta_{\bx} +e\right )\left \b u_{h}\left (\bx\right )\right \b \d \bx\d \by\leqslant \int_{Q^{2}}G^{e}_{\by}\left (\bx\right )f\left (\by\right )h\left (\bx\right )\d \bx\d \by\nn\\
		\int_{Q}f\left (\by\right )\left \b u_{h}\left (\by\right )\right \b \d \by\leqslant \int_{Q^{2}}G^{e}_{\by}\left (\bx\right )f\left (\by\right )h\left (\bx\right )\d \bx\d \by\nn
	\end{align}
	where we used that $\left (-\Delta_{\bx} +e\right )G^{e}_{\by}\left (\bx\right )=\delta_{\by}\left (\bx\right )$. We then obtain (recalling that $f\geq 0$),
	\begin{align}
		\left \b\int_{Q^{2}}G^{\bA , e}_{\bx}\left (\by\right )f\left (\by\right )h\left (\bx\right )\d \bx\d \by\right \b &\leqslant \int_{Q^{2}}G^{e}_{\by}\left (\bx\right )f\left (\by\right )h\left (\bx\right )\d \bx\d \by\nn\\
		\left \b G^{\bA , e}_{\bx}\left (\bz\right )\right \b &\leqslant G^{e}_{\bx}\left (\bz\right ) \nn
	\end{align}
	by taking $h\to\delta_{\bx}$ and $f\to\delta_{\by}$.
\end{proof}

Now we may conclude the 

\begin{proof}[Proof of Lemma~\ref{lem:N}]
	We define for two given numbers $e>0$ and $\Lambda>0$, the Birman-Schwinger operator
	\begin{equation}
		K_{e}:=\sqrt{\Lambda +e}\left (H^{\bA}+e\right )^{-1}\sqrt{\Lambda+e}\nn
	\end{equation}
	This way if $H^{\bA}\psi =\lambda\psi$ we have
	$$K_{e} \psi=\frac{\Lambda +e}{\lambda +e} \psi. $$
 Hence eigenvalues of $H^{\bA}$ with $\lambda\leq \Lambda$ correspond to eigenvalues of $K_{e}$ larger than one. So, if we denote
	\begin{equation}
		B_{e}:=\text{the number of eigenvalues of}\;K_{e}\;\text{larger than or equal to 1},\nn
	\end{equation}
	the Birman-Schwinger principle (see \cite[Equation 4.3.5]{LieSei-09}) states that
	\begin{equation}
		N\left (\Lambda , \bA\right )=B_{e}.
	\end{equation}
Since we work with a bounded magnetic field and magnetic vector potential, $\left (H^{\bA}\right )^{-1}:L^{2}\left (Q_{0}\right )\to H^{1}\left (Q_{0}\right )$ with $H^{1}\left (Q_{0}\right )$ compactly embeded in $L^{2}\left (Q_{0}\right )$. $K_{e}$ is consequently compact, but also Hilbert-Schmidt because $H^{\bA}$ is a bounded perturbation of the 2D Laplacian.
	We bound $B_{e}$ in the following way, for any $m\geqslant 1$
	\begin{align}
		B_{e}&\leqslant \Tr\left [K_{e}^{m}\right ]\nn\\
		&=\left (\Lambda+e\right )^{m}\Tr \left [\left (H^{\bA}+e\right )^{-m}\right ]\nn\\
		&=\left (\Lambda+e\right )^{m}\int_{Q^{2m}}G_{e}^{\bA}\left (\bx, \mathbf{y}_{1}\right )G_{e}^{\bA}\left (\mathbf{y}_{1}, \mathbf{y}_{2}\right )...G_{e}^{\bA}\left (\mathbf{y}_{m-1}, \bx\right )\d \bx \d \mathbf{y}_{1} \d \mathbf{y}_{2}... \d \mathbf{y}_{m-1}\nn
	\end{align}
	we take the absolute value and use Lemma \eqref{lem:ineG} to get
	\begin{align}
		B_{e}&\leqslant \left (\Lambda+e\right )^{m}\int_{Q^{2m}}G_{e}^{0}\left (\bx, \mathbf{y}_{1}\right )G_{e}^{0}\left (\mathbf{y}_{1}, \mathbf{y}_{2}\right )...G_{e}^{0}\left (\mathbf{y}_{m-1}, \bx\right )\d \bx \d \mathbf{y}_{1} \d \mathbf{y}_{2}... \d \mathbf{y}_{m-1}\nonumber\\
		&=\left (\Lambda+e\right )^{m}\Tr\left (-\Delta_{\bx}+e\right )^{-m}\nonumber\\
		&=\left (\Lambda+e\right )^{m}\sum_{j=1}^{\infty}\left (\lambda_{j}\left (0\right )+e\right )^{-m}\nonumber\\
		&=:f\left (\Lambda\right )
	\end{align}
	where $f$ is finite for $m\geq 2$ and independent of $\bA$.
\end{proof}
\bigskip\bigskip\bigskip\bigskip\bigskip\bigskip\bigskip\bigskip\bigskip\bigskip\bigskip\bigskip\bigskip\bigskip\bigskip\bigskip\bigskip\bigskip\bigskip\bigskip\bigskip\bigskip\bigskip\bigskip\bigskip\bigskip\bigskip\bigskip\bigskip\bigskip\bigskip\bigskip\bigskip\bigskip\bigskip\bigskip\bigskip\bigskip\bigskip\bigskip\bigskip\bigskip\bigskip\bigskip\bigskip\bigskip\bigskip
On behalf of all authors, Théotime Girardot states that there is no conflict of interest.

\newpage

%

\begin{thebibliography}{10}

\bibitem{AdaTet-98}
{\sc Adami, R., and Teta, A.}
\newblock On the {Aharonov-Bohm} effect.
\newblock {\em Lett. Math.Phys 43\/} (1998), 43--53.

\bibitem{AroSchWil-84}
{\sc Arovas, S., Schrieffer, J., and Wilczek, F.}
\newblock Fractional statistics and the quantum {H}all effect.
\newblock {\em Phys. Rev. Lett. 53}, 7 (1984), 722--723.

\bibitem{AvrHerSim-78}
{\sc Avron, J., Herbst, I., and Simon, B.}
\newblock Schr\"odinger operators with magnetic fields. i. general interactions.
\newblock {\em Duke Math. J. 45}, 4 (1978), 847--883.

\bibitem{Besicovitch-45}
{\sc Besicovitch, A.~S.}
\newblock A general form of the covering principle and relative differentiation
  of additive functions i.
\newblock {\em Mathematical Proceedings of the Cambridge Philosophical Society
  41\/} (1945), 103--110.

\bibitem{Besicovitch-46}
{\sc Besicovitch, A.~S.}
\newblock A general form of the covering principle and relative differentiation
  of additive functions ii.
\newblock {\em Mathematical Proceedings of the Cambridge Philosophical Society
  42\/} (1946), 205--235.

\bibitem{ChenWilWitHal-89}
{\sc Chen, Y.~H., Wilczek, F., Witten, E., and Halperin, B.~I.}
\newblock On anyon superconductivity.
\newblock {\em Int. J. Mod. Phys. B 3\/} (1989), 1001--1067.

\bibitem{CooSim-15}
{\sc Cooper, N.~R., and Simon, S.~H.}
\newblock {Signatures of Fractional Exclusion Statistics in the Spectroscopy of
  Quantum Hall Droplets}.
\newblock {\em Phys. Rev. Lett. 114\/} (2015), 106802.

\bibitem{CorFer-21}
{\sc Correggi, M., and Fermi, D.}
\newblock Magnetic perturbations of anyonic and {Aharonov-Bohm Schr\"odinger}
  operators.
\newblock {\em J. Math. Phys. 62\/} (2021), 032101.

\bibitem{CorOdd-18}
{\sc Correggi, M., and Oddis, L.}
\newblock Hamiltonians for two-anyon systems.
\newblock {\em Rend. Mat. Appl. 39\/} (2018), 277--292.

\bibitem{FouHel-book}
{\sc Fournais, S., and Helffer, B.}
\newblock {\em Spectral Methods in Surface Superconductivity}.
\newblock Progress in Nonlinear Differential Equations and their Applications,
  77. Birkh\"auser Boston, Inc., Boston, MA, 2010.

\bibitem{Frank-09}
{\sc Frank, R.}
\newblock {Remarks on eigenvalue estimates and semigroup domination}.
\newblock In {\em Spectral and scattering theory for quantum magnetic systems},
  e.~a. P.~Briet, Ed., vol.~500 of {\em Contemp. Math.} Amer. Math. Soc., 2009,
  pp.~63--86.

\bibitem{FraSei-12}
{\sc Frank, R.~L., and Seiringer, R.}
\newblock {Lieb-Thirring} inequality for a model of particles with point
  interactions.
\newblock {\em J. Math. Phys. 53\/} (2012), 095201.

\bibitem{Girardot-19}
{\sc Girardot, T.}
\newblock Average field approximation for almost bosonic anyons in a magnetic
  field.
\newblock {\em Journal of Mathematical Physics 61\/} (2020), 071901.

\bibitem{Girardot-21}
{\sc Girardot, T.}
\newblock {\em Approximation de champ moyen pour un gaz d'anyons}.
\newblock PhD thesis, 2021.
\newblock Phd thesis, Université Grenoble Alpes.

\bibitem{GirRou-21}
{\sc Girardot, T., and Rougerie, N.}
\newblock {Semiclassical limit for almost fermionic anyons}.
\newblock {\em Communications in Mathematical Physics 387\/} (2021), 427--480.

\bibitem{GolMedSha-81}
{\sc Goldin, G., Menikoff, R., and Sharp, D.}
\newblock {Representations of a local current algebra in nonsimply connected
  space and the Aharonov-Bohm effect}.
\newblock {\em J. Math. Phys. 22\/} (1981), 1664.

\bibitem{GolSha-96}
{\sc Goldin, G.~A., and Sharp, D.~H.}
\newblock Diffeomorphism groups, anyon fields, and \textit{q} commutators.
\newblock {\em Phys. Rev. Lett. 76\/} (Feb 1996), 1183--1187.

\bibitem{GreiWenWil-91}
{\sc Greiter, M., Wen, X.-G., and Wilczek, F.}
\newblock Paired hall state at half filling.
\newblock {\em Phys. Rev. Lett. 66\/} (Jun 1991), 3205--3208.

\bibitem{Halperin-84}
{\sc Halperin, B.~I.}
\newblock Statistics of quasiparticles and the hierarchy of fractional
  quantized {H}all states.
\newblock {\em Phys. Rev. Lett. 52\/} (Apr 1984), 1583--1586.

\bibitem{HunSim-04}
{\sc Hundertmark, D., and Simon, B.}
\newblock diamagnetic inequality for semigroup differences.
\newblock {\em J. Reine Angew. Math. 571\/} (2004), 107--130.

\bibitem{LamLunRou-22}
{\sc Lambert, G., Lundholm, D., and Rougerie, N.}
\newblock {On quantum statistics transmutation via magnetic flux attachment}.
\newblock arXiv:2201.03518, 2020.

\bibitem{LarLun-16}
{\sc Larson, S., and Lundholm, D.}
\newblock {Exclusion bounds for extended anyons}.
\newblock {\em Archive for Rational Mechanics and Analysis 227}, 1 (2018),
  309--365.

\bibitem{LarLunNam-21}
{\sc Larson, S., Lundholm, D., and Nam, P.~T.}
\newblock {Lieb-Thirring inequalities for wave functions vanishing on the
  diagonal set}.
\newblock {\em Annales Henri Lebesgue 4\/} (2021), 251--282.

\bibitem{LeiMyr-77}
{\sc {Leinaas}, J.~M., and {Myrheim}, J.}
\newblock {On the theory of identical particles}.
\newblock {\em Nuovo Cimento B Serie 37\/} (Jan. 1977), 1--23.

\bibitem{LieLos-01}
{\sc Lieb, E.~H., and Loss, M.}
\newblock {\em Analysis}, 2nd~ed., vol.~14 of {\em Graduate Studies in
  Mathematics}.
\newblock American Mathematical Society, Providence, RI, 2001.

\bibitem{LieSei-09}
{\sc Lieb, E.~H., and Seiringer, R.}
\newblock {\em The {S}tability of {M}atter in {Q}uantum {M}echanics}.
\newblock Cambridge Univ. Press, 2010.

\bibitem{LieSeiSolYng-05}
{\sc Lieb, E.~H., Seiringer, R., Solovej, J.~P., and Yngvason, J.}
\newblock {\em The mathematics of the {B}ose gas and its condensation}.
\newblock Oberwolfach {S}eminars. Birkh{\"a}user, 2005.

\bibitem{LunNamPor-16}
{\sc Lundholm, D., Nam, P.~T., and Portmann, F.}
\newblock Fractional {H}ardy-{L}ieb-{T}hirring and related inequalities for
  interacting systems.
\newblock {\em Arch. Ration. Mech. Anal. 219}, 3 (2016), 1343--1382.

\bibitem{LunPorSol-15}
{\sc Lundholm, D., Portmann, F., and Solovej, J.~P.}
\newblock {Lieb-Thirring Bounds for Interacting Bose Gases}.
\newblock {\em Communications in Mathematical Physics 335\/} (2015),
  1019--1056.

\bibitem{LunQva-20}
{\sc Lundholm, D., and Qvarfordt, V.}
\newblock {Exchange and exclusion in the non-abelian anyon gas}.
\newblock arXiv:2009.12709, 2020.

\bibitem{LunRou-15}
{\sc Lundholm, D., and Rougerie, N.}
\newblock {The average field approximation for almost bosonic extended anyons}.
\newblock {\em J. Stat. Phys. 161}, 5 (2015), 1236--1267.

\bibitem{LunRou-16}
{\sc Lundholm, D., and Rougerie, N.}
\newblock {Emergence of fractional statistics for tracer particles in a
  Laughlin liquid}.
\newblock {\em Phys. Rev. Lett. 116\/} (2016), 170401.

\bibitem{LunSei-18}
{\sc Lundholm, D., and Seiringer, R.}
\newblock {Fermionic behavior of ideal anyons}.
\newblock {\em Lett. Math. Phys. 108\/} (2018), 2523--2541.

\bibitem{LunSol-13a}
{\sc Lundholm, D., and Solovej, J.~P.}
\newblock {Hardy and Lieb-Thirring inequalities for anyons}.
\newblock {\em Comm. Math. Phys. 322\/} (2013), 883--908.

\bibitem{LunSol-13b}
{\sc Lundholm, D., and Solovej, J.~P.}
\newblock Local exclusion principle for identical particles obeying
  intermediate and fractional statistics.
\newblock {\em Phys. Rev. A 88\/} (2013), 062106.

\bibitem{LunSol-14}
{\sc Lundholm, D., and Solovej, J.~P.}
\newblock {Local exclusion and Lieb-Thirring inequalities for intermediate and
  fractional statistics}.
\newblock {\em Ann. Henri Poincar\'e 15\/} (2014), 1061--1107.

\bibitem{Myrheim-99}
{\sc Myrheim, J.}
\newblock Anyons.
\newblock In {\em Topological aspects of low dimensional systems}, A.~Comtet,
  T.~Jolic{\oe}ur, S.~Ouvry, and F.~David, Eds., vol.~69 of {\em Les Houches -
  Ecole d'Ete de Physique Theorique}. 1999, pp.~265--413.

\bibitem{Nam-20}
{\sc Nam, P.~T.}
\newblock Direct methods to {Lieb-Thirring} kinetic inequalities.
\newblock arXiv:2012.12045, 2019.

\bibitem{Nam-22}
{\sc Nam, P.~T.}
\newblock A proof of the Lieb-Thirring inequality via the Besicovitch covering
  lemma, 2022.

\bibitem{NayCheSimSteSter-08}
{\sc Nayak, C., Simon, S.~H., Stern, A., Freedman, M., and Das~Sarma, S.}
\newblock Non-abelian anyons and topological quantum computation.
\newblock {\em Rev. Mod. Phys. 80\/} (Sep 2008), 1083--1159.

\bibitem{ReeSim2}
{\sc Reed, M., and Simon, B.}
\newblock {\em Methods of {M}odern {M}athematical {P}hysics. {II}. {F}ourier
  analysis, self-adjointness}.
\newblock Academic Press, New York, 1975.

\bibitem{Rougerie-EMS}
{\sc Rougerie, N.}
\newblock {Scaling limits of bosonic ground states, from many-body to nonlinear
  Schr\"odinger}.
\newblock {\em EMS Surveys in Mathematical Sciences 7}, 2 (2020), 253--408.

\bibitem{Trugenberger-92b}
{\sc Trugenberger, C.}
\newblock {Ground state and collective excitations of extended anyons}.
\newblock {\em Phys. Lett. B 288\/} (1992), 121--128.

\bibitem{Trugenberger-92}
{\sc Trugenberger, C.}
\newblock {The anyon fluid in the Bogoliubov approximation}.
\newblock {\em Phys. Rev. D 45\/} (1992), 3807--3817.

\bibitem{Wilczek-82a}
{\sc Wilczek, F.}
\newblock {Magnetic flux, angular momentum, and statistics}.
\newblock {\em Phys. Rev. Lett. 48\/} (1982), 1144.

\bibitem{ZhaSreJai-15}
{\sc Zhang, Y., Sreejith, G.~J., and Jain, J.~K.}
\newblock {Creating and manipulating non-Abelian anyons in cold atom systems
  using auxiliary bosons}.
\newblock {\em Phys. Rev. B. 92\/} (2015), 075116.

\end{thebibliography}
\end{document}